\title{Improved Epstein--Glaser renormalization in $x$-space \\
versus differential renormalization}
\author{Jos\'e M. Gracia-Bond\'ia,$^{\!a,b,c}$
Heidy Guti\'errez$^c$
and Joseph C.~V\'arilly$^d$
\\[12pt]
{\footnotesize $^a$ Department of Theoretical Physics,
Universidad de Zaragoza, 50009 Zaragoza, Spain}\\[3pt]
{\footnotesize $^b$ BIFI Research Center,
Universidad de Zaragoza, 50018 Zaragoza, Spain}\\[3pt]
{\footnotesize $^c$ Department of Physics,
Universidad de Costa Rica, San Jos\'e 11501, Costa Rica}\\[3pt]
{\footnotesize $^d$ Department of Mathematics,
Universidad de Costa Rica, San Jos\'e 11501, Costa Rica}
}
\date{20 February 2017}
\newcommand{\al}{\alpha}           
\newcommand{\bt}{\beta}            
\newcommand{\Dl}{\Delta}           
\newcommand{\dl}{\delta}           
\newcommand{\Ga}{\Gamma}           
\newcommand{\ga}{\gamma}           
\newcommand{\La}{\Lambda}          
\newcommand{\la}{\lambda}          
\newcommand{\Om}{\Omega}           
\newcommand{\om}{\omega}           
\newcommand{\Sg}{\Sigma}           
\newcommand{\sg}{\sigma}           
\renewcommand{\th}{\theta}         
\newcommand{\vf}{\varphi}          
\newcommand{\sB}{\mathcal{B}}      
\newcommand{\sD}{\mathcal{D}}      
\newcommand{\sF}{\mathcal{F}}      
\newcommand{\sG}{\mathcal{G}}      
\newcommand{\sL}{\mathcal{L}}      
\newcommand{\sO}{\mathcal{O}}      
\newcommand{\sS}{\mathcal{S}}      
\newcommand{\sV}{\mathcal{V}}      
\newcommand{\bL}{\mathbb{L}}       
\newcommand{\bR}{\mathbb{R}}       
\newcommand{\bS}{\mathbb{S}}       
\newcommand{\FJL}{\mathrm{FJL}}    
\newcommand{\free}{\mathrm{free}}  
\newcommand{\id}{\mathrm{id}}      
\DeclareMathOperator*{\Res}{Res}   
\DeclareMathOperator*{\Vol}{Vol}   
\newcommand{\bull}{\mathord{\scriptstyle\bullet}} 
\newcommand{\del}{\partial}        
\newcommand{\downto}{\downarrow}   
\newcommand{\g}{\bar g}            
\newcommand{\Gbar}{{\mkern2mu \overline{\mkern-2mu G}}} 
\newcommand{\less}{\setminus}      
\newcommand{\ox}{\otimes}          
\newcommand{\Rbar}{{\mkern2mu \overline{\mkern-2mu R}}} 
\newcommand{\upto}{\uparrow}       
\newcommand{\wh}{\widehat}         
\newcommand{\x}{\times}            
\renewcommand{\.}{\cdot}           
\def\duo<#1,#2>{\langle#1,#2\rangle} 
\newcommand{\half}{{\mathchoice{\thalf}{\thalf}{\shalf}{\shalf}}}
\newcommand{\shalf}{{\scriptstyle\frac{1}{2}}} 
\newcommand{\thalf}{\tfrac{1}{2}}  
\newcommand{\lddl}{l\frac{\partial}{\partial l}\,} 
\newcommand{\renormto}{\;\mathop{\longmapsto}\limits^{R}\;}
\newcommand{\pd}[2]{\frac{\partial#1}{\partial#2}} 
\newcommand{\set}[1]{\{\,#1\,\}}   
\newcommand{\word}[1]{\quad\mbox{#1}\quad} 
\theoremstyle{plain}
\newtheorem{thm}{Theorem}           
\newtheorem{prop}[thm]{Proposition} 
\newtheorem{lema}[thm]{Lemma}       
\theoremstyle{definition}
\newtheorem*{defn}{Definition}      
\numberwithin{equation}{section}
\renewcommand{\section}{\@startsection{section}{1}{\z@}%
                        {-3.5ex \@plus -1ex \@minus -.2ex}%
                        {2.3ex \@plus.2ex}%
                        {\normalfont\large\bfseries}}
\renewcommand{\subsection}{\@startsection{subsection}{2}{\z@}%
                        {-3.25ex \@plus -1ex \@minus -.2ex}%
                        {1.5ex \@plus .2ex}%
                        {\normalfont\normalsize\bfseries}}
\renewcommand{\@dotsep}{200} 
\newcommand{\cross}{\raisebox{-1mm}{%
\begin{tikzpicture}[scale=0.8]
\coordinate (A) at (0,0) ;
\draw (A) ++(-0.2,0.2) -- (A) -- ++(0.2,-0.2) ;
\draw (A) ++(-0.2,-0.2) -- (A) -- ++(0.2,0.2) ;
\draw (A) node{$\bull$} ;
\end{tikzpicture}}} 
\newcommand{\cruz}{{\!{\raisebox{-1mm}{%
\begin{tikzpicture}[scale=0.4]
\coordinate (A) at (0,0) ;
\draw (A) ++(-0.2,0.2) -- (A) -- ++(0.2,-0.2) ;
\draw (A) ++(-0.2,-0.2) -- (A) -- ++(0.2,0.2) ;
\draw (A) node{$.$} ;
\end{tikzpicture}}}\!}} 
\newcommand{\fish}{\raisebox{-1mm}{%
\begin{tikzpicture}[scale=0.8]
\coordinate (A) at (0,0) ; \coordinate (B) at (1,0) ; 
\draw (A) ++(-0.2,0.2) -- (A) -- ++(-0.2,-0.2) ;
\draw (A) parabola[bend pos=0.5] bend +(0,0.25) (B) ;
\draw (A) parabola[bend pos=0.5] bend +(0,-0.25) (B) ;
\draw (B) ++(0.2,0.2) -- (B) -- ++(0.2,-0.2) ;
\foreach \pt in {A,B} \draw (\pt) node{$\bull$} ;
\end{tikzpicture}}} 
\newcommand{\markedfish}{\raisebox{-2.6mm}{%
\begin{tikzpicture}[scale=1.6]
\coordinate (A) at (0,0) ; \coordinate (B) at (1,0) ; 
\draw (A) parabola[bend pos=0.5] bend +(0,0.25) (B) ;
\draw (A) parabola[bend pos=0.5] bend +(0,-0.25) (B) ;
\draw (A) ++(-0.2,0.2) -- (A) -- ++(-0.2,-0.2) ;
\draw (B) ++(0.2,0.2) -- (B) -- ++(0.2,-0.2) ;
\draw (A) node[above=1pt, left=5pt] {$0=x$} ;
\draw (B) node[below=1pt, right=5pt] {$z=y$} ;
\foreach \pt in {A,B} \draw (\pt) node{$\bull$} ;
\end{tikzpicture}}} 
\newcommand{\pez}{{\!{\raisebox{-1mm}{%
\begin{tikzpicture}[scale=0.4]
\coordinate (A) at (0,0) ; \coordinate (B) at (1,0) ; 
\draw (A) ++(-0.2,0.2) -- (A) -- ++(-0.2,-0.2) ;
\draw (A) parabola[bend pos=0.5] bend +(0,0.2) (B) ;
\draw (A) parabola[bend pos=0.5] bend +(0,-0.2) (B) ;
\draw (B) ++(0.2,0.2) -- (B) -- ++(0.2,-0.2) ;
\foreach \pt in {A,B} \draw (\pt) node{$.$} ;
\end{tikzpicture}}}\!}} 
\newcommand{\bikini}{\raisebox{-1mm}{%
\begin{tikzpicture}[scale=0.8]
\coordinate (A) at (0,0) ; \coordinate (B) at (1,0) ; 
\coordinate (C) at (2,0) ;
\draw (A) ++(-0.2,0.2) -- (A) -- ++(-0.2,-0.2) ;
\draw (A) parabola[bend pos=0.5] bend +(0,0.25) (B) ;
\draw (A) parabola[bend pos=0.5] bend +(0,-0.25) (B) ;
\draw (B) parabola[bend pos=0.5] bend +(0,0.25) (C) ;
\draw (B) parabola[bend pos=0.5] bend +(0,-0.25) (C) ;
\draw (C) ++(0.2,0.2) -- (C) -- ++(0.2,-0.2) ;
\foreach \pt in {A,B,C} \draw (\pt) node{$\bull$} ;
\end{tikzpicture}}} 
\newcommand{\sosten}{{\!{\raisebox{-1mm}{%
\begin{tikzpicture}[scale=0.4]
\coordinate (A) at (0,0) ; \coordinate (B) at (1,0) ; 
\coordinate (C) at (2,0) ;
\draw (A) ++(-0.2,0.2) -- (A) -- ++(-0.2,-0.2) ;
\draw (A) parabola[bend pos=0.5] bend +(0,0.2) (B) ;
\draw (A) parabola[bend pos=0.5] bend +(0,-0.2) (B) ;
\draw (B) parabola[bend pos=0.5] bend +(0,0.2) (C) ;
\draw (B) parabola[bend pos=0.5] bend +(0,-0.2) (C) ;
\draw (C) ++(0.2,0.2) -- (C) -- ++(0.2,-0.2) ;
\foreach \pt in {A,B,C} \draw (\pt) node{$.$} ;
\end{tikzpicture}}}\!}} 
\newcommand{\winecup}{\raisebox{-4mm}{%
\begin{tikzpicture}[scale=0.8]
\coordinate (A) at (0,0) ; \coordinate (B) at (1,0) ; 
\coordinate (C) at (0.5,-0.8) ;
\draw ($ (A)!-0.3!(C) $) -- ($ (A)!1.3!(C) $) ;
\draw ($ (B)!-0.3!(C) $) -- ($ (B)!1.3!(C) $) ;
\draw (A) parabola[bend pos=0.5] bend +(0,0.25) (B) ;
\draw (A) parabola[bend pos=0.5] bend +(0,-0.25) (B) ;
\foreach \pt in {A,B,C} \draw (\pt) node{$\bull$} ;
\end{tikzpicture}}} 
\newcommand{\copadevino}{{\!{\raisebox{-2.2mm}{%
\begin{tikzpicture}[scale=0.4]
\coordinate (A) at (0,0) ; \coordinate (B) at (1,0) ; 
\coordinate (C) at (0.5,-0.8) ;
\draw ($ (A)!-0.3!(C) $) -- ($ (A)!1.3!(C) $) ;
\draw ($ (B)!-0.3!(C) $) -- ($ (B)!1.3!(C) $) ;
\draw (A) parabola[bend pos=0.5] bend +(0,0.2) (B) ;
\draw (A) parabola[bend pos=0.5] bend +(0,-0.2) (B) ;
\foreach \pt in {A,B,C} \draw (\pt) node{$.$} ;
\end{tikzpicture}}}\!}} 
\newcommand{\trikini}{\raisebox{-1mm}{%
\begin{tikzpicture}[scale=0.8]
\coordinate (A) at (0,0) ; \coordinate (B) at (1,0) ; 
\coordinate (C) at (2,0) ; \coordinate (D) at (3,0) ; 
\draw (A) ++(-0.2,0.2) -- (A) -- ++(-0.2,-0.2) ;
\draw (A) parabola[bend pos=0.5] bend +(0,0.25) (B) ;
\draw (A) parabola[bend pos=0.5] bend +(0,-0.25) (B) ;
\draw (B) parabola[bend pos=0.5] bend +(0,0.25) (C) ;
\draw (B) parabola[bend pos=0.5] bend +(0,-0.25) (C) ;
\draw (C) parabola[bend pos=0.5] bend +(0,0.25) (D) ;
\draw (C) parabola[bend pos=0.5] bend +(0,-0.25) (D) ;
\draw (D) ++(0.2,0.2) -- (D) -- ++(0.2,-0.2) ;
\foreach \pt in {A,B,C,D} \draw (\pt) node{$\bull$} ;
\end{tikzpicture}}} 
\newcommand{\trenza}{{\!{\raisebox{-1mm}{%
\begin{tikzpicture}[scale=0.4]
\coordinate (A) at (0,0) ; \coordinate (B) at (1,0) ; 
\coordinate (C) at (2,0) ; \coordinate (D) at (3,0) ; 
\draw (A) ++(-0.2,0.2) -- (A) -- ++(-0.2,-0.2) ;
\draw (A) parabola[bend pos=0.5] bend +(0,0.2) (B) ;
\draw (A) parabola[bend pos=0.5] bend +(0,-0.2) (B) ;
\draw (B) parabola[bend pos=0.5] bend +(0,0.2) (C) ;
\draw (B) parabola[bend pos=0.5] bend +(0,-0.2) (C) ;
\draw (C) parabola[bend pos=0.5] bend +(0,0.2) (D) ;
\draw (C) parabola[bend pos=0.5] bend +(0,-0.2) (D) ;
\draw (D) ++(0.2,0.2) -- (D) -- ++(0.2,-0.2) ;
\foreach \pt in {A,B,C,D} \draw (\pt) node{$.$} ;
\end{tikzpicture}}}\!}} 
\newcommand{\stye}{\raisebox{-1.5mm}{%
\begin{tikzpicture}[scale=0.6]
\coordinate (A) at (0,0) ; \coordinate (B) at (2,0) ; 
\coordinate (C) at (0.7,0.35) ; \coordinate (D) at (1.3,0.35) ; 
\coordinate (E) at ($ (C)!0.5!(D) $) ; 
\draw (A) ++(-0.2,0.2) -- (A) -- ++(-0.2,-0.2) ;
\draw (A) parabola[bend pos=0.5] bend +(0,0.4) (B) ;
\draw (A) parabola[bend pos=0.5] bend +(0,-0.4) (B) ;
\draw (B) ++(0.2,0.2) -- (B) -- ++(0.2,-0.2) ;
\draw (E) circle(0.3cm) ;
\foreach \pt in {A,B,C,D} \draw (\pt) node{$\bull$} ;
\end{tikzpicture}}} 
\newcommand{\orzuelo}{{\!{\raisebox{-0.8mm}{%
\begin{tikzpicture}[scale=0.3]
\coordinate (A) at (0,0) ; \coordinate (B) at (2,0) ; 
\coordinate (C) at (0.7,0.35) ; \coordinate (D) at (1.3,0.35) ; 
\coordinate (E) at ($ (C)!0.5!(D) $) ; 
\draw (A) ++(-0.2,0.2) -- (A) -- ++(-0.2,-0.2) ;
\draw (A) parabola[bend pos=0.5] bend +(0,0.4) (B) ;
\draw (A) parabola[bend pos=0.5] bend +(0,-0.4) (B) ;
\draw (B) ++(0.2,0.2) -- (B) -- ++(0.2,-0.2) ;
\draw (E) circle(0.3cm) ;
\foreach \pt in {A,B,C,D} \draw (\pt) node{$.$} ;
\end{tikzpicture}}}\!}} 
\newcommand{\catseye}{\raisebox{-3.5mm}{%
\begin{tikzpicture}[scale=0.6]
\coordinate (A) at (0,0) ; \coordinate (B) at (2,0) ; 
\coordinate (C) at (1,0.4) ; \coordinate (D) at (1,-0.4) ; 
\draw (A) ++(-0.2,0.2) -- (A) -- ++(-0.2,-0.2) ;
\draw (A) parabola[bend pos=0.5] bend +(0,0.4) (B) ;
\draw (A) parabola[bend pos=0.5] bend +(0,-0.4) (B) ;
\begin{scope}[rotate=90]
\draw (C) parabola[bend pos=0.5] bend +(0,0.2) (D) ;
\draw (C) parabola[bend pos=0.5] bend +(0,-0.2) (D) ;
\end{scope}
\draw (B) ++(0.2,0.2) -- (B) -- ++(0.2,-0.2) ;
\foreach \pt in {A,B,C,D} \draw (\pt) node{$\bull$} ;
\end{tikzpicture}}} 
\newcommand{\ojodegato}{{\!{\raisebox{-2mm}{%
\begin{tikzpicture}[scale=0.3]
\coordinate (A) at (0,0) ; \coordinate (B) at (2,0) ; 
\coordinate (C) at (1,0.4) ; \coordinate (D) at (1,-0.4) ; 
\draw (A) ++(-0.2,0.2) -- (A) -- ++(-0.2,-0.2) ;
\draw (A) parabola[bend pos=0.5] bend +(0,0.4) (B) ;
\draw (A) parabola[bend pos=0.5] bend +(0,-0.4) (B) ;
\begin{scope}[rotate=90]
\draw (C) parabola[bend pos=0.5] bend +(0,0.2) (D) ;
\draw (C) parabola[bend pos=0.5] bend +(0,-0.2) (D) ;
\end{scope}
\draw (B) ++(0.2,0.2) -- (B) -- ++(0.2,-0.2) ;
\foreach \pt in {A,B,C,D} \draw (\pt) node{$.$} ;
\end{tikzpicture}}}\!}} 
\newcommand{\duncecap}{\raisebox{-3mm}{%
\begin{tikzpicture}[scale=0.6]
\coordinate (A) at (0,0) ; \coordinate (B) at (1,0) ; 
\coordinate (C) at (2,0) ; \coordinate (D) at (1,1.2) ; 
\draw ($ (A)!-0.25!(D) $) -- ($ (A)!1.25!(D) $) ;
\draw ($ (C)!-0.25!(D) $) -- ($ (C)!1.25!(D) $) ;
\draw (A) parabola[bend pos=0.5] bend +(0,0.25) (B) ;
\draw (A) parabola[bend pos=0.5] bend +(0,-0.25) (B) ;
\draw (B) parabola[bend pos=0.5] bend +(0,0.25) (C) ;
\draw (B) parabola[bend pos=0.5] bend +(0,-0.25) (C) ;
\foreach \pt in {A,B,C,D} \draw (\pt) node{$\bull$} ;
\end{tikzpicture}}} 
\newcommand{\casquete}{{\!\!{\raisebox{-1mm}{%
\begin{tikzpicture}[scale=0.3]
\coordinate (A) at (0,0) ; \coordinate (B) at (1,0) ; 
\coordinate (C) at (2,0) ; \coordinate (D) at (1,1.2) ; 
\draw ($ (A)!-0.25!(D) $) -- ($ (A)!1.25!(D) $) ;
\draw ($ (C)!-0.25!(D) $) -- ($ (C)!1.25!(D) $) ;
\draw (A) parabola[bend pos=0.5] bend +(0,0.2) (B) ;
\draw (A) parabola[bend pos=0.5] bend +(0,-0.2) (B) ;
\draw (B) parabola[bend pos=0.5] bend +(0,0.2) (C) ;
\draw (B) parabola[bend pos=0.5] bend +(0,-0.2) (C) ;
\foreach \pt in {A,B,C,D} \draw (\pt) node{$.$} ;
\end{tikzpicture}}}\!}} 
\newcommand{\kite}{\raisebox{-4mm}{%
\begin{tikzpicture}[scale=0.6]
\coordinate (A) at (1,1) ; \coordinate (B) at (2,1) ; 
\coordinate (C) at (1.5,0) ; \coordinate (D) at (0,0.5) ; 
\draw ($ (C)!1.3!(B) $) -- (C) -- (A) -- ($ (A)!1.3!(D) $) ;
\draw ($ (C)!1.2!(D) $) -- ($ (D)!1.2!(C) $) ;
\draw (A) parabola[bend pos=0.5] bend +(0,0.2) (B) ;
\draw (A) parabola[bend pos=0.5] bend +(0,-0.2) (B) ;
\foreach \pt in {A,B,C,D} \draw (\pt) node{$\bull$} ;
\end{tikzpicture}}} 
\newcommand{\cometa}{{\!\!\raisebox{-2mm}{%
\begin{tikzpicture}[scale=0.3]
\coordinate (A) at (1,1) ; \coordinate (B) at (2,1) ; 
\coordinate (C) at (1.5,0) ; \coordinate (D) at (0,0.5) ; 
\draw ($ (C)!1.3!(B) $) -- (C) -- (A) -- ($ (A)!1.3!(D) $) ;
\draw ($ (C)!1.2!(D) $) -- ($ (D)!1.2!(C) $) ;
\draw (A) parabola[bend pos=0.5] bend +(0,0.2) (B) ;
\draw (A) parabola[bend pos=0.5] bend +(0,-0.2) (B) ;
\foreach \pt in {A,B,C,D} \draw (\pt) node{$.$} ;
\end{tikzpicture}}\!}} 
\newcommand{\shark}{\raisebox{-4mm}{%
\begin{tikzpicture}[scale=0.8]
\coordinate (A) at (0,0) ; \coordinate (B) at (1,0) ; 
\coordinate (C) at (2,0.4) ; \coordinate (D) at (2,-0.4) ; 
\draw (A) ++(-0.2,0.2) -- (A) -- ++(-0.2,-0.2) ;
\draw (A) parabola[bend pos=0.5] bend +(0,0.25) (B) ;
\draw (A) parabola[bend pos=0.5] bend +(0,-0.25) (B) ;
\draw ($ (B)!1.3!(C) $) -- (B) -- ($ (B)!1.3!(D) $) ;
\begin{scope}[rotate=90]
\draw (C) parabola[bend pos=0.5] bend +(0,0.2) (D) ;
\draw (C) parabola[bend pos=0.5] bend +(0,-0.2) (D) ;
\end{scope}
\foreach \pt in {A,B,C,D} \draw (\pt) node{$\bull$} ;
\end{tikzpicture}}} 
\newcommand{\tiburon}{{\!\!\raisebox{-2mm}{%
\begin{tikzpicture}[scale=0.4]
\coordinate (A) at (0,0) ; \coordinate (B) at (1,0) ; 
\coordinate (C) at (2,0.4) ; \coordinate (D) at (2,-0.4) ; 
\draw (A) ++(-0.2,0.2) -- (A) -- ++(-0.2,-0.2) ;
\draw (A) parabola[bend pos=0.5] bend +(0,0.25) (B) ;
\draw (A) parabola[bend pos=0.5] bend +(0,-0.25) (B) ;
\draw ($ (B)!1.3!(C) $) -- (B) -- ($ (B)!1.3!(D) $) ;
\begin{scope}[rotate=90]
\draw (C) parabola[bend pos=0.5] bend +(0,0.2) (D) ;
\draw (C) parabola[bend pos=0.5] bend +(0,-0.2) (D) ;
\end{scope}
\foreach \pt in {A,B,C,D} \draw (\pt) node{$.$} ;
\end{tikzpicture}}\!}} 
\newcommand{\tetrahedron}{\raisebox{-6mm}{%
\begin{tikzpicture}[scale=0.4]
\coordinate (A) at (0,0) ; \coordinate (B) at (-45:1cm) ;
\coordinate (C) at (-5:2.4cm) ; \coordinate (D) at (50:2.2cm) ;
\draw ($ (A)!-0.25!(C) $) -- ($ (A)!1.25!(C) $) ;
\draw[line width=6pt, white] (B) -- (D) ;
\draw ($ (B)!-0.25!(D) $) -- ($ (B)!1.25!(D) $) ;
\draw (A) -- (B) -- (C) -- (D) -- cycle ;
\foreach \pt in {A,B,C,D} \draw (\pt) node{$\bull$} ;
\end{tikzpicture}}} 
\newcommand{\tetraedro}{{\mskip-9mu {\raisebox{1mm}{%
\begin{tikzpicture}[scale=0.2]
\coordinate (A) at (0,0) ; \coordinate (B) at (-45:1cm) ;
\coordinate (C) at (-5:2.4cm) ; \coordinate (D) at (50:2.2cm) ;
\draw ($ (A)!-0.25!(C) $) -- ($ (A)!1.25!(C) $) ;
\draw[line width=3pt, white] (B) -- (D) ;
\draw ($ (B)!-0.25!(D) $) -- ($ (B)!1.25!(D) $) ;
\draw (A) -- (B) -- (C) -- (D) -- cycle ;
\foreach \pt in {A,B,C,D} \draw (\pt) node{$.$} ;
\end{tikzpicture}}}\!\!}} 
\newcommand{\roll}{\raisebox{-3.5mm}{%
\begin{tikzpicture}[scale=0.5]
\coordinate (A) at (0,1) ; \coordinate (B) at (0,0) ; 
\coordinate (C) at (1.6,0) ; \coordinate (D) at (1.6,1) ; 
\draw ($ (A)!-0.5!(D) $) -- ($ (A)!1.5!(D) $) ; 
\draw ($ (B)!-0.5!(C) $) -- ($ (B)!1.5!(C) $) ; 
\draw ($ (A)!0.5!(B) $) circle(0.5cm) ;
\draw ($ (C)!0.5!(D) $) circle(0.5cm) ;
\foreach \pt in {A,B,C,D} \draw (\pt) node{$\bull$} ;
\end{tikzpicture}}} 
\newcommand{\canelon}{{\raisebox{-2mm}{%
\begin{tikzpicture}[scale=0.3]
\coordinate (A) at (0,1) ; \coordinate (B) at (0,0) ; 
\coordinate (C) at (1.6,0) ; \coordinate (D) at (1.6,1) ; 
\draw ($ (A)!-0.5!(D) $) -- ($ (A)!1.5!(D) $) ; 
\draw ($ (B)!-0.5!(C) $) -- ($ (B)!1.5!(C) $) ; 
\draw ($ (A)!0.5!(B) $) circle(0.5cm) ;
\draw ($ (C)!0.5!(D) $) circle(0.5cm) ;
\foreach \pt in {A,B,C,D} \draw (\pt) node{$.$} ;
\end{tikzpicture}}}} 
\newcommand{\justapoint}{\raisebox{-2mm}{%
\begin{tikzpicture}[scale=0.8]
\draw (-0.5,0) -- (0.5,0) ;
\draw (0,0) node{$\bull$} ;
\end{tikzpicture}}} 
\newcommand{\punto}{{\!{\raisebox{-2mm}{%
\begin{tikzpicture}[scale=0.4]
\draw (-0.5,0) -- (0.5,0) ;
\draw (0,0) node{$.$} ;
\end{tikzpicture}}}\!}} 
\newcommand{\sunset}{\raisebox{-0.4\height}{%
\begin{tikzpicture}[scale=0.8]
\coordinate (A) at (0,0) ; \coordinate (B) at (0.5,0) ;
\coordinate (C) at (1,0) ;
\draw ($ (A)!-0.3!(C) $) -- ($ (A)!1.3!(C) $) ;
\draw (B) circle(0.5cm) ;
\foreach \pt in {A,C} \draw (\pt) node{$\bull$} ;
\end{tikzpicture}}} 
\newcommand{\ocaso}{{\!{\raisebox{-2.2mm}{%
\begin{tikzpicture}[scale=0.4]
\coordinate (A) at (0,0) ; \coordinate (B) at (0.5,0) ;
\coordinate (C) at (1,0) ;
\draw ($ (A)!-0.3!(C) $) -- ($ (A)!1.3!(C) $) ;
\draw (B) circle(0.5cm) ;
\foreach \pt in {A,C} \draw (\pt) node{$.$} ;
\end{tikzpicture}}}\!}} 
\newcommand{\goggles}{\raisebox{-2mm}{%
\begin{tikzpicture}[scale=0.8]
\coordinate (A) at (0,0) ; \coordinate (B) at (1,0) ; 
\coordinate (C) at (2,0) ;
\draw (A) -- ++(-0.35,0) ;
\draw (A) parabola[bend pos=0.5] bend +(0,0.25) (B) ;
\draw (A) parabola[bend pos=0.5] bend +(0,-0.25) (B) ;
\draw (B) parabola[bend pos=0.5] bend +(0,0.25) (C) ;
\draw (B) parabola[bend pos=0.5] bend +(0,-0.25) (C) ;
\draw (A) parabola[bend pos=0.5] bend +(0,0.65) (C) ;
\draw (C) -- ++(0.35,0) ;
\foreach \pt in {A,B,C} \draw (\pt) node{$\bull$} ;
\end{tikzpicture}}} 
\newcommand{\gafas}{{\!{\raisebox{-1mm}{%
\begin{tikzpicture}[scale=0.4]
\coordinate (A) at (0,0) ; \coordinate (B) at (1,0) ; 
\coordinate (C) at (2,0) ;
\draw (A) -- ++(-0.35,0) ;
\draw (A) parabola[bend pos=0.5] bend +(0,0.25) (B) ;
\draw (A) parabola[bend pos=0.5] bend +(0,-0.25) (B) ;
\draw (B) parabola[bend pos=0.5] bend +(0,0.25) (C) ;
\draw (B) parabola[bend pos=0.5] bend +(0,-0.25) (C) ;
\draw (A) parabola[bend pos=0.5] bend +(0,0.65) (C) ;
\draw (C) -- ++(0.35,0) ;
\foreach \pt in {A,B,C} \draw (\pt) node{$.$} ;
\end{tikzpicture}}}\!}} 
\newcommand{\onions}{\raisebox{-2mm}{%
\begin{tikzpicture}[scale=0.75]
\coordinate (A) at (0,0) ; \coordinate (B) at (1,0) ; 
\coordinate (C) at (2,0) ; \coordinate (D) at (3,0) ; 
\draw (A) -- ++(-0.35,0) ;
\draw (A) parabola[bend pos=0.5] bend +(0,0.25) (B) ;
\draw (A) parabola[bend pos=0.5] bend +(0,-0.25) (B) ;
\draw (B) parabola[bend pos=0.5] bend +(0,0.25) (C) ;
\draw (B) parabola[bend pos=0.5] bend +(0,-0.25) (C) ;
\draw (C) parabola[bend pos=0.5] bend +(0,0.25) (D) ;
\draw (C) parabola[bend pos=0.5] bend +(0,-0.25) (D) ;
\draw (A) parabola[bend pos=0.5] bend +(0,0.65) (D) ;
\draw (D) -- ++(0.35,0) ;
\foreach \pt in {A,B,C,D} \draw (\pt) node{$\bull$} ;
\end{tikzpicture}}} 
\newcommand{\cebollas}{{\!{\raisebox{-1mm}{%
\begin{tikzpicture}[scale=0.4]
\coordinate (A) at (0,0) ; \coordinate (B) at (1,0) ; 
\coordinate (C) at (2,0) ; \coordinate (D) at (3,0) ; 
\draw (A) -- ++(-0.35,0) ;
\draw (A) parabola[bend pos=0.5] bend +(0,0.2) (B) ;
\draw (A) parabola[bend pos=0.5] bend +(0,-0.2) (B) ;
\draw (B) parabola[bend pos=0.5] bend +(0,0.2) (C) ;
\draw (B) parabola[bend pos=0.5] bend +(0,-0.2) (C) ;
\draw (C) parabola[bend pos=0.5] bend +(0,0.2) (D) ;
\draw (C) parabola[bend pos=0.5] bend +(0,-0.2) (D) ;
\draw (A) parabola[bend pos=0.5] bend +(0,0.65) (D) ;
\draw (D) -- ++(0.35,0) ;
\foreach \pt in {A,B,C,D} \draw (\pt) node{$.$} ;
\end{tikzpicture}}}\!}} 
\newcommand{\saturn}{\raisebox{-0.4\height}{%
\begin{tikzpicture}[scale=0.6]
\coordinate (A) at (0,0) ; \coordinate (B) at (0.5,0) ;
\coordinate (C) at (1,0) ; \coordinate (D) at (1.5,0) ;
\coordinate (E) at (2,0) ; 
\draw ($ (A)!-0.25!(E) $) -- ($ (A)!1.25!(E) $) ;
\draw (C) circle(1cm) ; \draw (C) circle(0.5cm) ;
\foreach \pt in {A,B,D,E} \draw (\pt) node{$\bull$} ;
\end{tikzpicture}}} 
\newcommand{\saturno}{{\!{\raisebox{-0.4\height}{%
\begin{tikzpicture}[scale=0.3]
\coordinate (A) at (0,0) ; \coordinate (B) at (0.5,0) ;
\coordinate (C) at (1,0) ; \coordinate (D) at (1.5,0) ;
\coordinate (E) at (2,0) ; 
\draw ($ (A)!-0.25!(E) $) -- ($ (A)!1.25!(E) $) ;
\draw (C) circle(1cm) ; \draw (C) circle(0.5cm) ;
\foreach \pt in {A,B,D,E} \draw (\pt) node{$.$} ;
\end{tikzpicture}}}\!}} 
\newcommand{\roach}{\raisebox{-6.5mm}{%
\begin{tikzpicture}[scale=0.7]
\coordinate (A) at (0,0) ; \coordinate (B) at (2,0) ; 
\coordinate (C) at (1,0) ; \coordinate (D) at (1,-0.8) ; 
\draw (A) -- (D) -- (B) ;
\draw (A) parabola[bend pos=0.5] bend +(0,0.6) (B) ;
\draw ($ (A)!-0.25!(B) $) -- ($ (A)!1.25!(B) $) ;
\begin{scope}[rotate=90]
\draw (C) parabola[bend pos=0.5] bend +(0,0.2) (D) ;
\draw (C) parabola[bend pos=0.5] bend +(0,-0.2) (D) ;
\end{scope}
\foreach \pt in {A,B,C,D} \draw (\pt) node{$\bull$} ;
\end{tikzpicture}}} 
\newcommand{\cuca}{{\!{\raisebox{-2mm}{%
\begin{tikzpicture}[scale=0.35]
\coordinate (A) at (0,0) ; \coordinate (B) at (2,0) ; 
\coordinate (C) at (1,0) ; \coordinate (D) at (1,-0.8) ; 
\draw (A) -- (D) -- (B) ;
\draw (A) parabola[bend pos=0.5] bend +(0,0.6) (B) ;
\draw ($ (A)!-0.25!(B) $) -- ($ (A)!1.25!(B) $) ;
\begin{scope}[rotate=90]
\draw (C) parabola[bend pos=0.5] bend +(0,0.2) (D) ;
\draw (C) parabola[bend pos=0.5] bend +(0,-0.2) (D) ;
\end{scope}
\foreach \pt in {A,B,C,D} \draw (\pt) node{$.$} ;
\end{tikzpicture}}}\!}} 
\newcommand{\snail}{\raisebox{-3.3mm}{%
\begin{tikzpicture}[scale=0.7]
\coordinate (A) at (0,0) ; \coordinate (B) at (1,0) ; 
\coordinate (C) at (2,0) ; \coordinate (D) at (3,0) ; 
\draw (A) -- ++(-0.5,0) ;
\draw (A) parabola[bend pos=0.5] bend +(0,0.25) (B) ;
\draw (A) parabola[bend pos=0.5] bend +(0,-0.25) (B) ;
\draw (C) parabola[bend pos=0.5] bend +(0,0.25) (D) ;
\draw (C) parabola[bend pos=0.5] bend +(0,-0.25) (D) ;
\draw (B) -- (C) ;
\draw (A) parabola[bend pos=0.5] bend +(0,0.6) (C) ;
\draw (B) parabola[bend pos=0.5] bend +(0,-0.6) (D) ;
\draw (D) -- ++(0.5,0) ;
\foreach \pt in {A,B,C,D} \draw (\pt) node{$\bull$} ;
\end{tikzpicture}}} 
\newcommand{\caracol}{{\!{\raisebox{-1mm}{%
\begin{tikzpicture}[scale=0.35]
\coordinate (A) at (0,0) ; \coordinate (B) at (1,0) ; 
\coordinate (C) at (2,0) ; \coordinate (D) at (3,0) ; 
\draw (A) -- ++(-0.5,0) ;
\draw (A) parabola[bend pos=0.5] bend +(0,0.25) (B) ;
\draw (A) parabola[bend pos=0.5] bend +(0,-0.25) (B) ;
\draw (C) parabola[bend pos=0.5] bend +(0,0.25) (D) ;
\draw (C) parabola[bend pos=0.5] bend +(0,-0.25) (D) ;
\draw (B) -- (C) ;
\draw (A) parabola[bend pos=0.5] bend +(0,0.6) (C) ;
\draw (B) parabola[bend pos=0.5] bend +(0,-0.6) (D) ;
\draw (D) -- ++(0.5,0) ;
\foreach \pt in {A,B,C,D} \draw (\pt) node{$.$} ;
\end{tikzpicture}}}}} 
\newcommand{\barbells}{\raisebox{-0.4\height}{%
\begin{tikzpicture}[scale=0.8]
\coordinate (A) at (-1.25,0) ; \coordinate (B) at (-0.75,0) ;
\coordinate (C) at (-0.25,0) ; \coordinate (D) at (0.25,0) ;
\coordinate (E) at (0.75,0) ; \coordinate (F) at (1.25,0) ; 
\draw ($ (A)!-0.12!(F) $) -- ($ (A)!1.12!(F) $) ;
\draw (B) circle(0.5cm) ; \draw (E) circle(0.5cm) ;
\foreach \pt in {A,C,D,F} \draw (\pt) node{$\bull$} ;
\end{tikzpicture}}} 
\newcommand{\pesas}{{\!{\raisebox{-0.4\height}{%
\begin{tikzpicture}[scale=0.4]
\coordinate (A) at (-1.25,0) ; \coordinate (B) at (-0.75,0) ;
\coordinate (C) at (-0.25,0) ; \coordinate (D) at (0.25,0) ;
\coordinate (E) at (0.75,0) ; \coordinate (F) at (1.25,0) ; 
\draw ($ (A)!-0.12!(F) $) -- ($ (A)!1.12!(F) $) ;
\draw (B) circle(0.5cm) ; \draw (E) circle(0.5cm) ;
\foreach \pt in {A,C,D,F} \draw (\pt) node{$.$} ;
\end{tikzpicture}}}\!}} 
\begin{document}

\maketitle

\begin{abstract}
Renormalization of massless Feynman amplitudes in $x$-space is
reexamined here, using almost exclusively real-variable methods. We
compute a wealth of concrete exam\-ples by means of recursive
extension of distributions. This allows us to show perturbative
expansions for the four-point and two-point functions at several loop
order. To deal with internal vertices, we expound and expand on
convolution theory for log-homogeneous distributions. The approach has
much in common with differential renormalization as given by Freedman,
Johnson and Latorre; but differs in important details.
\end{abstract}


\section{Introduction}
\label{sec:ad-altare}

The long-awaited publication of~\cite{NikolovST13} has again brought
to the fore renormalization of Feynman amplitudes in $x$-space. The
method in that reference is distribution-theoretical, in the spirit of
Epstein and Glaser~\cite{EpsteinG73}. That means cutoff- and
counterterm-free. That infinities never be met is something devoutly
to be wished, as regards the logical and mathematical status of
quantum field theory~\cite{Helling12}.

Twenty-odd years ago, an equally impressive paper~\cite{FreedmanJL92}
with the same general aim introduced to physicists a version of
differential renormalization in $x$-space. From the beginning, it must
have been obvious to the cognoscenti that this version and
Epstein--Glaser renormalization were two sides of the same coin. The
main aim of this article is to formalize this relation, to the
advantage of $x$-space renormalization in general.

Both~\cite{NikolovST13} and~\cite{FreedmanJL92} grant pride of place
to the massless Euclidean $\phi^4_4$ model, and it suits us to follow
them in that. Namely, we show how to compute amputated diagrams which
are proper (without cutlines), contributing to the four-point
functions for this model, up to the fourth order in the coupling
constant.

In \cite{NikolovST13} a recursive process to deal with
``subdivergences'' seeks to demonstrate the renormalization process as
a sequence of extensions of distributions. Since in the present paper
we are concerned with the two-point and four-point functions needed to
obtain the effective action, we furthermore need to integrate over
internal vertices of the graphs. Indeed, in \cite{FreedmanJL92},
internal vertices are integrated over, yielding convolution-like
integrals. However, some bogus justifications for this essentially
sound procedure are put forward there. Also, computations
in~\cite{FreedmanJL92} lack the natural algebraic rules set forth by
one of us in~\cite{Carme,Bettina} -- that \cite{NikolovST13} also
adopts and generalizes as the ``causal factorization property''
\cite[Thm.~2.1]{NikolovST13}.

For the Euclidean quantum amplitudes with which we deal in this paper,
we borrow the language of (divergent) subgraphs and
cographs~\cite{LangL83}. Let $\Ga(\sV,\sL)$ denote a graph one is
working with: $\sV$ is the set of its vertices and $\sL$ the set of
its internal lines. A subgraph $\ga \subseteq \Ga$ is a set of
vertices $\sV(\ga) \subseteq \sV$ and the set of \emph{all} lines
in~$\sL$ joining any two elements of~$\sV(\ga)$ -- which is to say, a
full subgraph in the usual mathematical parlance. Let $\ga$ be any of
the subgraphs. A lodestone is the rule contained in the (rigorous as
well as illuminating) treatment in \cite[Sect.~11.2]{KuznetsovTV96}.
It is written:
\begin{equation}
\duo< R[\Ga], \vf> = \duo< R[\ga], (\Ga/\ga)\,\vf>,
\label{eq:in-annum} 
\end{equation}
where $R[\Ga]$, $R[\ga]$ and $\Ga/\ga$ denote corresponding
amplitudes, and $\vf$ is supported outside the singular points
of~$\Ga/\ga$.

As long as we need not integrate over internal vertices in~$\Ga$, this
is all we require for the recursive treatment of the hierarchy of
cographs in the diagrams. This rule implies the Euclidean version of
the causal factorization property of~\cite{NikolovST13}, as will be
thoroughly checked in the upcoming examples.

The treatment of diagrams with internal vertices calls for a
convolution-like machine. Thus we streamline the framework of
\cite{FreedmanJL92}, and in passing we correct some minor mistakes
there. While proceeding largely by way of example, along the way we
tune up that machine by (invoking and/or) proving a few rigorous
results.

The reader will notice how easy these computations in $x$-space are,
once the right methods are found. The outcome, we expect, is a
convincing case study for gathering the ways of \cite{NikolovST13}
and~\cite{FreedmanJL92} together.

\medskip

The plan of the paper is as follows. In Section~\ref{sec:get-ready} we
lay out the basic renormalization procedure, for divergent graphs
which are primitive (that is, without divergent subgraphs). The
mathematical task is to extend a function or distribution defined away
from the origin of $\bR^d$ to the whole of $\bR^d$. This is a
simplification of a more general extension problem of distributions
defined off a diagonal, calling on the translation invariance of the
distributions involved.

In Section~\ref{sec:convoluted}, adapting work by Horv\'ath, and later
by Ortner and Wagner, we adjust our convolu\-tion-like engine.
Horv\'ath's results have not found their way into textbooks, and seem
little known to physicists: it falls to us here to report on, and
complete them, in some detail.

The long Sections \ref{sec:tough-going} and~\ref{sec:graphic-arts}
deliver detailed and fully explicit calculations of concrete
(tadpole-part-free) graphs. The ``integration by parts'' method
of~\cite{FreedmanJL92} is put on a surer theoretical footing here by
linking it to the successive extensions in~\cite{NikolovST13} and the
``locality'' rule in~\cite{KuznetsovTV96}. In every case, differential
renormalization yields the leading term; here we do find the
correcting terms necessary to fulfil the algebraic strictures
in~\cite{Carme, NikolovST13}. The dilation behaviour of the
renormalized graphs is examined. We trust that these two sections give
a clear picture of the perturbative expansions for the four-point
function $\sG^4(x_1,x_2,x_3,x_4)$.

In Section~\ref{sec:feel-the-beat} we turn towards conceptual matters:
the renormalization group (RG) and the $\bt$-function, leading to
the ``main theorem of renormalization''~\cite{PopineauS82} and
Bogoliubov's recurrence relation at the level of the coupling constant
in this context. These are briefly discussed in the concluding
Section~\ref{sec:sic-transit}.

In Appendix~\ref{app:exten-formulas}, we collect for easy reference
explicit formulas for the distributional extensions in $x$-space
employed throughout. Graphs contributing to the two-point function
$\sG^2(x_1,x_2)$ are solved in Appendix~\ref{app:colon}. Calculations
of $p$-amplitudes are dealt with briefly in
Appendix~\ref{app:pspace-extens}.

\medskip

The relation between our scheme and dimensional regularization in
$x$-space was investigated in~\cite{Carme,Bettina} -- as ``analytical
prolongation'' -- and has recently been exhaustively
researched~\cite{Keller10, DuetschFKR13}. Reasons of spacetime prevent
us from going into that, for now; nor do we take up attending issues
of the Hopf algebra approach to the combinatorics of
renormalization~\cite{Kreimer98}.

\section{Primitive extensions of distributions} 
\label{sec:get-ready}

The reader is supposed familiar with the basics of distribution
theory; especially homogeneous distributions. Apart from this, the
article is self-contained, in that extension of homogeneous
distributions is performed from scratch. In that respect, outstanding
work in the eighties by Estrada and Kanwal~\cite{EstradaK85,
EstradaK89} has been very helpful to us.

\medskip

Ref.~\cite{NikolovST13} uses a complex-variable method for extension
of homogeneous distributions, following in the
main~\cite{Hormander90}. Basically, this exploits that Riesz's
normalized radial powers on~$\bR^d$, defined by
\begin{align*}
R_\la(r) &:= A_d(\la)\, r^\la, \word{where}
\\
A_d(\la) &:= \frac{2}{\Om_d\,\Ga\bigl(\frac{\la + d}{2}\bigr)}
\word{and} \Om_d = \Vol\bigl(\bS^{d-1}\bigr)
= \frac{2\pi^{d/2}}{\Ga(d/2)}\,,
\end{align*}
constitute an entire function of~$\la$. There is much to be said in
favour of such methods; the interested reader should consult also
\cite{GelfandS64} and~\cite{HorvathCol74}. Nevertheless, we choose to
recruit and popularize here real-variable methods. They are more in
the spirit both of the original Epstein--Glaser procedure
\cite{EpsteinG73} and differential renormalization
itself~\cite{FreedmanJL92}.

Let us call a homogeneous distribution $T$ on $\bR^d$ \emph{regular}
if it is smooth away from the origin; the smooth function on
$\bR^d \less \{0\}$ associated to it is homogeneous of the same
degree. Homogeneous distributions of all kinds are tempered (see the
discussion in Section~\ref{ssc:log-homog}), and thus possess Fourier
transforms.

Consider first spaces of homogeneous distributions on the real
half-line. The function $r^{-1}$ defines a distribution on the space%
\footnote{It is not satisfactory for us to consider extensions merely
from $\sS(\bR^+ \less \{0\})$ to~$\sS(\bR^+)$.}
of Schwartz functions vanishing at the origin, $r\,\sS(\bR^+)$. It
seems entirely natural to extend it to a functional on the whole space
$\sS(\bR^+)$ by defining
\begin{equation}
r_1[r^{-1}] := \Bigl( \log \frac{r}{l} \Bigr)'
\label{eq:the-power-of-logarithm} 
\end{equation}
where $l$ is a convenient scale. Note that $\log(r/l)$ is a well
defined distribution, and so is its distributional derivative. The
difference between two versions of this recipe, with different scales,
lies in the kernel of the map $f \mapsto rf$ on distributions, i.e.,
it is a multiple of the delta function. Of course $r_1[r^{-1}]$ is no
longer homogeneous, since
\begin{equation}
r_1[(\la r)^{-1}] 
= \la^{-1} \,r_1[r^{-1}] + \la^{-1} \log\la \,\dl(r).
\label{eq:first-log} 
\end{equation}

Now, for $z$ not a negative integer, the property
\begin{align}
r\,r^z &= r^{z+1}
\label{eq:mult-property} 
\word{holds; and also}
\\
\frac{d}{dr}(r^z) &= z\,r^{z-1}.
\label{eq:diff-property} 
\end{align}
For the homogeneous functions $r^{-n}$ with $n = 2,3,\dots$ which are
not locally integrable, one might adopt the recipe:
$$
r_1[r^{-n}]
:= \frac{(-)^{n-1}}{(n - 1)!} \Bigl( \log \frac{r}{l} \Bigr)^{(n)},
$$
generalizing \eqref{eq:diff-property} by definition. This is
\textit{differential renormalization on} $\bR^+$ in a nutshell. That,
however, loses property~\eqref{eq:mult-property}.

Thus we look for a recipe respecting~\eqref{eq:mult-property} instead.
Let $f$ denote a smooth function on $\bR^+ \less \{0\}$ with
$f(r) = O(r^{-k-1})$ as $r \downto 0$. Epstein and Glaser introduce a
general subtraction projection $W_w$ from $\sS(\bR^+)$ to the space of
test functions vanishing at order~$k$ at the origin, whereby the whole
$k$-jet of a test function $\vf$ on~$\bR^d$ at the origin
$$
j^k_0(\vf)(x) := \vf(0) + \sum_{|\al|=1} \frac{x^\al}{\al!}
\,\vf^{(\al)}(0) +\cdots+ \sum_{|\al|=k} \frac{x^\al}{\al!}
\,\vf^{(\al)} (0)
$$
is weighted by an infrared regulator~$w$, satisfying $w(0) = 1$ and
$w^{(\al)}(0) = 0$ for $1 \leq |\al| \leq k$:
$$
W_w\vf(x) := \vf(x) - w(x) j^k_0(\vf)(x).
$$
One may use instead \cite{Carme,Bettina} the simpler subtraction
projection:
\begin{equation}
P_w\vf(x) :=  \vf(x) - j^{k-1}_0(\vf)(x)
- w(x) \sum_{|\al|= k} \frac{x^{\al}}{\al!} \,\vf^{(\al)}(0).
\label{eq:take-away} 
\end{equation}
Just $w(0) = 1$ is now required from~$w$ for the projection
property $P_w(P_w\vf) = P_w\vf$ to hold.

Define now the operations $W_w$ and $P_w$ on $\sS'(\bR^n)$ by duality:
$$
\duo< W_w f,\vf> := \duo< f, W_w\vf>  \word{and likewise}
\duo< P_w f,\vf> := \duo< f, P_w\vf>.
$$
On the half-line, by use of Lagrange's expression for MacLaurin
remainders, for $k = 0$ (logarithmic divergence) we arrive in a short
step~\cite{Carme} at the dual integral formula:
$$
W_w f(r) = P_w f(r) 
= - \frac{d}{dr} \biggl[ r \int_0^1 \frac{dt}{t^2}\, 
f\Bigl( \frac{r}{t} \Bigr) w\Bigl( \frac{r}{t} \Bigr) \biggr].
$$
We choose (and always use henceforth) the simple regulator
\begin{equation}
w(r) := \th(l - r), \quad \text{for some fixed } l > 0,
\label{eq:simple-scale} 
\end{equation}
with $\th$ being the Heaviside function. Actually, for $k = 0$ this
yields the general result, since the difference between two extensions
with acceptable dilation behaviour is a multiple of the delta
function. In the homogeneous case, one immediately
recovers~\eqref{eq:the-power-of-logarithm}:
\begin{align*}
R_1[r^{-1}] := P_{\th(l-\cdot)}[r^{-1}]
= -\biggl[ \int_{r/l}^1 \frac{dt}{t} \biggr]' 
= \biggl( \log\frac{r}{l} \biggr)' =: r_1[r^{-1}].
\end{align*}

For any positive integer $k$,
\begin{align}
P_w f(r) &= (-)^k\,k \biggl[ \frac{r^k}{k!} \int_0^1 
\frac{(1 - t)^{k-1}}{t^{k+1}} f\Bigl( \frac{r}{t} \Bigr)
\biggl(1 - w\Bigl( \frac{r}{t} \Bigr) \biggr) \, dt \biggr]^{(k)}
\nonumber \\
&\quad + (-)^{k+1} (k + 1) \biggl[ \frac{r^{k+1}}{(k+1)!} \int_0^1
\frac{(1 - t)^k}{t^{k+2}} f\Bigl( \frac{r}{t} \Bigr) w\Bigl(
\frac{r}{t} \Bigr) \,dt \biggr]^{(k+1)}
\label{eq:proj-one-dim} 
\end{align}
which yields
\begin{equation}
R_1[r^{-k-1}] := P_{\th(l-\cdot)}[r^{-k-1}]
= \frac{(-)^k}{k!} \biggl[ \Bigl( \log\frac{r}{l} \Bigr)^{(k+1)}
+ H_k \,\dl^{(k)}(r) \biggr].
\label{eq:powers-tamed} 
\end{equation}
Here $H_k$ is the $k$-th harmonic number:
$$
H_k := \sum_{j=1}^k \frac{(-)^{j+1}}j \binom{k}{j} = \sum_{j=1}^k
\frac{1}{j} \,; \word{and} H_0 := 0.
$$ 
(See \cite[p.~267]{GrahamKP89} for the equality of the two sums.) Note
that $R_1[r^{-k-1}] \neq W_{\th(l-\cdot)}[r^{-k-1}]$ for $k > 0$, as
well as $R_1[r^{-k-1}] \neq r_1[r^{-k-1}]$.

\medskip

One of us in \cite{Carme} proved that:
\begin{itemize}
\item
$R_1$ coincides with (a straightforward generalization of) Hadamard's
finite part extension and the meromorphic continuation extension of
\cite{NikolovST13,Hormander90,GelfandS64,HorvathCol74}.
\item
$R_1$ (but not in general the $W_w$ subtraction) fulfils the
\textit{algebra property}
$$
r^m\,R_1[r^{-k-1}] = R_1[r^{-k+m-1}],
$$
extending~\eqref{eq:mult-property} to the realm of renormalized 
distributions.%
\footnote{The algebra property can be in contradiction with arbitrary
``renormalization prescriptions''~\cite{Falk05}; but this does not
detract from its utility.}
\end{itemize}

\subsection{Dimensional reduction}
\label{ssc:diml-red}

The task is now to extend radial-power distributions, that is, to
compute $\duo< f(r), \vf(x)>$, where~$f(r)$ denotes a scalar, radially
symmetric distribution defined on~$\bR^d$. We keep borrowing from
Estrada and Kanwal \cite{EstradaK85,EstradaK89}. First sum over all
angles, by defining
$$
\Pi\vf(r) := \int_{|\om|=1} \vf(r\om) \,d\sg(\om).
$$
The resulting function $\Pi\vf$ is to be regarded either as defined
on~$\bR^+$, or as an even function on the whole real line. Its
derivatives of odd order with respect to~$r$ at~$0$ vanish, and those
of even order satisfy:
$$
(\Pi\vf)^{(2l)}(0) = \Om_{d,l}\,\Dl^l\vf(0)
:= \biggl( \int_{|\om|=1} x_i^{2l} \biggr) \Dl^l\vf(0)
= \frac{2\,\Ga(l + \half)\,\pi^{(d-1)/2}}
{\Ga(l + \frac{d}{2})}\,\Dl^l\vf(0).
$$
With that in mind, one can write the \textit{dimensional reduction}
formula:
\begin{equation}
\duo< R_d[f(r)], \vf(x) >_{\bR^d}
= \bigl< R_1[f(r)r^{d-1}], \Pi\vf(r)\bigr>_{\bR^+}.
\label{eq:reductio-ad-unum} 
\end{equation}
The notation $R_d[f(r)]$ for the renormalized object handily keeps
track of the space dimension. The formula can be taken as a
\textit{definition} of $R_d[f(r)]$, and so for radially symmetric
distributions the simple $R_1$ method, as well as Hadamard's and
meromorphic continuation on the real line, are lifted in tandem to
higher dimensions by~\eqref{eq:reductio-ad-unum}. As a bonus, the
multiplicativity condition for radial functions is automatically
preserved.

Keep also in mind, however, that Epstein--Glaser-type subtraction
works in any number of dimensions. In particular, our modified
Epstein--Glaser method for $f(r) = O(r^{-k-d})$ leads to the integral
form, generalizing~\eqref{eq:proj-one-dim},
\begin{align}
P_w f(x) &= (-)^k k
\sum_{|\al|=k} \del^\al \biggl[ \frac{x^\al}{\al!}
\int_0^1 dt\, \frac{(1-t)^{k-1}}{t^{k+d}} f\Bigl( \frac{x}{t} \Bigr)
\Bigl( 1 - w\Bigl( \frac{x}{t} \Bigr) \Bigr) \biggr]
\nonumber \\
&\qquad + (-)^{k+1} (k+1) \sum_{|\bt|=k+1}
\del^\bt \biggl[ \frac{x^\bt}{\bt!} \int_0^1 dt\,
\frac{(1-t)^k}{t^{k+d+1}} f\Bigl( \frac{x}{t} \Bigr)
w\Bigl( \frac{x}{t} \Bigr) \biggr];
\label{eq:yo-yo} 
\end{align}
and, as it turns out, $P_{\th(l-\cdot)} f(r) = R_d[f(r)]$, when using
the regulator~\eqref{eq:simple-scale}. All this was clarified
in~\cite{Carme}. The operator $\del_\al x^\al = E + d$, with
$E := x^\al\,\del_\al$ denoting the Euler operator, figures
prominently there.

Note, moreover, when both the distribution $f$ and the regulator $w$
enjoy rotational symmetry, employing the MacLaurin--Lagrange expansion
for~$\vf$ and summation over the angles, the last displayed formula
amounts to the computation:
\begin{equation}
\duo< P_wf(r), \vf(x)> \equiv \biggl< f(r),
\vf(x) - \vf(0) - \frac{\Dl\vf(0)}{2!\,d}\,r^2 -\cdots- 
w(r) \frac{\Om_{d,l}\,\Dl^l\vf(0)}{(2l)!\,\Om_d}\, r^{2l} \biggr>,
\label{eq:pizza-crust} 
\end{equation}
up to the highest~$l$ such that $2l \leq k$. Rotational symmetry of 
extensions in general can be studied like Lorentz covariance was 
in~\cite[Sect.~4]{Bettina} and in~\cite[Sect.~3.3]{Hollands08}.

We remark finally that the MacLaurin expansion for $\Pi\vf$ is written
$$
\Pi\vf(r) = 2^{d/2-1} \Ga(d/2) \bigl( r \sqrt{-\Dl}\,\bigr)^{1-d/2}
J_{d/2-1}\bigl( r \sqrt{-\Dl}\,\bigr) \vf(0),
$$
for $J_\al$ the Bessel function of the first kind and order~$\al$.
This makes sense for complex~$\al$. That is the nub of dimensional
regularization in position space, as found by Bollini and Giambiagi
themselves~\cite{BolliniG96} -- with Euclidean signature, in the
present case.

\subsection{Log-homogeneous distributions}
\label{ssc:log-homog}

We are interested in the amplitude $R_4[r^{-4}]$, corresponding to 
the ``fish'' graph \fish\ in the $\phi^4_4$ model. For clarity, it is 
useful to work in any dimension $d \geq 3$. Note the following:
\begin{align}
R_d[r^{-d}] &= r^{-d+1} \Bigl( \log\frac{r}{l} \Bigr)'
= r^{-d} E \Bigl( \log\frac{r}{l} \Bigr)
= \del_\al \Bigl( x^\al r^{-d} \log\frac{r}{l} \Bigr)
\notag \\
&= - \del_\al \del^\al \biggl( \frac{r^{-d+2}}{d - 2} \log\frac{r}{l}
+ \frac{r^{-d+2}}{(d - 2)^2} \biggr)
= - \frac{1}{d - 2}\, \biggl[
\Dl\biggl( r^{-d+2} \log\frac{r}{l} \biggr) - \Om_d\,\dl(r) \biggr].
\label{eq:basic-exten} 
\end{align}
The last term appears because $r^{-d+2}/(-d + 2)\Om_d$ is the
fundamental solution for the Laplacian on~$\bR^d$. An advantage of
this form is that the corresponding momentum space amplitudes are
easily computed -- as will be exploited later: see 
Appendix~\ref{app:pspace-extens}.

In calculation of graphs on $\bR^4$ with subdivergences, extensions of
$r^{-4}\log^m(r/l)$, with growing powers of logarithms, crop up again
and again. It is best to grasp them all together. One can introduce
different scales, but for simplicity we stick with just one scale.
Dimensional reduction gives
\begin{equation}
R_d \Bigl[r^{-d} \log^m \frac{r}{l}\Bigr]
= \frac{1}{m+1}\, r^{-d+1}\,
\frac{d}{dr} \biggl( \log^{m+1} \frac{r}{l} \biggr)
\label{eq:genl-exten} 
\end{equation}
for any $m = 0,1,2,\dots$

\medskip

A distribution $f$ on $\bR^d$ is called \textit{log-homogeneous} of
\textit{bidegree} $(a,m)$ if
\begin{equation}
(E - a)^{m+1} f = 0,  \word{but}  (E - a)^m f \neq 0.
\label{eq:log-homog} 
\end{equation}
Here $m$ is a nonnegative integer but $a$ can be any complex number;
the case $m = 0$ obviously corresponds to homogeneous distributions.
For example, the distribution $\log r \in \sS'(R^d)$ is
log-homogeneous of bidegree~$(0,1)$. Essentially the same definition
is found in \cite[Sect.~4.1.6]{Scott10}. See also
\cite[Sect.~I.4]{GelfandS64}, \cite[Sect.~2.4]{Carme}
and~\cite[Prop.~4.4]{NikolovST13}, where the nomenclature used is
``associate homogeneous of degree~$a$ and order~$m$''.

Log-homogeneous distributions are tempered. Indeed, if $f$ is
homogeneous of bidegree $(a,0)$, then~\cite{EstradaFunfacts14} one can
find $g_0 \in \sD'(\bS^{n-1})$ so that $f(x) = r^a\,g_0(\om)$ for
$x = r\om \in \bR^d \less \{0\}$. More generally, for $f$ of bidegree
$(a,m)$, one can inductively construct
$g_0,\dots,g_m \in \sD'(\bS^{n-1})$ such that 
$f(r\om) = \sum_{k=0}^m r^a \log^{m-k} r \,g_k(\om)$ for $r > 0$. It
follows that $f \in \sS'(\bR^d)$.

A related issue is whether a log-homogeneous distribution on
$\bR^d \less \{0\}$ can be extended to one on~$\bR^d$. As we shall 
immediately exemplify, this can always be achieved although the 
bidegree may change: if the bidegree off the origin is $(a,m)$, that 
of the extension may be $(a,n)$ with $n \geq m$. For a general proof of 
that, showing also that rotational (or Lorentz) invariance may be 
kept in the extension, see Lemma~6 of~\cite{Hollands08}.

\medskip

The dilation behaviour of $R_d[r^{-d}]$ is immediate from
formula \eqref{eq:basic-exten}:
$$
R_d[(\la r)^{-d}]
= \del_\al \biggl( x^\al (\la r)^{-d} \Bigl(
\log\frac{r}{l} + \log\la \Bigr) \biggr)
= \la^{-d} R_d[r^{-d}] + \la^{-d} \log\la \,\Om_d \,\dl(r),
$$
generalizing~\eqref{eq:first-log}. Note that $\Om_d$ is simply minus
the coefficient of~$\log l$. In infinitesimal terms,
$$
E R_d[r^{-d}] = -d\,R_d[r^{-d}] + \Om_d\,\dl(r),  \word{so that}
\Res[r^{-d}] := [E, R_d](r^{-d}) = \Om_d\,\dl(r).
$$
Hence $R_d[r^{-d}]$ is log-homogeneous of bidegree~$(-d,1)$. The
functional $\Res$ coincides with the Wodzicki residue
\cite[Chap.~7.3]{Polaris}. It coincides as well with the ``analytic
residue'' in~\cite{NikolovST13}.

For our own purposes (RG calculations), we prefer to invoke the
logarithmic derivative of the amplitudes with respect to the length
scale~$l$:
\begin{equation}
\frac{\del}{\del\log l}\, R_d[r^{-d}] = \lddl R_d[r^{-d}]
= - \Om_d\,\dl(r);
\label{eq:spiny-scale} 
\end{equation}
which for primitive diagrams like the fish yields the residue yet
again. As was shown in~\cite{Carme}, this is actually a functional
derivative with respect to the regulator~$w$; and so it can be widely
generalized.

\begin{lema} 
\label{lm:logdiv-dilation}
For $d \geq 3$, $m = 0,1,2,\dots$ and $\la > 0$, the following 
relation holds:
\begin{equation}
R_d \Bigl[ (\la r)^{-d} \log^m \frac{\la r}{l} \Bigr]
= \sum_{k=0}^m \la^{-d} \log^k\la \binom{m}{k}
R_d \Bigl[ r^{-d} \log^{m-k} \frac{r}{l} \Bigr]
+ \la^{-d} \log^{m+1}\la \,\frac{\Om_d}{m + 1}\, \dl(r).
\label{eq:logdiv-dilation} 
\end{equation}
Therefore, $R_d\bigl[ r^{-d} \log^m(r/l) \bigr]$ is log-homogeneous of
bidegree $(-d, m+1)$.
\end{lema}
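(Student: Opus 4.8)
The plan is to reduce everything to a single divergence representation of the renormalized amplitude, in exact analogy with the computation of $R_d[(\la r)^{-d}]$ carried out just before the statement. Concretely, I would first record the identity
$$
R_d\Bigl[r^{-d}\log^m\frac{r}{l}\Bigr]
= \del_\al\Bigl(\frac{x^\al r^{-d}}{m+1}\log^{m+1}\frac{r}{l}\Bigr),
$$
which generalizes \eqref{eq:basic-exten} (the case $m=0$). This follows from \eqref{eq:genl-exten} after writing $r^{-d+1}\frac{d}{dr}=r^{-d}E$ on radial functions and reading the right-hand side as a distributional divergence: the field $x^\al r^{-d}\log^{m+1}(r/l)$ is locally integrable, since its singularity is $O\bigl(r^{-d+1}|\log r|^{m+1}\bigr)$, integrable against $r^{d-1}\,dr$, so its divergence is a bona fide tempered distribution extending $r^{-d}\log^m(r/l)$ off the origin.

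Next I would compute the dilation straight from this representation by substituting $r\mapsto\la r$ inside it, just as was done for $m=0$. Using $(\la r)^{-d}=\la^{-d}r^{-d}$ together with the binomial expansion
$$
\log^{m+1}\frac{\la r}{l}
= \sum_{j=0}^{m+1}\binom{m+1}{j}\log^j\la\,\log^{m+1-j}\frac{r}{l},
$$
the dilated amplitude becomes $\la^{-d}$ times a sum of divergences $\del_\al\bigl(x^\al r^{-d}\log^{m+1-j}(r/l)\bigr)/(m+1)$, for $j=0,\dots,m+1$.

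The heart of the argument is then pure bookkeeping. For $j=0,\dots,m$ each summand equals, by the displayed representation with $m$ replaced by $m-j$, precisely $\frac{m+1-j}{m+1}\,R_d[r^{-d}\log^{m-j}(r/l)]$, and the coefficient identity $\binom{m+1}{j}\frac{m+1-j}{m+1}=\binom{m}{j}$ collapses the first $m+1$ terms into the asserted sum $\sum_{k=0}^m\binom{m}{k}\log^k\la\,R_d[r^{-d}\log^{m-k}(r/l)]$. The single leftover term, $j=m+1$, is $\la^{-d}\log^{m+1}\la\,\frac{1}{m+1}\del_\al(x^\al r^{-d})$; since the radial field $x^\al r^{-d}$ has flux $\Om_d$ through any sphere about the origin, Gauss's theorem gives $\del_\al(x^\al r^{-d})=\Om_d\,\dl(r)$, yielding the stated correction $\la^{-d}\log^{m+1}\la\,\frac{\Om_d}{m+1}\,\dl(r)$.

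Finally, the bidegree assertion drops out of \eqref{eq:logdiv-dilation}: under dilation the amplitude equals $\la^{-d}$ times a polynomial in $\log\la$ whose top power is $\log^{m+1}\la$, carrying the nonzero coefficient $\frac{\Om_d}{m+1}\dl(r)$, while every genuinely $\la$-dependent piece involves $\log^k\la$ with $k\le m+1$. A distribution obeying such a dilation law is exactly log-homogeneous of bidegree $(-d,m+1)$; equivalently, each application of $E+d$ lowers the power $m$ by one, so the amplitude is annihilated only after $m+2$ applications. The step I expect to be most delicate is justifying the divergence representation at the origin—that the distributional divergence reproduces the Epstein--Glaser extension \eqref{eq:genl-exten} with no spurious origin-supported terms—since everything downstream of it is algebra.
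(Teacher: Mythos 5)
Your proposal is correct and follows essentially the same route as the paper: both rest on the divergence representation $R_d[r^{-d}\log^m(r/l)] = \frac{1}{m+1}\del_\al\bigl(x^\al r^{-d}\log^{m+1}(r/l)\bigr)$, substitute $\la r$, expand the logarithm binomially, collapse the coefficients via $\binom{m+1}{k}\frac{m+1-k}{m+1}=\binom{m}{k}$, and identify the leftover term through $\del_\al(x^\al r^{-d})=\Om_d\,\dl(r)$ (which you justify by Gauss's theorem and the paper by the fundamental solution of the Laplacian---an immaterial difference). The step you flag as delicate is not a gap: the divergence representation is already established in the paper's preceding discussion (Eqs.~\eqref{eq:basic-exten} and~\eqref{eq:genl-exten}), so the lemma's proof may take it as given.
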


\begin{proof}
This is a direct verification:
\begin{align*}
R_d \Bigl[ (\la r)^{-d} \log^m \frac{\la r}{l} \Bigr]
&= \frac{\la^{-d}}{m + 1}\, \del_\al \biggl( x^\al r^{-d} \Bigl(
\log\frac{r}{l} + \log \la \Bigr)^{m+1} \biggr) 
\\
&= \frac{\la^{-d}}{m + 1} \sum_{k=0}^{m+1} \binom{m+1}{k} \log^k\la
\,\del_\al \Bigl( x^\al r^{-d} \log^{m-k+1}\frac{r}{l} \Bigr)
\\
&= \sum_{k=0}^{m+1} \la^{-d} \log^k\la \,\frac{m!}{k!(m-k+1)!} 
\,\del_\al \Bigl( x^\al r^{-d} \log^{m-k+1}\frac{r}{l} \Bigr)
\\
&= \sum_{k=0}^m \la^{-d} \log^k\la \binom{m}{k} 
R_d \Bigl[ r^{-d} \log^{m-k} \frac{r}{l} \Bigr]
+ \la^{-d} \log^{m+1}\la \,\frac{1}{m + 1}\, \del_\al(x^\al r^{-d}),
\end{align*}
and the result follows from the relation 
$\del_\al(x^\al r^{-d}) = - \Dl((d - 2) r^{-d+2}) = \Om_d\,\dl(r)$.
\end{proof}

As an immediately corollary, we get the effect of the Euler operator,
when $m \geq 1$:
$$
E R_d \Bigl[ r^{-d} \log^m \frac{r}{l} \Bigr]
= -d\,R_d \Bigl[ r^{-d} \log^m \frac{r}{l} \Bigr]
+ m\,R_d \Bigl[ r^{-d} \log^{m-1} \frac{r}{l} \Bigr].
$$
On the other hand, an elementary calculation for $r > 0$ gives
$$
E \Bigl[ r^{-d} \log^m \frac{r}{l} \Bigr]
= r\,\frac{d}{dr} \Bigl[ r^{-d} \log^m \frac{r}{l} \Bigr]
= -d\,r^{-d} \log^m \frac{r}{l} + m\,r^{-d} \log^{m-1} \frac{r}{l} \,,
$$
so that $R_d E\bigl[ r^{-d} \log^m(r/l) \bigr]
= E R_d\bigl[ r^{-d} \log^m(r/l) \bigr]$; consequently, higher
residues all vanish:
$$
\Res \Bigl[ r^{-d} \log^m \frac{r}{l} \Bigr]
:= [E, R_d] \Bigl( r^{-d} \log^m \frac{r}{l} \Bigr) = 0
\quad\text{for } m = 1,2,3,\dots
$$

We summarize. First, by the same trick of \eqref{eq:basic-exten},
$$
R_d \Bigl[ r^{-d} \log^m \frac{r}{l} \Bigr]
= \frac{E + d}{m + 1} \Bigl( r^{-d} \log^{m+1} \frac{r}{l} \Bigr),
$$
which makes obvious much of the above. This formula also shows that 
the aforementioned algebra property applies to logarithms as well as 
polynomials:
$$
\log \frac{r}{l}\, R_d \Bigl[ r^{-d} \log^m \frac{r}{l} \Bigr]
= R_d \Bigl[ r^{-d} \log^{m+1} \frac{r}{l} \Bigr].
$$

Second, we can use this operator to amplify a well-known property of
homogeneous distributions: the \textit{Fourier transform} $\sF f$ of a
log-homogeneous distribution $f \in \sS'(\bR^d)$ of bidegree $(a,m)$
is itself log-homogeneous of bidegree $(-d-a, m)$. Indeed, since
$\sF(x^\al\,\del_\al) = -(\del_\al x^\al)\sF$, i.e.,
$\sF E = -(E + d)\sF$ as operators on $\sS'(\bR^d)$, the
relations~\eqref{eq:log-homog} are equivalent to
$$
(E + d + a)^{m+1} \sF f = 0,  \word{but}  (E + d + a)^m \sF f \neq 0.
$$
The Fourier transforms of the considered regular distributions are
also regular~\cite[Chap.~7.3]{Polaris}. Moreover, $\sF$ is an
isomorphism of the indicated spaces~\cite{OrtnerW13}.

Third, one can rewrite the result of Lemma~\ref{lm:logdiv-dilation} to
show that it exhibits a representation of the dilation group. Indeed,
on multiplying both sides of~\eqref{eq:logdiv-dilation} by $\la^d/m!$,
we obtain
$$
\frac{\la^d}{m!} R_d \Bigl[ (\la r)^{-d} \log^m \frac{\la r}{l} \Bigr]
= \sum_{k=0}^m  \frac{\log^k\la}{k!}\, \frac{1}{(m - k)!}\,
R_d \Bigl[ r^{-d} \log^{m-k} \frac{r}{l} \Bigr]
+ \,\frac{\log^{m+1}\la}{(m + 1)!}\, \Om_d\,\dl(r).
$$
This shows that the distributions
$\frac{1}{k!}\, R_d[r^{-d} \log^k(r/l)]$, for $k = 0,1,\dots$, plus
the special case $\Om_d\,\dl(r)$ for $k = -1$, form an eigenvector
(with eigenvalue~$\la^d$) for a certain unipotent matrix
$\exp(A \log\la)$, yielding an action of the dilation group -- this is
just Proposition~3.2 of~\cite{NikolovST13}.

Fourth, the obvious relation
\begin{equation}
\lddl  R_d \Bigl[ r^{-d} \log^m \frac{r}{l} \Bigr]
= -m R_d \Bigl[ r^{-d} \log^{m-1} \frac{r}{l} \Bigr],
\quad \text{for } m \geq 1,
\label{eq:huevo-de-Colon} 
\end{equation}
will be most useful in the sequel.

Fifth, formulas involving the Laplacian, like~\eqref{eq:basic-exten},
do exist for all the log-homogeneous distributions
\eqref{eq:genl-exten}, and thus for the graphs. We develop these
formulas in Appendix~\ref{app:exten-formulas}.

\subsection{Here comes the sun}
\label{ssc:quad-exten}

One of us introduced in \cite[Sect.~4.2]{Carme}, on the basis of
related expressions by Estrada and Kanwal
\cite{EstradaK85,EstradaK89}, the powerful formula
\begin{align}
\Dl^n R_d\bigl[ r^{-d-2m} \bigr]
&= \frac{(d + 2m + 2n - 2)!!}{(d + 2m - 2)!!}\,
\frac{(2m + 2n)!!}{(2m)!!}\, R_d\bigl[ r^{-d-2m-2n} \bigr]
\nonumber \\
&\qquad - \frac{\Om_{d,m}}{(2m)!} \sum_{l=1}^n
\frac{(4m+4l+d-2)}{2(m+l)(2m+2l+d-2)}\, \Dl^{n+m}\dl(r).
\label{eq:extra-laps} 
\end{align}
The first term on the right hand side just corresponds to the na\"ive
derivation formula. Once the case $n = 1$ is established, the general
formula follows by a straightforward iteration, using the relation
$\Om_{d,m+1}/\Om_{d,m} = (2m + 1)/(2k + d)$. This provides explicit
expressions for divergences higher than logarithmic.

\medskip

Consider thus the case: $d = 4$, $m = 0$, $n = 1$, which yields
$$
\Dl R_4[r^{-4}] = 8 R_4[r^{-6}] - \frac{3\pi^2}{2} \,\Dl\dl(r).
$$
Without further ado, we get the renormalization of the quadratically
divergent ``sunset'' graph of the $\phi^4_4$ model:
\begin{gather}
\sunset \quad
\text{which in $x$-space is \emph{primitive} (subdivergence-free)}:
\notag \\
R_4[r^{-6}]
= \frac{1}{8}\,\Dl R_4[r^{-4}] + \frac{3\pi^2}{16}\,\Dl\dl(r)
= -\frac{1}{16}\,\Dl^2 \Bigl( r^{-2} \log\frac{r}{l} \Bigr)
+ \frac{5\pi^2}{16}\,\Dl\dl(r).
\label{eq:ren-hex} 
\end{gather}
The same result can be retrieved directly from
formula~\eqref{eq:yo-yo}, see~\cite{Carme}. Its first term is
log-homogeneous of bidegree $(-6,1)$. It is worth noting here that in
the paper by Freedman, Johnson and Latorre two different extensions
\cite[Eq.~(A.1)]{FreedmanJL92} and \cite[Eq.~(2.8)]{FreedmanJL92} are
given for this graph,
\begin{align*}
r_\FJL[r^{-6}]
= -\frac{1}{16}\, \Dl^2 \Bigl( r^{-2} \log \frac{r}{l} \Bigr),
\!\word{respectively}\! 
r_\FJL[r^{-6}] 
= -\frac{1}{16}\, \Dl^2\Bigl( r^{-2} \log \frac{r}{l} \Bigr)
- \frac{16\pi^2\dl(r)}{l^2} \,.
\end{align*}
The first one does not fulfil the algebra property, the second one
moreover brings in an unwelcome type of dependence on~$l$. Note as
well that rotational symmetry allows \textit{two} arbitrary constants
in the renormalization of this graph; the algebra property reduces the
ambiguity to one.

The scale derivative gives
\begin{equation}
\lddl R_4[r^{-6}] = - \frac{\Om_4}{8}\, \Dl\dl(r).
\label{eq:ren-hex-scale} 
\end{equation}

The reader may renormalize straightforwardly
from~\eqref{eq:extra-laps} the simplest vacuum graph of the model.

We compute the following commutation relations:
\begin{equation}
[\Dl, E + d] = [\Dl, E] = 2\Dl; \quad 
[\Dl, r^2] = 2d + 4E;  \quad
[E, r^2] = 2r^2,
\label{eq:manes-de-sl2r} 
\end{equation}
valid for radial functions or distributions. This allows us to run an
indirect check of~\eqref{eq:ren-hex}, highlighting the algebra
property:
\begin{align*}
r^2 R_4[r^{-6}] &= \frac{1}{8}\,\Dl(r^{-2}) - R_4[r^{-4}]
- \frac{1}{2}\,ER_4[r^{-4}] + \frac{3\pi^2}{2}\,\dl(r)
\\
&= - \frac{\pi^2}{2}\,\dl(r) - R_4[r^{-4}] + 2 R_4[r^{-4}]
- \pi^2\,\dl(r) + \frac{3\pi^2}{2}\,\dl(r) = R_4[r^{-4}];
\end{align*}
where we have used $r^2 \Dl\dl(r) = 2d\,\dl(r)$ and the third identity
in~\eqref{eq:manes-de-sl2r}.

\medskip

More generally, the distribution $R_d[r^{-d-2k}]$ is log-homogeneous
of bidegree $(-d-2k, 1)$, since one finds~\cite{Carme} that
\begin{equation}
R_d\bigl[ (\la r)^{-d-2k} \bigr] 
= \la^{-d-2k}\, R_d[r^{-d-2k}]
+ \la^{-d-2k} \log\la \,\frac{\Om_{d,k}}{k!} \,\Dl^k\dl.
\label{eq:powers-homog} 
\end{equation}
A few explicit expressions for $R_4\bigl[ r^{-6} \log^m(r/l) \bigr]$
terms, which we shall need later on, are given in 
Appendix~\ref{app:exten-formulas}.

\subsection{Trouble with the formulas for derivatives}
\label{ssc:la-puntilla}

We remind the reader that there are no extensions of $r^{-k-1}$ for
which the generalization of both requirements \eqref{eq:mult-property}
and \eqref{eq:diff-property} hold simultaneously. One finds
\cite{EstradaK89} that
$$
r^m\,r_1[r^{-k-1}] = r_1[r^{-k+m-1}] + [H_{k-m} - H_k]\,\dl^{(k-m)}(r).
$$
This also means that differential renormalization in the sense of
Freedman, Johnson and Latorre is inconsistent with dimensional
reduction; which early on drew justified criticism \cite{Schnetz97}
towards heuristic prescriptions on~$\bR^4$ in~\cite{FreedmanJL92},
such as
$$
r_\FJL[r^{-4}]
= - \frac{1}{2}\, \Dl\Bigl( r^{-2} \log\frac{r}{l} \Bigr).
$$
For instance, with a glance at \eqref{eq:reductio-ad-unum}
and~\eqref{eq:basic-exten}, we immediately see that the implied
renormalization of $r^{-1}$ on the half-line would be
$\log'(r/l) - \half\dl(r)$, instead of $\log'(r/l)$. This makes too
much of a break with the rules of calculus.

In general, the distributional derivative of a natural extension of a
singular function will not coincide with the natural extension of its
derivative. An instructive discussion of this point is given
in~\cite{EstradaF02}.

\section{Convolution-like composition of distributions}
\label{sec:convoluted}

The convolution of two integrable functions defined on a Euclidean
space $\bR^d$ is given by the well-known formula
$$
f * g(x) = \int f(y) g(x - y) \,dy,
$$
the integral being taken over $\bR^d$. To convolve two distributions,
one starts with the equivalent formula
\begin{equation}
\duo< f * g, \vf> := \iint f(x) g(y) \vf(x + y) \,dx\,dy
\label{eq:convol-defn-a} 
\end{equation}
where $\vf \in \sD(\bR^d)$ here and always denotes a test function.
The right hand side of~\eqref{eq:convol-defn-a} may be regarded as a
duality formula:
\begin{equation}
\duo< f * g, \vf> := \duo< f \ox g, \vf^\Dl>,
\label{eq:convol-defn-b} 
\end{equation}
where $\vf^\Dl(x,y) := \vf (x + y)$ and the pairing on the right hand 
side takes place over~$\bR^{2d}$. 

Notice that $\vf^\Dl \in C^\infty(\bR^{2d})$ is smooth but no longer
has compact support, so that \eqref{eq:convol-defn-b} only makes sense
for certain pairs of distributions $f$ and~$g$. If, say, one of the
distributions $f$ or~$g$ has compact support, then $f * g$ is well
defined as a distribution by~\eqref{eq:convol-defn-b}, and
associativity formulas like $(f * g) * h = f * (g * h)$ are meaningful
and valid if at least two of the three factors have compact support.
Also if, for instance, a distribution is tempered, $f \in \sS'(\bR^d)$,
then one can take $g \in \sO'_c(\bR^d)$, the space of distributions
``with rapid decrease at infinity''. This variant is dealt with in the
standard references, see \cite[p.~246]{Schwartz66} or
\cite[p.~423]{Horvath66}.

However, these decay conditions are not met in quantum field practice,
so we must amplify the definition of convolution.

\medskip

One can alternatively interpret the integral
in~\eqref{eq:convol-defn-a} as pairing by duality the distribution\\
$f(x) g(y) \vf(x + y)$ with the constant function~$1$:
\begin{equation}
\duo< f * g, \vf> := \duo< \vf^\Dl(f \ox g), 1> .
\label{eq:convol-defn-c} 
\end{equation}
For that, one must determine conditions on $f$ and~$g$ so that the
pairing on the right hand side -- again over~$\bR^{2d}$ -- makes
sense.

Consider the space $\sB_0(\bR^d)$ of smooth functions on~$\bR^d$ that
vanish at infinity together with all their derivatives. Its dual space
$\sB_0'(\bR^d)$ is the space of \textit{integrable distributions}.
(The notation follows~\cite{Horvath66}; the space of integrable
distributions is called $\sD'_{L^1}(\bR^d)$ by
Schwartz~\cite{Schwartz66}.) The dual space of $\sB_0'(\bR^d)$ itself
is larger than $\sB_0(\bR^d)$: it is
$\sB_0''(\bR^d) \equiv \sB(\bR^d)$, the space of smooth functions all
of whose derivatives are merely bounded on~$\bR^d$.

It is known \cite[Sect.~4.5]{Horvath66} that a distribution $f$ is
integrable if and only if it can be written as
$f = \sum_\al \del^\al \mu_\al$, a finite sum of derivatives of
finite measures $\mu_\al$. A particularly useful class of integrable
distributions are those of the form $f = h + k$ where $h$ has compact
support and $k$ is a function which is integrable (in the usual sense)
and vanishes on the support of~$h$.

\begin{defn}
\label{df:convolvability}
Two distributions $f,g \in \sD'(\bR^d)$ are called
\textit{convolvable} if $\vf^\Dl(f \ox g) \in \sB_0'(\bR^{2d})$ for
any $\vf \in \sD(\bR^d)$.

Since $1 \in \sB(\bR^{2d})$, the right hand side of
\eqref{eq:convol-defn-c} then makes sense as the evaluation of a
(separately continuous) bilinear form; and hence it defines
$f * g \in \sD'(\bR^d)$.
\end{defn}

This definition of convolvability was introduced in~\cite{Horvath74}
by Horv\'ath, under the name ``condition~$(\Ga)$''; and he showed
there that it subsumes previous convolvability conditions, such as the
aforementioned one between $\sS'(\bR^d)$ and~$\sO'_c(\bR^d)$. It is
known that $\sB_0'(\bR^d)$ is an (associative) convolution algebra:
such a result already appears in~\cite{Schwartz66} and the proof has
been adapted to the above notion of convolvability by Ortner and
Wagner~\cite[Prop.~9]{OrtnerW89}.

\medskip

Now, how can one tell when two given (say, log-homogeneous)
distributions are convolvable or not? Consider, for instance, the
log-homogeneous distribution $R_d[r^{-4}]$ on~$\bR^4$, defined in the
previous section. It yields the renormalization of the ``fish'' graph
in the massless~$\phi^4_4$ model; and its convolution with itself
amounts to the correct definition of a chain (articulated, one-vertex
reducible) diagram, the ``spectacles'' or ``bikini'' graph \bikini\,.
The following result allows us to attack the calculation of several
graphs in the next section.

\begin{prop} 
\label{pr:convolvables}
The convolution of log-homogeneous distributions
\begin{equation}
R_d \Bigl[ r^\la \log^m \frac{r}{l} \Bigr]
* R_d \Bigl[ r^\mu \log^k \frac{r}{l} \Bigr]
\label{eq:nonum} 
\end{equation}
is well defined whenever $\Re(\la + \mu) < -d$, for any 
$m,k = 0,1,2,\dots$
\end{prop}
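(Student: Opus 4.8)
The plan is to verify Definition~\ref{df:convolvability} head-on: writing $f := R_d[r^\la \log^m(r/l)]$ and $g := R_d[r^\mu \log^k(r/l)]$, I must show that $\vf^\Dl\,(f \ox g) \in \sB_0'(\bR^{2d})$ for every $\vf \in \sD(\bR^d)$. The key is a geometric observation. The smooth factor $\vf^\Dl(x,y) = \vf(x+y)$ is supported in the slab $S := \set{(x,y) : x+y \in \operatorname{supp}\vf}$, whereas $f \ox g$ is singular only on the cross $(\{0\}\x\bR^d)\cup(\bR^d\x\{0\})$. Because $\operatorname{supp}\vf$ is compact, this cross meets $S$ in a \emph{compact} set: all singular behaviour of $\vf^\Dl(f\ox g)$ is therefore confined to a bounded region, and the only non-compact feature is the decay of $f\ox g$ along $S$ as $|x|\to\infty$ (equivalently $|y|\to\infty$, since $x+y$ stays bounded on~$S$).

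Accordingly, I would fix a cutoff $\chi \in \sD(\bR^{2d})$ equal to~$1$ on a neighbourhood of the compact set $S \cap \bigl((\{0\}\x\bR^d)\cup(\bR^d\x\{0\})\bigr)$ and split
\[
\vf^\Dl(f \ox g) = \chi\,\vf^\Dl(f \ox g) + (1-\chi)\,\vf^\Dl(f \ox g).
\]
The first summand has compact support, hence lies in $\mathcal{E}'(\bR^{2d}) \subset \sB_0'(\bR^{2d})$ and is automatically integrable. On the support of the second summand the factor $1-\chi$ stays away from the cross, so there $f$ and $g$ agree with their defining radial functions; the product is thus a genuine smooth function, and since $L^1(\bR^{2d}) \subset \sB_0'(\bR^{2d})$ it suffices to prove that $(1-\chi)\,\vf^\Dl(f\ox g) \in L^1(\bR^{2d})$.

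This last point is an elementary power count. Away from the origin one has, uniformly in angle (the $R_d[\cdots]$ being regular, indeed radial), the bounds $|f(x)| \le C\,|x|^{\Re\la}\,(1+|\log|x||)^m$ and $|g(y)| \le C\,|y|^{\Re\mu}\,(1+|\log|y||)^k$. Passing to the unit-Jacobian coordinates $(x,s)$ with $s := x+y$, the $L^1$ norm is controlled by
\[
\int_{\operatorname{supp}\vf} |\vf(s)|\,ds \int_{|x|>M} |x|^{\Re\la}\,|s-x|^{\Re\mu}\,
(1+|\log|x||)^m\,(1+|\log|s-x||)^k \, dx,
\]
the bounded part $|x|\le M$ contributing only a continuous function on a compact set. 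For $s$ ranging over $\operatorname{supp}\vf$ and $|x|\to\infty$ one has $|s-x|\sim|x|$, so the inner integrand decays like $|x|^{\Re(\la+\mu)}(\log|x|)^{m+k}$; in $d$ dimensions this is integrable at infinity precisely when $\Re(\la+\mu)+d-1 < -1$, i.e.\ when $\Re(\la+\mu) < -d$. The strict inequality makes the logarithmic factors irrelevant to the threshold, and this is exactly the hypothesis.

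The main obstacle is not any single estimate but the bookkeeping of the splitting: one must check that excising the singular part by $\chi$ genuinely leaves behind a \emph{function} (so that $L^1$-membership is meaningful), and then combine the inclusions $\mathcal{E}'(\bR^{2d}) \subset \sB_0'(\bR^{2d})$ and $L^1(\bR^{2d}) \subset \sB_0'(\bR^{2d})$ with the fact that $\sB_0'$ is a vector space. Both inclusions follow at once from the characterization of integrable distributions as finite sums of derivatives of finite measures recalled above. Convolvability then holds for all $m,k = 0,1,2,\dots$; in particular it covers the motivating case $\la = \mu = -d$ (the bikini graph \bikini\ on $\bR^4$, where $\Re(\la+\mu) = -2d < -d$).
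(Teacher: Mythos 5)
Your proof is correct, but it takes a genuinely different route from the paper's. You verify Horv\'ath's condition directly: a cutoff $\chi$ isolates the compact set where the singular supports of $f \ox g$ meet the slab $\set{x + y \in \operatorname{supp}\vf}$ (your observation that this intersection is compact is the key geometric point, and it is sound), the compactly supported piece lands in $\mathcal{E}'(\bR^{2d}) \subset \sB_0'(\bR^{2d})$, and the remainder is an honest function whose $L^1$ membership follows from the power count $|x|^{\Re(\la+\mu)}(\log|x|)^{m+k}$ at infinity, integrable precisely when $\Re(\la+\mu) < -d$. The paper instead quotes Horv\'ath's Theorem~3 of~\cite{Horvath78} ---convolvability of $f$ with $R_d[r^\mu]$ holds once $(1+r^2)^{\Re\mu/2} f \in \sB_0'(\bR^d)$--- together with the convolution-algebra property of $\sB_0'$, decomposes each factor into a compactly supported part plus a locally integrable tail with power decay, and reduces the logarithmic case to the pure-power case by dominating $r^{\Re\la}\log^m(r/l) \leq r^\al$, $r^{\Re\mu}\log^k(r/l) \leq r^\bt$ with $\al+\bt < -d$. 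What your argument buys is self-containedness: no appeal to the scattered literature, and the logarithms are carried through the estimate rather than traded for powers. What the paper's route buys is generality: Horv\'ath's criterion certifies convolvability of $R_d[r^\mu]$ against a whole class of distributions (anything splitting as compact support plus an $O(|x|^{\Re\la})$ tail), which is exactly the flexibility invoked later for the nested, ``slightly more complicated convolution-like integrals'' of Sections~\ref{sec:tough-going} and~\ref{sec:graphic-arts}; your tailor-made estimate would have to be redone for each such variant.
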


\begin{proof}
The convolution algebra property takes care of the case where
$\la < -d$ and $\mu < -d$. The weaker condition $\Re(\la + \mu) < -d$
allows us to incorporate also the borderline cases $\la = -d$, which
we shall need.

Consider first the case where $m = k = 0$. Theorem~3
in~\cite{Horvath78} shows that a distribution $f$ on~$\bR^d$ is
convolvable with $R_d[r^\mu]$ if $(1 + r^2)^{\Re\mu/2} f$ lies in
$\sB_0'(\bR^d)$; this uses our earlier remark that $R_d[r^\mu]$
coincides with the meromorphic continuation extension of the
function~$r^\mu$. This sufficient condition on~$f$ is guaranteed in
turn if $f = f_0 + f_1$ where $f_0$ has compact support and $f_1$ is
locally integrable with $|f_1(x)| \leq C\,|x|^{\Re\mu}$ for
large~$|x|$.

The last statement is not obvious; the crucial lemma
of~\cite{Horvath78} shows that integrability follows from the
boundedness of the following functions:
$$
h_{s,c,p}(y) := \int_{A_c}
|x|^s \,\del_y^p \bigl( (1 + |y|^2)^{-s/2} \bigr) \,dx,
\word{where} A_c = \set{x : |x| \geq 1,\ |x + y| \leq c},
$$
which holds for any real~$s$, any $c > 0$ and derivatives $\del_y^p$
of all orders. Consequently, in the previous argument, $R_d[r^\mu]$
itself may be replaced by any distribution $g$ of the form
$g = g_0 + g_1$ where $g_0$ has compact support and $g_1$ is locally
integrable with $|g_1(x)| \leq C\,|x|^{\Re\mu}$ for large~$|x|$.

In particular, taking $f = R_d[r^\la]$ and letting $f_0$ be its
restriction to a ball centred at the origin, it follows that
$R_d[r^\la]$ and $R_d[r^\mu]$ are convolvable whenever
$r^{\Re\la}(1 + r^2)^{\Re\mu/2}$ is integrable for $r \geq 1$, which 
is true if $\Re(\la + \mu) < -d$.

For the general case, a similar decomposition may be applied to both
convolution factors. Subtracting off their restrictions to balls 
centred at the origin, we can bound the remainders thus:
$$
r^{\Re\la} \log^m \frac{r}{l} \leq r^\al,  \quad
r^{\Re\mu} \log^k \frac{r}{l} \leq r^\bt,  \word{for all} r \geq r_0.
$$
Since $\Re(\la + \mu) < -d$, we can still choose $\al,\bt$ so that
$\al + \bt < -d$ if $r_0$ is large enough. The convolvability then 
follows from integrability of $(1 + r^2)^{(\al+\bt)/2}$ over
$r \geq r_0$.
\end{proof}

The actual calculation of $R_d[r^{-d}] * R_d[r^{-d}]$ was performed by
Wagner in \cite{Wagner87}, by a simple meromorphic continuation
argument. In dimensions $d \geq 3$, and for scale $l = 1$, the result
is:
\begin{equation}
R_d[r^{-d}] * R_d[r^{-d}] = 2\,\Om_d\, R_d[r^{-d}\log r]
+ \frac{\Om_d^2}{4}\Bigl( \psi'(d/2) - \frac{\pi^2}{6} \Bigr)\,\dl(r),
\label{eq:wagnerian-opera} 
\end{equation}
with $\psi$ denoting the digamma function. By the same token,
convolutions of (renormalized) logarithmically divergent graphs, and
in particular chain graphs \textit{of any length}, are rigorously
defined and computable in massless $\phi^4_4$~theory in $x$-space. We 
shall perform a few of these convolutions later on.

The calculation may be transferred to $p$-space. Now -- contrary to an
implied assertion in~\cite{FreedmanJL92} -- the product of two
tempered distributions is not defined in general. Even so, in the
present case, the product of the Fourier transforms may be defined as
the Fourier transform of the convolution of their preimages in
$x$-space, under the same condition $\Re(\la + \mu) < -d$. In
Appendix~\ref{app:pspace-extens}, we calculate these regular $p$-space
representatives for $d = 4$.

\medskip

To conclude: maybe because the relevant information
\cite{HorvathCol74, Horvath74, OrtnerW89, Horvath78, Wagner87,
HorvathOW87} is scattered in several different languages, the powerful
mathematical framework for this, by Horv\'ath, Ortner and Wagner,
appears to be little known. So we felt justified in giving a detailed
treatment here.

\medskip

In the computation of graphs of the massless $\phi^4_4$ model up to
fourth order, one moreover finds slightly more complicated
convolution-like integrals. They will be tackled here by easy
generalizations of Horv\'ath's theory of convolution. We do not claim
that every infrared problem lurking in higher-order graphs can be
solved by these methods.

\section{Graphs}
\label{sec:tough-going}

The renormalization of multiloop graphs may be accomplished in
position space with the tools developed in Sections
\ref{sec:get-ready} and~\ref{sec:convoluted}. For a model $g\phi^4/4!$
scalar field theory on~$\bR^4$ -- widely used, e.g., in the theory of
critical exponents~\cite{KleinertSF01,ZinnJustin02} -- we perform here
the detailed comparison of Epstein--Glaser renormalization with the
differential renormalization approach, which was the subject of
extensive calculation in~\cite{FreedmanJL92}.

\medskip

We go about this as follows: first we compute the graphs of the second
and third order in the coupling constant for the (one-particle
irreducible) four-point function (respectively corresponding to one
and two loops), seemingly by brute force. Along the way we find the
scale derivatives for these graphs. Next, we exhibit the perturbation
expansion up to that order.

After that, we solve the more involved graphs of the fourth order in
the coupling constant, corresponding to three loops, for the
four-point function. We trust that the procedure to construct the
perturbation expansion up to fourth order is by then clear.

In Appendix~\ref{app:colon}, we perform similar calculations for the
two-point function, up to the same order in the coupling.

There are three groups of Feynman diagrams involved in the four-point
function:
\begin{align}
& \Bigl\{ \fish, \bikini, \trikini, \stye, \catseye \Bigr\};
\notag \\
& \biggl\{ \winecup, \duncecap, \kite, \shark \biggr\}; \quad
\biggl\{ \tetrahedron, \roll \biggr\};
\label{eq:diagram-list} 
\end{align}
depending on the external leg configurations. 

\medskip

We begin with the one-loop fish graph. Since the Euclidean
``propagator'' is given by\\
$(-4\pi^2)^{-1} r^{-2}$, its bare amplitude is of the form:
\begin{align*}
& \frac{g^2}{(4\pi^2)^2} \bigl[
\dl(x_1 - x_2)\, (x_2 - x_3)^{-4} \,\dl(x_3 - x_4)
+ \dl(x_1 - x_3)\, (x_3 - x_4)^{-4} \,\dl(x_4 - x_2)
\\
&\hspace*{3em}
+ \dl(x_1 - x_4)\, (x_4 - x_2)^{-4} \,\dl(x_2 - x_3) \bigr].
\end{align*}
We write three terms because there are three topologically distinct
configurations of the vertices. Moreover, one must divide their
contribution by the ``symmetry factor'', which counts the order of the
permutation group of the lines, with the vertices fixed. Here this
number is equal to~$2$. This gives a total \textit{weight} of~$3/2$
for the fish graph in the perturbation expansion. In this paper we do
not use these variations, so we simply compute the weights of all the
graphs we deal with by the direct method in
\cite[Chap.~14]{KleinertSF01}.

Let us moreover leave aside weights and $(4\pi^2)^{-1}$ factors until
the moment when we sum the perturbation expansions. Taking advantage
of translation invariance to label the vertices as:
$$
\begin{tikzpicture}[scale=1.6]
\coordinate (A) at (0,0) ; \coordinate (B) at (1,0) ; 
\draw (A) parabola[bend pos=0.5] bend +(0,0.25) (B) ;
\draw (A) parabola[bend pos=0.5] bend +(0,-0.25) (B) ;
\draw (A) ++(-0.2,0.2) -- (A) -- ++(-0.2,-0.2) ;
\draw (B) ++(0.2,0.2) -- (B) -- ++(0.2,-0.2) ;
\draw (A) node[above=3pt] {$0$} ;
\draw (B) node[above=3pt] {$x$} ;
\foreach \pt in {A,B} \draw (\pt) node{$\bull$} ;
\end{tikzpicture}
$$
we are left with just our old acquaintance $R_4[r^{-4}]$, with
$r = |x|$.

In what follows, we write $x^2 = x_\al x^\al$ for $x \in \bR^4$,
$x^4 = (x^2)^2$ and so on; all integrals are taken over~$\bR^4$.

\subsection{A third-order graph by convolution}
\label{ssc:bikini}

The next simplest case is the ``bikini'' graph, which can be labelled
thus, with $u$ denoting the internal vertex:
$$
\begin{tikzpicture}[scale=1.6]
\coordinate (A) at (0,0) ; \coordinate (B) at (1,0) ; 
\coordinate (C) at (2,0) ;
\draw (A) ++(-0.2,0.2) -- (A) -- ++(-0.2,-0.2) ;
\draw (A) parabola[bend pos=0.5] bend +(0,0.25) (B) ;
\draw (A) parabola[bend pos=0.5] bend +(0,-0.25) (B) ;
\draw (B) parabola[bend pos=0.5] bend +(0,0.25) (C) ;
\draw (B) parabola[bend pos=0.5] bend +(0,-0.25) (C) ;
\draw (C) ++(0.2,0.2) -- (C) -- ++(0.2,-0.2) ;
\draw (A) node[above=3pt] {$0$} ;
\draw (B) node[above=3pt] {$u$} ;
\draw (C) node[above=3pt] {$x$} ;
\foreach \pt in {A,B,C} \draw (\pt) node{$\bull$} ;
\end{tikzpicture}
$$
The rules of quantum mechanics prescribe integration over the internal
vertices in $x$-space. From the unrenormalized amplitude
$$
\int \frac{1}{u^4}\,\frac{1}{(x - u)^4} \,du
$$
(which looks formally like a convolution, though the factors are not
actually convolvable) we obtain, on replacing these factors by their
extensions, the renormalized version
\begin{equation}
\int R_4[u^{-4}]\, R_4[(x - u)^{-4}] \,du,
\label{eq:bikini-convl} 
\end{equation}
a \textit{bona fide} convolution, since, as shown in the previous
section, $R_4[r^{-4}]$ is convolvable with itself. Specializing
\eqref{eq:wagnerian-opera} to $d = 4$ and using  
$\psi'(2) - \pi^2/6 = -1$, one may conclude that
\begin{equation}
\bikini
= 4\pi^2 R_4\Bigl[ x^{-4} \log\frac{|x|}{l} \Bigr] - \pi^4\,\dl(x).
\label{eq:bikini-ren} 
\end{equation}
This looks like a log-homogeneous amplitude of bidegree $(-4,2)$. More
precisely, in the three difference variables, say $x_1 - x_2$,
$x_2 - x_3$, $x_3 - x_4$, with the obvious relabeling of the indices,
it is quasi-log-homogeneous of bidegree $(-4,0;-4,2;-4,0)$.

A peek at Eq.~\eqref{eq:rminus4-one-log} now shows that the
coefficient of $\log^2 l$ in the result \eqref{eq:bikini-ren} is equal
to~$4\pi^4$. Here we observe that ``predicting'' the coefficients of
$\log^k l$ for $k > 1$ is fairly easy. With the help of
\cite{ChryssomalakosQRV02}, which determines the primitive elements
in bialgebras of graphs,%
\footnote{Actually that reference deals with the bialgebra of rooted
trees, but \textit{cela fait rien \`a l'affaire}.}
a method recommended by Kreimer \cite{Kreimer01} was applied
in~\cite{HeidyMSc06}. To wit, primitive elements should have
\textit{vanishing} coefficients of $\log^k l$ for $k > 1$. In the
present case, the bikini graph minus the square of the fish is
primitive, and so for the coefficient of $\log^2 l$ the value
$\Om_4^2$, equal to the obtained $4\pi^4$, was predicted.

\medskip

For later use, we obtain the scale derivative:
\begin{equation}
\lddl  \Bigl( \bikini \Bigr) = -4\pi^2 \fish,
\label{eq:fewer-fish} 
\end{equation}
directly from~\eqref{eq:huevo-de-Colon} and~\eqref{eq:bikini-ren}.

\subsection{A third-order ladder graph: the winecup}
\label{ssc:wine-cup}

Next comes the winecup or ice-cream ladder graph, with vertices 
labelled as follows:
$$
\begin{tikzpicture}[scale=1.4]
\coordinate (A) at (0,0) ; \coordinate (B) at (1,0) ; 
\coordinate (C) at (0.5,-0.8) ;
\draw ($ (A)!-0.3!(C) $) -- ($ (A)!1.3!(C) $) ;
\draw ($ (B)!-0.3!(C) $) -- ($ (B)!1.3!(C) $) ;
\draw (A) parabola[bend pos=0.5] bend +(0,0.25) (B) ;
\draw (A) parabola[bend pos=0.5] bend +(0,-0.25) (B) ;
\draw (A) node[left=3pt] {$0$} ;
\draw (B) node[right=3pt] {$y$} ;
\draw (C) node[right=3pt] {$x$} ;
\foreach \pt in {A,B,C} \draw (\pt) node{$\bull$} ;
\end{tikzpicture}
$$
We denote it $\copadevino(x,y)$ for future use. The corresponding
bare amplitude is given by
$$
f(x,y) = \frac{1}{x^2 (x - y)^2 y^4} \,.
$$
Consider a ``partially regularized'' version of it, for which the
known formulas yield:
\begin{equation}
\Rbar_8 \bigl[ x^{-2} (x - y)^{-2} y^{-4} \bigr]
= - \frac{1}{2}\, x^{-2} (x - y)^{-2} 
\,\Dl\Bigl( y^{-2} \log \frac{|y|}{l} \Bigr) + \pi^2 x^{-4}\,\dl(y).
\label{eq:prematur} 
\end{equation}
The last expression indeed makes sense for all $(x,y) \neq (0,0)$.
To proceed, we largely follow~\cite{FreedmanJL92},%
\footnote{Since $\Rbar_4[y^{-4}] x^{-2} (x - y)^{-2}$ is undefined
only at the origin, the procedure~\eqref{eq:yo-yo} assuredly works,
giving rise to alternative expressions, very much like the ones
proposed by Smirnov and Zav'yalov some time ago~\cite{ZavyalovS93}.
This was the path taken in~\cite{HeidyMSc06}. We find those, however,
somewhat unwieldy.}
which invokes Green's integration-by-parts formula,
\begin{equation}
(\Dl B)\,A = (\Dl A)\,B + \del^\bt(A \del_\bt B - B \del_\bt A),
\label{eq:green-backs} 
\end{equation}
that will be rigorously justified soon, in the present context. Thus
\begin{align*}
- \frac{1}{2}\, x^{-2}(x - y)^{-2}
\,\Dl_y \Bigl( y^{-2} \log\frac{|y|}{l} \Bigr)
&= - \frac{1}{2}\, x^{-2} y^{-2} \log\frac{|y|}{l}
\,\Dl_y \bigl( (x - y)^{-2} \bigr)
+ \frac{1}{2}\, x^{-2} \,\del_y^\bt L_\bt(y; x - y) 
\\
&= 2\pi^2x^{-4} \log\frac{|x|}{l}\,\dl(x - y) 
+ \frac{1}{2} x^{-2} \,\del_y^\bt  L_\bt(y; x - y),
\end{align*}
where
\begin{equation}
L_\bt(y; x - y) := y^{-2} \log\frac{|y|}{l}\, \del^y_\bt((x - y)^{-2})
- (x - y)^{-2} \,\del^y_\bt \Bigl( y^{-2} \log\frac{|y|}{l} \Bigr)
\label{eq:total-deriv-term1} 
\end{equation}
deserves a name, since it is going to reappear often. The presence of
the $\dl(x - y)$ factor is rewarding. Now it is evident that
renormalized forms of $x^{-4}$ and $x^{-4} \log|x|$ should be used.
The only treatment required by the last term is that the derivative be
understood in the distributional sense. Thus, in the end, we have
computed:
\begin{align}
\copadevino(x,y)
&= R_8\bigl[ x^{-2} (x - y)^{-2} y^{-4} \bigr]
\notag \\
&= 2\pi^2 R_4\Bigl[ x^{-4} \log\frac{|x|}{l} \Bigr] \,\dl(x - y)
+ \pi^2 R_4[x^{-4}] \,\dl(y)
+ \frac{1}{2} x^{-2} \,\del_y^\bt L_\bt(y; x - y).
\label{eq:my-cup-overfloweth} 
\end{align}

We now carefully justify Eq.~\eqref{eq:green-backs} for this case.
Under the hypothesis $\vf(0,0) = 0$, substitute
$A(y) = (x - y)^{-2}$ and $B(y) = y^{-2}\log(|y|/l)$ there, and
compute:
\begin{align*}
& \Bigl< x^{-2} \Dl\Bigl( y^{-2} \log\frac{|y|}{l} \Bigr),
(x - y)^{-2} \vf(x,y) \Bigr> 
= \Bigl< x^{-2} y^{-2} \log\frac{|y|}{l},
\Dl_y\bigl( (x - y)^{-2} \vf(x,y) \bigr) \Bigr>
\\
&= \Bigl< x^{-2} y^{-2} \log\frac{|y|}{l} \,\Dl_y((x - y)^{-2}),
\vf(x,y) \Bigr>
+ 2 \Bigl< x^{-2} y^{-2} \log\frac{|y|}{l}\, \del^y_\bt((x - y)^{-2}),
\del_y^\bt \vf(x,y) \Bigr>
\\
&\qquad - \Bigl< x^{-2} \,\del^y_\bt \Bigl(
(x - y)^{-2} y^{-2} \log\frac{|y|}{l}\Bigr), \del_y^\bt\vf(x,y) \Bigr>
\\
&= \Bigl< x^{-2} y^{-2} \log\frac{|y|}{l} \,\Dl_y((x - y)^{-2}),
\vf(x,y) \Bigr>
+ \Bigl< x^{-2} y^{-2} \log\frac{|y|}{l}
\del^y_\bt((x - y)^{-2}), \del_y^\bt\vf(x,y) \Bigr>
\\
&\qquad - \Bigl< x^{-2} (x - y)^{-2} \,\del^y_\bt \Bigl(
y^{-2} \log\frac{|y|}{l} \Bigr), \del_y^\bt\vf(x,y) \Bigr>
\\
&= \Bigl< x^{-2}y^{-2} \log\frac{|y|}{l} \Dl_y((x - y)^{-2}),
\vf(x,y) \Bigr>
\\
&\quad - \Bigl< x^{-2}\,\del^y_\bt \Bigl( y^{-2} \log\frac{|y|}{l}
\,\del_y^\bt((x - y)^{-2}) - (x - y)^{-2} \,\del_y^\bt \Bigl(
y^{-2} \log\frac{|y|}{l} \Bigr) \Bigr), \vf(x,y) \Bigr> .
\end{align*}

Observe again that the coefficient of $\log^2 l$ was foreordained: the
two-vertex tree minus half of the square of the one-vertex tree (here
the fish) is primitive, and so for the numerical coefficient of
$\log^2 l$ in~\eqref{eq:my-cup-overfloweth} we were bound to obtain
$\half\Om_4^2 = 2\pi^4$, which is correct: 
see~\eqref{eq:rminus4-one-log}.

We turn to the scale derivative for the winecup. \textit{Prima facie}
it yields:
\begin{equation}
\lddl \! \copadevino(x,y)
= - 2\pi^2 R_4[x^{-4}] \,\dl(x - y) - 2\pi^4 \,\dl(x)\,\dl(y)
+ 2\pi^2 x^{-4} \,\dl(x - y) - 2\pi^2 x^{-4} \,\dl(y).
\label{eq:bass-scale} 
\end{equation}
The difference between the third and fourth terms above \textit{is} a
well-defined distribution, since
$$
\int \frac{\vf(x,x) - \vf(x,0)}{x^4} \,d^4x
$$
converges for any test function~$\vf$. We may reinterpret the $x^{-4}$
in~\eqref{eq:bass-scale} as $R_4[x^{-4}]$, since the corresponding
difference is still the same unique extension. So the scale derivative
becomes
\begin{align}
&-  2\pi^2 R_4[x^{-4}] \,\dl(x - y) - 2\pi^4 \,\dl(x) \,\dl(y)
+ 2\pi^2 R_4[x^{-4}] \,\dl(x - y) - 2\pi^2 R_4[x^{-4}] \,\dl(y)
\notag \\
&\quad = - 2\pi^2 \fish(x) \,\dl(y) - 2\pi^4 \,\dl(x)\,\dl(y).
\label{eq:now-you-see-it} 
\end{align}

\medskip

Let us now reflect on what we have just done, in order to set forth
our methods. Understanding and fulfilment of the fundamental equation
\eqref{eq:in-annum} is essential. The winecup graph exemplifies it
well. There formula~\eqref{eq:prematur} expresses $R[\ga]$ for
$\sV(\ga) = \{0,y\}$. The cograph~$\Ga/\ga$ is a fish:
$$
\begin{tikzpicture}[scale=1.6]
\coordinate (A) at (0,0) ; \coordinate (B) at (1,0) ; 
\draw (A) parabola[bend pos=0.5] bend +(0,0.25) (B) ;
\draw (A) parabola[bend pos=0.5] bend +(0,-0.25) (B) ;
\draw (A) ++(-0.2,0.2) -- (A) -- ++(-0.2,-0.2) ;
\draw (B) ++(0.2,0.2) -- (B) -- ++(0.2,-0.2) ;
\draw (A) node[left=5pt] {$0=y$} ;
\draw (B) node[right=5pt] {$x$} ;
\foreach \pt in {A,B} \draw (\pt) node{$\bull$} ;
\end{tikzpicture}
$$
Now, the test function $\vf$ in~\eqref{eq:in-annum} is assumed to
\textit{vanish} on the thin diagonal $x = y = 0$. Then
\eqref{eq:in-annum} simply means
\begin{align*}
\duo< R_8\bigl[ x^{-2} (x - y)^{-2} y^{-4} \bigr], \vf(x,y)>
&= \duo< \Rbar_8\bigl[ x^{-2} (x - y)^{-2} y^{-4} \bigr], \vf(x,y)>
\\
&= \duo< \Rbar_4\bigl[ y^{-4} \bigr], x^{-2} (x - y)^{-2} \vf(x,y)>
\end{align*}
when $\vf(0,0) = 0$; and this is all we need to ask.%
\footnote{As also noted in~\cite[Remark~6.1]{Duetsch14}, our use of
renormalized expressions for divergent subgraphs avoids appeal to the
forest formula entirely.}

As we shall see next, similar procedures obeying the fundamental
formula~\eqref{eq:in-annum}, canvassing help from
Section~\ref{sec:convoluted} when necessary, allow one to compute all
the fourth-order contributions to the four-point function. It will
become clear that the scale derivative is related to the hierarchy of
cographs.

\subsection{Empirical remarks on the main theorem of renormalization}
\label{ssc:stora-story}

For the four-point function $\sG^4$, we are able to consider already
the contributions at orders $g$, $g^2$, $g^3$. From now on we adopt a
standard redefinition of the coupling constant:
$$
\g = \frac{g}{16\pi^2},
$$
which eliminates many $\pi$~factors. Thus, for the first-order
contribution, introducing a global minus sign as a matter of
convention:
$$
\sG^4_{(1)}(x_1,x_2,x_3,x_4;\g) \equiv \sG^4_\cruz(x_1,x_2,x_3,x_4)
= 16\pi^2 \g \,\dl(x_1 - x_2) \,\dl(x_2 - x_3) \,\dl(x_3 - x_4).
$$

Contributions of graphs of different orders to the four-point
function come with alternating signs \cite[Chap.~13]{KleinertSF01}.
Thus the fish diagram contribution $\sG^4_\pez$ is given by
\begin{align*}
& \sG^4_\pez(x_1,x_2,x_3,x_4;\g;l) 
= - \frac{g^2}{32\pi^4} \bigl( \dl(x_1 - x_2)\, 
R_4[(x_2 - x_3)^{-4};l]\,\dl(x_3 - x_4) + 2\text{ permutations} \bigr)
\\
&\quad = - 8\g^2 \bigl(
\dl(x_1 - x_2)\, R_4[(x_2 - x_3)^{-4};l] \,\dl(x_3 - x_4)
+ \dl(x_1 - x_3)\, R_4[(x_3 - x_4)^{-4};l] \,\dl(x_4 - x_2)
\\
&\hspace*{3.8em}
+ \dl(x_1 - x_4)\, R_4[(x_4 - x_2)^{-4};l] \,\dl(x_2 - x_3) \bigr).
\end{align*}
The practical rule to go from the scale derivatives of the graphs as
we have calculated them to their actual contributions to the
four-point function is simple: multiply the coefficient of the scale
derivative by $-\bar g/\pi^2$ raised to a power equal to the
difference in the number of vertices, and also by the relative weight,
for the diagrams in question.

A key point here, harking back to~\eqref{eq:spiny-scale} and recalling
that the fish has weight~$\frac{3}{2}$, is that
\begin{equation}
\lddl \sG^4_\pez = 3\g\,\sG^4_\cruz .
\label{eq:fish-scale} 
\end{equation}

If we now define the second-order approximation
$\sG^4_{(2)} := \sG^4_\cruz + \sG^4_\pez$, we find that
\begin{align*}
& \sG^4_{(2)}(x_1,x_2,x_3,x_4;\g;l)
- \sG^4_{(2)}(x_1,x_2,x_3,x_4;\g;l')
\\
&\qquad = 48\pi^2 \g^2\,\log\frac{l}{l'}\,
\dl(x_1 - x_2) \,\dl(x_2 - x_3) \,\dl(x_3 - x_4).
\end{align*}
This of course means that
\begin{align*}
\sG^4_{(2)}(x_1,x_2,x_3,x_4;\g;l)
&= \sG^4_{(2)}\bigl(x_1,x_2,x_3,x_4;\Gbar_{(2)}^{\,ll'}(\g);l'\bigr),
\\
\word{where}
\Gbar^{\,ll'}_{(2)}(\g) &= \g + 3 \log\frac{l}{l'}\,\g^2 + O(\g^3).
\end{align*}

The bikini graph has a weight of~$3/4$ in the perturbation expansion.
We obtain
\begin{align*}
\sG^4_\sosten(x_1,x_2,x_3,x_4;\g;l) 
&= 16\,\g^3 \,\dl(x_1 - x_2) \biggl( R_4\biggl[ (x_2 - x_3)^{-4} 
\log\frac{|x_2 - x_3|}{l} \biggr]
\\
&\qquad - \frac{\pi^2}4 \,\dl(x_2 - x_3) \biggr) \,\dl(x_3 - x_4)
+ 2 \text{ permutations}.
\end{align*}
Here the two permutations have the same structure as those of the 
fish graph. Therefore,
\begin{equation}
\lddl \sG^4_\sosten = 2\g\,\sG^4_\pez ,
\label{eq:bikini-scale} 
\end{equation}
coming from~\eqref{eq:fewer-fish} when all factors have been taken
into account, according to the rule explained above.

In the third-order approximation,
$$
\sG^4_{(3)}
:= \sG^4_\cruz + \sG^4_\pez + \sG^4_\sosten + \sG^4_\copadevino,
$$
we examine first the difference
$$
\sG^4_{\sosten}(x_1,x_2,x_3,x_4;\g;l) 
- \sG^4_{\sosten} \bigl( x_1,x_2,x_3,x_4;\g;l' \bigr).
$$
{}From~\eqref{eq:rminus4-one-log} one sees that
$$
R_4 \Bigl[ r^{-4} \log \frac{r}{l} \Bigr]
- R_4 \Bigl[ r^{-4} \log \frac{r}{l'} \Bigr]
= \pi^2(\log^2 l - \log^2 l')\,\dl(r) 
- \log \frac{l}{l'}\, R_4[r^{-4};1],
$$
so the difference may be rewritten as
\begin{align}
& 48\,\g^3\pi^2 \bigl( \log^2 l - \log^2 l' \bigr) 
\,\dl(x_1 - x_2) \,\dl(x_2 - x_3) \,\dl(x_3 - x_4)
\nonumber \\
&\quad
-16\,\g^3 \log\frac{l}{l'} \bigl(
\dl(x_1 - x_2)\, R_4[(x_2 - x_3)^{-4};1] \,\dl(x_3 - x_4) 
+ 2 \text{ permutations} \bigr).
\label{eq:cup-sizes} 
\end{align}

The winecup graph enters with weight~$3$ in the perturbation
expansion. In detail:
\begin{align}
& \sG^{(4)}_\copadevino(x_1,x_2,x_3,x_4;\g;l)
= 16\g^3 \,\dl(x_1 - x_2) \biggl( R_4\biggl[ (x_2 - x_4)^{-4}
\log\frac{|x_2 - x_4|}{l} \biggr] \,\dl(x_2 - x_3)
\nonumber \\
&\ + \frac{1}{2} R_4\bigl[ (x_2 - x_4)^{-4};l \bigr] \dl(x_3 - x_4)
+ \frac{(x_2 - x_4)^{-2}}{4\pi^2} \,\del_{x_3}^\bt \biggl(
(x_3 - x_4)^{-2} \log\frac{|x_3 - x_4|}{l}
\,\del_{x_3}^\bt(x_2 - x_3)^{-2}
\nonumber \\
&\ - (x_2 - x_3)^{-2} \,\del_{x_3}^\bt \Bigl(
(x_3 - x_4)^{-2} \log \frac{|x_3 - x_4|}{l} \Bigr) \biggr)
+ R_4\biggl[ (x_2 - x_3)^{-4} \log\frac{|x_2 - x_3|}{l} \biggr]
\,\dl(x_2 - x_4)
\nonumber \\
&\ + \frac{1}{2} R_4\bigl[ (x_2 - x_3)^{-4};l \bigr] \dl(x_3 - x_4)
+ \frac{(x_2 - x_3)^{-2}}{4\pi^2} \,\del_{x_4}^\bt \biggl(
(x_3 - x_4)^{-2} \log\frac{|x_3 - x_4|}{l} 
\,\del_{x_4}^\bt(x_2 - x_4)^{-2}
\nonumber \\
&\ - (x_2 - x_4)^{-2} \,\del_{x_4}^\bt \Bigl(
(x_3 - x_4)^{-2} \log \frac{|x_3 - x_4|}{l} \Bigr) \biggr) \biggr)
+ 2 \text{ permutations of each}.
\label{eq:wine-tasting} 
\end{align}
The scale derivative can be obtained either
from~\eqref{eq:now-you-see-it} by applying the conversion rule, or now
directly from \eqref{eq:wine-tasting}:
\begin{equation}
\lddl \sG^4_\copadevino = 4\g\,\sG^4_\pez - 6\g^2\,\sG^4_\cruz .
\label{eq:winecup-scale} 
\end{equation}

The difference between the winecup expansion~\eqref{eq:wine-tasting} 
at two scales has several contributions. From the
$R_4\bigl[ r^{-4} \log(r/l) \bigr]$ terms we collect
$$
-16\,\g^3 \log\frac{l}{l'} \bigl( \dl(x_1 - x_2)\,
R_4[(x_2 - x_4)^{-4};1] \,\dl(x_2 - x_3)
+ \dl(x_1 - x_2)\, R_4[(x_2 - x_3)^{-4};1] \,\dl(x_2 - x_4)\bigr),
$$
plus its two permutations. None of those is of the same type as those
coming from~\eqref{eq:cup-sizes}. Nevertheless, from the divergence
part of~\eqref{eq:wine-tasting} we recover
$$
16\,\g^3 \log\frac{l}{l'} \biggl( \dl(x_1 - x_2)
R_4\bigl[ (x_2 - x_4)^{-4};1 \bigr] \dl(x_2 - x_3) \biggr),
$$
and similar terms, \textit{cancelling all} of the previous terms. The 
divergence part also produces terms like the second line
of~\eqref{eq:cup-sizes}, with a doubled coefficient. The remaining
summand $8\g^3 \,\dl(x_1 - x_2) R_4\bigl[ (x_2 - x_4)^{-4};l \bigr] 
\,\dl(x_3 - x_4)$, and its permutations, generate
$$
-96\pi^2\g^3 \log\frac{l}{l'}\,
\dl(x_1 - x_2) \,\dl(x_2 - x_3) \,\dl(x_3 - x_4).
$$

Putting together all the contributions up to third order, we arrive at
\begin{align*}
& \sG^4_{(3)}(x_1,x_2,x_3,x_4;\g;l) 
- \sG^4_{(3)}(x_1,x_2,x_3,x_4;\bar g;l'\,)
\\
&\quad = 48\pi^2 \g^2\, \log\frac{l}{l'} \,\dl(x_1 - x_2)
\,\dl(x_2 - x_3) \,\dl(x_3 - x_4)
\\
&\qquad - 144\bar g^3 \log\frac{l}{l'} \biggl( \dl(x_1 - x_2)
R_4\bigl[ (x_2 - x_4)^{-4};1 \bigr] \,\dl(x_2 - x_3) \biggr)
\\
&\qquad -96\pi^2\g^3 \log\frac{l}{l'}\, \dl(x_1 - x_2)
\,\dl(x_2 - x_3) \,\dl(x_3 - x_4)
\\
&\qquad + 144\,\g^3\pi^2 \bigl( \log^2 l - \log^2 l' \bigr)
\,\dl(x_1 - x_2) \,\dl(x_2 - x_3) \,\dl(x_3 - x_4).
\end{align*}
Now, we look for a coefficient $\al$ in
$$
\Gbar^{\,ll'}_{(3)}(\g)
= \g + 3 \log\frac{l}{l'}\,\g^2 + \al\,\g^3 + O(\g^4),
$$
such that 
\begin{equation}
\sG^4_{(3)}(x_1,x_2,x_3,x_4;\g;l)
= \sG^4_{(3)}\bigl( x_1,x_2,x_3,x_4;\Gbar_{(3)}^{\,ll'}(\g);l' \bigr)
+ O(\g^4).
\label{eq:main-thm-three} 
\end{equation}
We see that the term in $\bar g^2$ does not need revisiting. Of
course, general theory ensures that. Next,
\begin{align*}
& \sG^4_{(3)}\bigl(x_1,x_2,x_3,x_4;\Gbar_{(3)}^{\,ll'}(\g);l'\bigr) 
\\
&\quad = \sG^4_{(3)}(x_1,x_2,x_3,x_4;\g;l)
+ 16\pi^2\al\bar g^3 \,\dl(x_1 - x_2)\,\dl(x_2 - x_3)\,\dl(x_3 - x_4)
\\
&\qquad - 144\,\g^3\pi^2 (\log^2 l - \log^2 l'\,) \,\dl(x_1 - x_2)
\,\dl(x_2 - x_3) \,\dl(x_3 - x_4)
\\
&\qquad + 96\pi^2\g^3 \log\frac{l}{l'} \,\dl(x_1 - x_2)
\,\dl(x_2 - x_3) \,\dl(x_3 - x_4) + O(\g^4).
\end{align*}
Therefore, the relation~\eqref{eq:main-thm-three} is verified by
taking
$$
\Gbar^{\,ll'}_{(3)}(\g)
= \g + 3 \log\frac{l}{l'}\,\g^2 + \biggl( 9\,(\log^2 l - \log^2 l'\,)
- 6 \log\frac{l}{l'}  \biggr) \,\g^3  + O(\g^4).
$$

At this very humble level, this illustrates the Popineau--Stora ``main
theorem of renormalization'' \cite{PopineauS82} -- as applied to the
effective action. To wit, there exists a formal power series
$\Gbar(\g)$, tangent to the identity, that effects the change between
any two renormalization recipes.%
\footnote{When one allows $\g$ to become a test function $g(x)$, that
series is local in the coordinates, although it may depend on the
derivatives of~$g(x)$.}
Actually, there is more to the theorem than was allowed 
in~\cite{PopineauS82}. One can write
$$
\sG^4_{(n)}(x_1,x_2,x_3,x_4;\g;l)
= \sG^4_{(n)}\bigl(x_1,x_2,x_3,x_4;\Gbar_{(N-1)}^{\,ll'}(\g);l'\bigr)
+ O(\g^{n+1}),
$$
for any $n \leq N - 1$. Here $\Gbar_{(N-1)}^{\,ll'}$ need only be
taken up to order~$n$. Let
$$
\Gbar_{(N-1)}^{\,ll'}(\g) = \g + H_2^{\,ll'}(\g) + H_3^{\,ll'}(\g)
+\cdots+ H_{N-1}^{\,ll'}(\g) =: \g + H_{(N-1)}^{\,ll'}(\g).
$$
Here each $H_n^{\,ll'}$ comes from a distribution with support on
the corresponding thin diagonal \textit{exclusively}. Then
$$
\Gbar_{(N)}^{\,ll'} = \Gbar_{(N-1)}^{\,ll'}
+ H_{(N)}^{\,ll'}\bigl( \Gbar_{(N-1)}^{\,ll'} \bigr),
$$
where only terms up to order $N$ in the last expansion need be taken.
One may let $N \upto \infty$, obtaining the ``tautological'' identity
$$
\Gbar^{\,ll'} = \id + H\bigl( \Gbar^{\,ll'} \bigr).
$$
This is what Stora understands by the Bogoliubov recursion relation
for the coupling constant -- apparently a deeper fact than the
Bogoliubov recursion for the graphs~\cite{Stora08,RaymondCERNFeb13}.

\medskip

For future reference, we report here the weight factors of the graphs
contributing to $\sG^4$ at order~$g^4$. For the sets:
\begin{align*}
& \biggl\{ \trikini, \stye, \catseye \biggr\}; \quad
\biggl\{ \duncecap, \shark, \kite \biggr\};
\\
& \biggl\{ \tetrahedron, \roll \biggr\},
\end{align*}
these numbers are respectively given by
$\{\frac{3}{8}, \frac{1}{2}, \frac{3}{4}\}$; 
$\{\frac{3}{2}, \frac{3}{2}, 6\}$; $\{1, \frac{3}{2}\}$.

\section{More graphs}
\label{sec:graphic-arts}

In this section, we consider each one of the eight graphs with four
vertices required to compute the four-point function. We discuss
first the three graphs with two external vertices, then those three
with three external vertices, and finally those two with four external
vertices. The last of these do not require convolution-like
operations.

\subsection{The trikini}
\label{ssc:three-eyed}

The ``trikini'' is a fourth-order, three-loop chain graph, that is a
convolution cube. The quickest method is to pass to multiplication in
$p$-space, using~\eqref{eq:log-roll} from 
Appendix~\ref{app:pspace-extens}:
\begin{align}
\trikini &= \bigl( R_4[x^{-4}] \bigr)^{*3} 
\notag \\
&= 12\pi^4 R_4\Bigl[ x^{-4} \log^2\frac{|x|}{l} \Bigr]
- 3\pi^4 R_4[r^{-4}] + (4\zeta(3) - 2)\pi^6 \,\dl(x).
\label{eq:golden-braid} 
\end{align}
In general, the amplitude of a chain graph with $(n + 1)$ vertices in
$p$-space is given by
\begin{equation}
\pi^{2n} \biggl( 1 - 2\log\frac{|p|}{\La} \biggr)^n
\label{eq:le-canard-enchaine} 
\end{equation}
where $\La = 2e^{-\ga}/l$. The result is readily transferable to
$x$-space, using the formulas of Appendix~\ref{app:pspace-extens} to
invert the Fourier transforms.

We note that with the differential renormalization method
of~\cite{FreedmanJL92} only $\bigl( 2\log(|p|/\La) \bigr)^n$ is
computed. The problem is compounded by a mistake in their Fourier
transform formula, whose origin is dealt with in the Appendix.

We easily compute as usual the scale derivative:
\begin{align*}
\lddl  \trikini &= -6\pi^2 \bikini,
\word{translating into}
\\
\lddl \sG^4_\trenza &= 3\g\,\sG^4_\sosten,
\end{align*}
since the weight of the $\trikini$ graph is half the weight of the
$\bikini$ graph. In general, the scale derivative of the contribution
to $\sG^4$ of a chain graph with $n$ bubbles equals $n\g$~times
that of the graph with $(n - 1)$ bubbles. Indeed, it follows at once 
from~\eqref{eq:spiny-scale} or alternatively 
from~\eqref{eq:le-canard-enchaine} that 
$$
\lddl \bigl( R_4[x^{-4}] \bigr)^{*n}
= -2n\pi^2 \bigl( R_4[x^{-4}] \bigr)^{*(n-1)},
$$
and the aforementioned practical rule gives the result for the
contributions to~$\sG^4$.

\subsection{The stye}
\label{ssc:stye}

The ``partially renormalized'' amplitude for the stye diagram, 
labelled as follows:
$$
\begin{tikzpicture}[scale=1.5]
\coordinate (A) at (0,0) ; \coordinate (B) at (2,0) ; 
\coordinate (C) at (0.7,0.36) ; \coordinate (D) at (1.3,0.36) ; 
\coordinate (E) at ($ (C)!0.5!(D) $) ; 
\draw (A) ++(-0.2,0.2) -- (A) -- ++(-0.2,-0.2) ;
\draw (A) parabola[bend pos=0.5] bend +(0,0.4) (B) ;
\draw (A) parabola[bend pos=0.5] bend +(0,-0.4) (B) ;
\draw (B) ++(0.2,0.2) -- (B) -- ++(0.2,-0.2) ;
\draw (E) circle(0.3cm) ;
\draw (A) node[above=2pt] {$0$} ;
\draw (B) node[above=2pt] {$x$} ;
\draw (C) node[above left] {$v$} ;
\draw (D) node[above right] {$w$} ;
\foreach \pt in {A,B,C,D} \draw (\pt) node{$\bull$} ;
\end{tikzpicture}
$$
is of the form
$$
x^{-2} \iint \Rbar_{12} \bigl[
v^{-2} (v - w)^{-6} (w - x)^{-2} \bigr] \,dv\,dw
= x^{-2} \iint \Rbar_{12} \bigl[
v^{-2} u^{-6} (v - u - x)^{-2} \bigr] \,du\,dv.
$$
Notice that this is a nested convolution; the inner integral is of the
form $R_4[r^{-6}] * r^{-2}$, which \textit{exists} by the theory of
Section~\ref{sec:convoluted}. On account of~\eqref{eq:ren-hex}, the
integral becomes
$$
-\frac{1}{16}\, x^{-2} \iint v^{-2} (v - u - x)^{-2} \biggl(
\Dl^2 \Bigl( u^{-2} \log\frac{|u|}{l} \Bigr)  - 5\pi^2 \,\Dl\dl(u) 
\biggr)  \,dv\,du.
$$
On integrating by parts with~\eqref{eq:green-backs} and dropping total
derivatives in the integrals over internal vertices, we then obtain
\begin{align*}
& \frac{\pi^2}{4}\, x^{-2} \iint v^{-2} \biggl(
\Dl \Bigl( u^{-2} \log\frac{|u|}{l} \Bigr) - 5\pi^2 \,\dl(u) \biggr)
\,\dl(v - u - x) \,dv\,du
\\
&\quad = \frac{\pi^2}{4}\, x^{-2} \int (u + x)^{-2} \biggl(
\Dl\Bigl( u^{-2} \log\frac{|u|}{l} \Bigr)
- 5\pi^2 \,\dl(u) \biggr) \,du
\\
&\quad = - \pi^4 x^{-2} \int u^{-2} \log\frac{|u|}{l} \,\dl(u + x)\,du
- \frac{5\pi^4}{4}\, x^{-4}
= - \pi^4 x^{-4} \log\frac{|x|}{l} - \frac{5\pi^4}{4}\, x^{-4}.
\end{align*}

The fully renormalized amplitude for the stye graph is then simply
given by
$$
\stye = - \pi^4 R_4\Bigl[ x^{-4} \log\frac{|x|}{l} \Bigr]
- \frac{5\pi^4}{4}\, R_4[x^{-4}].
$$

Therefore
\begin{align*}
\lddl \stye &= \pi^4 \fish + \frac{5\pi^6}{2} \cross,
\word{translating into}
\\
\lddl \sG^4_\orzuelo
&= \frac{1}{3}\,\g^2\,\sG^4_\pez - \frac{5}{4}\, \g^3\,\sG^4_\cruz.
\end{align*}

\subsection{The cat's eye}
\label{ssc:cats-eye}

The ``cat's eye'' graph, which we label as follows:
$$
\begin{tikzpicture}[scale=1.2]
\coordinate (A) at (0,0) ; \coordinate (B) at (2,0) ; 
\coordinate (C) at (1,0.4) ; \coordinate (D) at (1,-0.4) ; 
\draw (A) ++(-0.2,0.2) -- (A) -- ++(-0.2,-0.2) ;
\draw (A) parabola[bend pos=0.5] bend +(0,0.4) (B) ;
\draw (A) parabola[bend pos=0.5] bend +(0,-0.4) (B) ;
\begin{scope}[rotate=90]
\draw (C) parabola[bend pos=0.5] bend +(0,0.2) (D) ;
\draw (C) parabola[bend pos=0.5] bend +(0,-0.2) (D) ;
\end{scope}
\draw (B) ++(0.2,0.2) -- (B) -- ++(0.2,-0.2) ;
\draw (A) node[left=3pt] {$0$} ;
\draw (B) node[right=3pt] {$x$} ;
\draw (C) node[above=2pt] {$u$} ;
\draw (D) node[below=2pt] {$v$} ;
\foreach \pt in {A,B,C,D} \draw (\pt) node{$\bull$} ;
\end{tikzpicture}
$$
is sometimes counted as an ``overlapping divergence'' in $p$-space.
But for renormalization on configuration space, this problem is more
apparent than real: ``the external points can be kept separated until
the regularization of subdivergences is accomplished''
\cite[Sect.~3.3]{FreedmanJL92}. For the same reasons, when dealing
with this graph we find it unnecessary to bring in partitions of unity
for overlapping divergences \cite[Example~4.16]{DuetschFKR13}.

There are two internal vertices; its ``bare'' amplitude is of the form
$$
f(x)
= \iint u^{-2} v^{-2} (u - x)^{-2} (v - x)^{-2} (u - v)^{-4} \,du\,dv.
$$

A first natural rewriting is
\begin{align*}
f(x) &\renormto
- \frac{1}{2} \iint u^{-2} v^{-2} (u - x)^{-2} (v - x)^{-2} 
 \,\Dl_u \Bigl( (u - v)^{-2} \log\frac{|u - v|}{l} \Bigr) \,du\,dv
\\
&\qquad\ + \pi^2 \iint u^{-2} v^{-2} (u - x)^{-2} (v - x)^{-2}
\,\dl(u - v) \,du\,dv.
\end{align*}
{}From now on, we shall use the notation $\renormto$ to denote a single
step in a sequence of one of more partial renormalizations, by
replacements
$r^{-d-2m}\log^k(r/l) \renormto R_d\bigl[r^{-d-2m}\log^k(r/l)\bigr]$.
That is \textit{not yet} $\Rbar_4[f(x)]$, since there are other
untreated subdivergences in the diagram. The second term in this
expression, however, is just the bikini convolution integral, which
becomes
$$
\pi^2 \int v^{-4} (v - x)^{-4} \,dv
\renormto 4\pi^4 R_4\Bigl[ x^{-4} \log\frac{|x|}{l} \Bigr]
- \pi^6\,\dl(x).
$$

The first term can be simplified by an integration by parts, to get
\begin{align}
& 2\pi^2 x^{-2} \int (v - x)^{-2} v^{-4} \log\frac{|v|}{l} \,dv
+ 2\pi^2 x^{-2} \int v^{-2} (v - x)^{-4} \log\frac{|v-x|}{l} \,dv
\nonumber \\
&\quad - \iint v^{-2} (v - x)^{-2} \,\del_u^\bt(u^{-2})
\,\del^u_\bt((u - x)^{-2})\, (u - v)^{-2} \log\frac{|u - v|}{l}
\,du\,dv.
\label{eq:cats-paw} 
\end{align}
The first two of these integrals are equal. To deal with the
$v \sim 0$ region, we proceed as before:
\begin{align*}
& 4\pi^2 x^{-2} \int (v - x)^{-2} v^{-4} \log\frac{|v|}{l} \,dv
\\
&\quad \renormto - \pi^2 x^{-2} \int (v - x)^{-2} \biggl(
\Dl\Bigl( v^{-2} \log^2 \frac{|v|}{l} \Bigr) 
+ \Dl\Bigl( v^{-2} \log \frac{|v|}{l} \Bigr)
- 2\pi^2\,\dl(v) \biggr) \,dv
\\
&\quad \renormto 4\pi^4 R_4\Bigl[ x^{-4} \log^2\frac{|x|}{l} \Bigr]
+ 4\pi^4 R_4\Bigl[ x^{-4} \log \frac{|x|}{l} \Bigr]
+ 2\pi^4 R_4[x^{-4}].
\end{align*}

The third term in~\eqref{eq:cats-paw} is only divergent
\textit{overall}, at $x = 0$. After rescaling the integrand by
$u \mapsto |x|s$, $v \mapsto |x|t$, and setting $x = |x|\om$, this
term takes the form
$$
- c_1 \pi^4\, x^{-4} \log \frac{|x|}{l} - c_2 \pi^4\, x^{-4},
$$
where
\begin{align*}
c_1 &= \frac{1}{\pi^4} \iint t^{-2} (t - \om)^{-2} \,\del_s^\bt(s^{-2})
\,\del^s_\bt((s - \om)^{-2})\, (s - t)^{-2} \,ds\,dt = 4,
\\
c_2 &= \frac{1}{\pi^4} \iint t^{-2} (t - \om)^{-2} \,\del_s^\bt(s^{-2})
\,\del^s_\bt((s - \om)^{-2})\, (s - t)^{-2} \log\frac{|s - t|}{l}
\,ds\,dt = 4.
\end{align*}
These are computed straightforwardly, if tediously, by use of
Gegenbauer polynomials~\cite{FreedmanJL92}. Thus, this third term
yields
$$
- 4\pi^4 R_4\Bigl[ x^{-4} \log \frac{|x|}{l} \bigr]
- 4\pi^4 R_4[x^{-4}].
$$

Putting it all together, we arrive at
$$
\catseye
= 4\pi^4 R_4\Bigl[ x^{-4} \log^2\frac{|x|}{l} \Bigr]
+ 4\pi^4 R_4\Bigl[ x^{-4} \log\frac{|x|}{l} \Bigr]
- 2\pi^4 R_4[x^{-4}] - \pi^6\,\dl(x).
$$
With that, we obtain
\begin{align*}
\lddl  \catseye
&= -8\pi^4 R_4\Bigl[ x^{-4} \log\frac{|x|}{l} \Bigr]
- 4\pi^4 R_4[x^{-4}] + 4\pi^6\,\dl(x)
\\
&= -2\pi^2 \bikini - 4\pi^4 \fish + 2\pi^6 \,\dl(x);
\word{translating into}
\\
\lddl  \sG^4_\ojodegato
&= 2\g\,\sG^4_\sosten - 2\g^2\,\sG^4_\pez
- \frac{3}{2}\,\g^3\,\sG^4_\cruz.
\end{align*}

\subsection{The duncecap}
\label{ssc:the-dunciad}

Consider next the ``duncecap'', which contains a bikini subgraph:
$$
\begin{tikzpicture}
\coordinate (A) at (0,0) ; \coordinate (B) at (1,0) ; 
\coordinate (C) at (2,0) ; \coordinate (D) at (1,1.2) ; 
\draw ($ (A)!-0.25!(D) $) -- ($ (A)!1.25!(D) $) ;
\draw ($ (C)!-0.25!(D) $) -- ($ (C)!1.25!(D) $) ;
\draw (A) parabola[bend pos=0.5] bend +(0,0.25) (B) ;
\draw (A) parabola[bend pos=0.5] bend +(0,-0.25) (B) ;
\draw (B) parabola[bend pos=0.5] bend +(0,0.25) (C) ;
\draw (B) parabola[bend pos=0.5] bend +(0,-0.25) (C) ;
\draw (A) node[left=3pt] {$0$} ;
\draw (B) node[above=2pt] {$u$} ;
\draw (C) node[right=3pt] {$y$} ;
\draw (D) node[right=3pt] {$x$} ;
\foreach \pt in {A,B,C,D} \draw (\pt) node{$\bull$} ;
\end{tikzpicture}
$$
The unrenormalized amplitude is given by
$$
f(x,y) = x^{-2} (x - y)^{-2} (y^{-4})^{*2}.
$$
Once again, we may partially regularize this, 
using~\eqref{eq:bikini-ren}, to get
\begin{align*}
\Rbar_8 \bigl[ x^{-2} (x - y)^{-2} (y^{-4})^{*2} \bigr]
&= \pi^2 x^{-2} (x - y)^{-2} \biggl(
4 R_4\Bigl[ y^{-4} \log\frac{|y|}{l} \Bigr] - \pi^2\,\dl(y) \biggr),
\end{align*}
which is well defined for $(x,y) \neq (0,0)$. We want to apply the
integration by parts formula~\eqref{eq:green-backs} here, and we
invoke \eqref{eq:rminus4-one-log} for the purpose:
\begin{align*}
\Rbar_8 \bigl[ x^{-2} (x - y)^{-2} (y^{-4})^{*2} \bigr]
&= -\pi^2 x^{-2} (x - y)^{-2}
\,\Dl\Bigl( y^{-2} \log^2 \frac{|y|}{l} \Bigr)
\\
&\quad - \pi^2 x^{-2} (x - y)^{-2}
\,\Dl\Bigl( y^{-2} \log\frac{|y|}{l} \Bigr) + \pi^4 x^{-4} \,\dl(y),
\end{align*}
which leads easily to the renormalized version:
\begin{align*}
R_8 \bigl[ x^{-2} (x - y)^{-2} (y^{-4})^{*2} \bigr]
&= 4\pi^4\, R_4 \Bigl[ x^{-4} \log^2 \frac{|x|}{l}
+ x^{-4} \log \frac{|x|}{l} \Bigr] \,\dl(x - y)
+ \pi^4 R_4[x^{-4}] \,\dl(y)
\\
&\quad + \pi^2 x^{-2} \del_y^\bt \bigl(
L_\bt(y; x - y) + M_\bt(y; x - y) \bigr),
\end{align*}
where we call upon \eqref{eq:total-deriv-term1} and a companion
formula:
\begin{equation}
M_\bt(y; x - y)
:= y^{-2} \log^2 \frac{|y|}{l}\, \del^y_\bt((x - y)^{-2})
- (x - y)^{-2} \,\del^y_\bt \Bigl( y^{-2} \log^2 \frac{|y|}{l} \Bigr).
\label{eq:total-deriv-term2} 
\end{equation}

The Green formula \eqref{eq:green-backs} easily yields
$$
\lddl \bigl[ \del_y^\bt L_\bt(y; x - y) \bigr]
= 4\pi^2 x^{-2} \bigl( \dl(x - y) - \dl(y) \bigr),
$$
and clearly $\lddl \bigl[ \del_y^\bt M_\bt(y; x - y) \bigr]
= - 2 \,\del_y^\bt L_\bt(y; x - y)$. From this, we obtain for the
scale derivative, in the same way as for the winecup:
\begin{align*}
\lddl  \duncecap
&= - 8\pi^4 R_4\Bigl[ x^{-4} \log \frac{|x|}{l} \Bigr]\,\dl(x - y)
- 4\pi^4 R_4[x^{-4}]\,\dl(x - y) - 2\pi^6 \,\dl(x)\,\dl(y)
\\
&\quad + 4\pi^4 R[x^{-4}]\,\dl(x - y) - 4\pi^4 R[x^{-4}]\,\dl(y)
- 2\pi^2\, x^{-2} \,\del_y^\bt L_\bt(y; x - y)
\\
&= - 4\pi^2 \copadevino(x,y) - 2\pi^6 \,\dl(x)\,\dl(y); 
\end{align*}
yielding, after the usual manipulation,
$$
\lddl \sG^4_\casquete = 2\g\,\sG^4_\copadevino + 3\g^3\,\sG^4_\cruz.
$$

\subsection{The kite}
\label{ssc:kite}

The \textit{kite} graph has the following structure, showing an
internal winecup:
$$
\begin{tikzpicture}[scale=1.5]
\coordinate (A) at (1,1) ; \coordinate (B) at (2,1) ; 
\coordinate (C) at (1.5,0) ; \coordinate (D) at (0,0.5) ; 
\draw ($ (C)!1.3!(B) $) -- (C) -- (A) -- ($ (A)!1.3!(D) $) ;
\draw ($ (C)!1.2!(D) $) -- ($ (D)!1.2!(C) $) ;
\draw (A) parabola[bend pos=0.5] bend +(0,0.2) (B) ;
\draw (A) parabola[bend pos=0.5] bend +(0,-0.2) (B) ;
\draw (A) node[above left] {$u$} ;
\draw (B) node[right=2pt] {$x$} ;
\draw (C) node[below=2pt] {$y$} ;
\draw (D) node[above=2pt] {$0$} ;
\foreach \pt in {A,B,C,D} \draw (\pt) node{$\bull$} ;
\end{tikzpicture}
$$
The labelling is that recommended by~\cite{FreedmanJL92}.

One may anticipate the coefficients of logarithmic degrees $3$ and~$2$
in the final result, by invoking again \cite{ChryssomalakosQRV02},
particularly its Section~4.2, and \cite{Kreimer01, HeidyMSc06}. We
focus on the third-degree coefficients. Notice that the kite is a
rooted-tree graph, actually a ``stick'', with the fish as unique
``decoration''. Subtracting from it the disconnected juxtaposition of
the winecup and the fish graph, and adding one third of the product of
three fishes, one obtains a primitive graph in the cocommutative
bialgebra of sticks. For this combination the dilation coefficient of
$\log^3 l$ must vanish. We obtain then for this graph the coefficient
$2\pi^4 \x 2\pi^2 - 8\pi^6/3 = 4\pi^6/3$.

Starting from the bare amplitude
$$
f(x,y) = y^{-2} (x - y)^{-2} \int u^{-2}(u - y)^{-2}(u - x)^{-4} \,du,
$$
partial renormalization gives
\begin{align*}
& f(x,y) \renormto y^{-2} (x - y)^{-2} \int u^{-2} (u - y)^{-2} 
R_4\bigl[ (u - x)^{-4} \bigr] \,du
\\
&\quad = - \frac{1}{2} y^{-2} (x - y)^{-2} \int u^{-2} (u - y)^{-2} 
\,\Dl_x\Bigl( (x - u)^{-2} \log\frac{|x - u|}{l} \Bigr) \,du
+ \pi^2 x^{-2} y^{-2} (x - y)^{-4}.
\end{align*}
The second summand is a winecup: namely, the cograph obtained on
contracting the $u$--$x$ fish. It contributes
\begin{align}
& \pi^2 x^{-2} y^{-2}\, R_4\bigl[ (x - y)^{-4} \bigr]
= - \frac{\pi^2}{2} x^{-2} y^{-2} 
\,\Dl_y \Bigl( (x - y)^{-2} \log \frac{|x - y|}{l} \Bigr)
+ \pi^4 x^{-4} \,\dl(x - y)
\nonumber \\
&\qquad = 2 \pi^4 x^{-4} \log \frac{|x|}{l} \,\dl(y)
+ \pi^4 x^{-4} \,\dl(x - y) 
- \frac{\pi^2}{2}\, x^{-2} \,\del_y^\bt L_\bt(x - y; y)
\nonumber \\
&\renormto 2 \pi^4 R_4\Bigl[ x^{-4} \log\frac{|x|}{l} \Bigr] \,\dl(y)
+ \pi^4 R_4[x^{-4}] \,\dl(x - y) 
- \frac{\pi^2}{2}\, x^{-2}  \,\del_y^\bt L_\bt(x - y; y).
\label{eq:kite-first} 
\end{align}

There remains the term
$- \frac{1}{2} y^{-2} (x - y)^{-2} \,\Dl_x \int u^{-2} (u - y)^{-2} 
(x - u)^{-2} \log\frac{|x - u|}{l} \,du$. Following a suggestion
of~\cite{FreedmanJL92}, we may write
\begin{equation}
\log\frac{|x - u|}{l}
= \log\frac{|x - u|}{|y - u|} + \log\frac{|y - u|}{l} \,.
\label{eq:log-splitting} 
\end{equation}
The second summand above contributes
\begin{align}
& - \frac{1}{2}\, y^{-2} (x - y)^{-2} \int u^{-2} (u - y)^{-2} 
\,\Dl_x \bigl( (x - u)^{-2} \bigr) \log\frac{|y - u|}{l} \,du
\nonumber \\
&\qquad = 2\pi^2 x^{-2} y^{-2} (x - y)^{-4}  \log\frac{|x - y|}{l}
\nonumber \\
&\renormto - \frac{\pi^2}{2} x^{-2} y^{-2}
\,\Dl_x \Bigl( (x - y)^{-2} \log^2 \frac{|x - y|}{l} 
+ (x - y)^{-2} \log \frac{|x - y|}{l}\Bigr) 
+ \pi^4 x^{-4} \,\dl(x - y)
\nonumber \\
&\qquad = 2\pi^4 \Bigl( y^{-4} \log^2 \frac{|y|}{l}
+ y^{-4} \log \frac{|y|}{l}\Bigr) \,\dl(x) + \pi^4 x^{-4} \,\dl(x - y)
\nonumber \\
&\hspace*{4em} - \frac{\pi^2}{2}\, y^{-2} 
\,\del_x^\bt \bigl( L_\bt(x - y; x) + M_\bt(x - y; x) \bigr)
\nonumber \\
&\renormto 2\pi^4 R_4\Bigl[ y^{-4} \log^2 \frac{|y|}{l}
+ y^{-4} \log \frac{|y|}{l} \Bigr] \,\dl(x)
+ \pi^4 R_4[x^{-4}] \,\dl(x - y)
\nonumber \\
&\hspace*{4em} - \frac{\pi^2}{2}\, y^{-2}
\,\del_x^\bt \bigl( L_\bt(x - y; x) + M_\bt(x - y; x) \bigr).
\label{eq:kite-second} 
\end{align}

It helps to introduce the integral:
\begin{align*}
\int u^{-2} (x - u)^{-2} (y - u)^{-2} \log\frac{|x - u|}{|y - u|} \,du
&= \frac{1}{2}\, \log \frac{|x|}{|y|}
\int u^{-2} (x - u)^{-2} (y - u)^{-2} \,du
\\
&=: \frac{1}{2}\, \log \frac{|x|}{|y|}\, K(x,y).
\end{align*}
The first equality above is obtained by easy symmetry arguments. Thus
the remaining term of the partially renormalized expression for the
kite is of the form
$$
-\frac{1}{4} y^{-2} (x - y)^{-2}
\,\Dl_x\Bigl( K(x,y) \log \frac{|x|}{|y|} \Bigr).
$$
Integration by parts once more expands this to
\begin{align*}
& \pi^2 y^{-2} K(x,y) \log\frac{|x|}{|y|} \,\dl(x - y)
\\
&\quad + \frac{1}{4}\, y^{-2} \,\del_x^\bt \biggl(
\del^x_\bt((x - y)^{-2}) K(x,y) \log\frac{|x|}{|y|} 
- (x - y)^{-2} \del^x_\bt \Bigl( K(x,y) \log\frac{|x|}{|y|} \Bigr)
\biggr).
\end{align*}
The first term vanishes since $K(x,y) \log\bigl( |x|/|y| \bigr)$ is
skewsymmetric; and only the total derivative part remains.

Combining this total derivative with the other contributions
\eqref{eq:kite-first} and~\eqref{eq:kite-second}, we arrive at the
renormalized amplitude for the kite:
\begin{align*}
\kite &= 2\pi^4 R_4\Bigl[ y^{-4} \log^2 \frac{|y|}{l} \Bigr] \,\dl(x)
+ 2\pi^4 R_4\Bigl[ y^{-4} \log \frac{|y|}{l} \Bigr] \,\dl(x)
+ 2 \pi^4 R_4\Bigl[ x^{-4} \log\frac{|x|}{l} \Bigr] \,\dl(y)
\\
&\quad + 2\pi^4 R_4[x^{-4}] \,\dl(x - y)
- \frac{\pi^2}{2}\, x^{-2}  \,\del_y^\bt L_\bt(x - y; y)
\\
&\quad - \frac{\pi^2}{2}\, y^{-2}
\,\del_x^\bt \bigl( L_\bt(x - y; x) + M_\bt(x - y; x) \bigr)
\\
&\quad + \frac{1}{4} \del_x^\bt \biggl(
y^{-2} \del^x_\bt((x - y)^{-2}) K(x,y) \log\frac{|x|}{|y|} 
- y^{-2} (x - y)^{-2} \del^x_\bt \Bigl( K(x,y) \log\frac{|x|}{|y|}
\Bigr) \biggr).
\end{align*}
Note now, using~\eqref{eq:rminus4-two-logs}, that the coefficient
$-\pi^4/3$ of $y^{-2} \log^3(|y|/l)$ agrees with our expectation. To
wit, $-4\pi^2 \x (-\pi^4/3) = 4\pi^6/3$.

\medskip

The scale derivative of the kite now follows readily; note that the 
last line in the above display for the amplitude will not contribute.
We obtain
\begin{align*}
\lddl \bigl( \cometa \bigr)
&= - 4\pi^4 R_4\Bigl[ y^{-4} \log \frac{|y|}{l} \Bigr] \,\dl(x)
- 2\pi^4 R_4[y^{-4}] \,\dl(x) - 2\pi^4 R_4[x^{-4}] \,\dl(y)
\\
&\qquad - 4\pi^6 \,\dl(x) \,\dl(y)
- 2\pi^4 R_4[x^{-4}] \bigl( \dl(x - y) - \dl(y) \bigr)
\\
&\qquad - 2\pi^4 R_4[y^{-4}] \bigl( \dl(x - y) - \dl(x) \bigr)
+ \pi^2 y^{-2} \,\del_x^\bt L_\bt(x - y; x)
\\
&= - 4\pi^4 R_4\Bigl[ y^{-4} \log \frac{|y|}{l} \Bigr] \,\dl(x)
- 4\pi^4 R_4[y^{-4}] \,\dl(x - y) - 4\pi^6 \,\dl(x) \,\dl(y)
\\
&\qquad - \pi^2 y^{-2} \,\del_x^\bt L_\bt(y - x; x)
\\
&= - 2\pi^2 \copadevino(y, y - x) - 2\pi^4 \pez(x) \,\dl(x - y)
- 4\pi^6 \,\dl(x) \,\dl(y).
\end{align*}
(The first cograph on the right hand side has a different labelling of
the vertices from that of Section~\ref{ssc:wine-cup}.) We end up with
$$
\lddl \sG^4_\cometa = 4 \g \,\sG^4_\copadevino - 8 \g^2 \,\sG^2_\pez
+ 24 \g^3 \,\sG^4_\cruz .
$$

\medskip

To derive an explicit expression for $K(x,y)$, note first that
$K(tx, ty) = t^{-2}\,K(x,y)$, so we can assume for now that
$|y| = 1$. Writing $x = r\om$, $u = s\sg$, $y = \eta$ in polar form, 
we can expand the integrand in Gegenbauer polynomials. For instance,
in the region $s < r < 1$ of the $(r,s)$ positive quadrant, we get
$$
(y - u)^{-2} = \sum_{n=0}^\infty s^n C^1_n(\eta\.\sg);
\qquad
(x - u)^{-2} 
= \frac{1}{r^2} \sum_{m=0}^\infty \frac{s^m}{r^m}\, C^1_m(\sg\.\om).
$$
Following~\cite{Todorov14}, we take advantage of the underlying 
conformal symmetry to introduce a complex variable~$z$ determined by 
$$
|z| = r = |x|  \word{and}
(y - x)^2 = 1 - 2r\eta\.\om + r^2 = |1 - z|^2.
$$
The Gegenbauer orthogonality relations and the formula
$C^1_m(\cos\th) = \sin(m + 1)\th / \sin\th$ show that
\begin{align*}
\int_{\bS^3} C^1_n(\eta\.\sg) C^1_m(\sg\.\om) \,d^3\sg
&= \frac{\Om_4\,\dl_{mn}}{n + 1}\, C^1_n(\eta\.\om)
= \frac{2\pi^2\,\dl_{mn}}{n + 1}\, C^1_n\biggl(
\frac{z + \bar z}{2|z|} \biggr)
\\
&= \frac{2\pi^2\,\dl_{mn}}{(n + 1)|z|^n}\,
\frac{z^{n+1} - \bar z^{n+1}}{z - \bar z}\,.
\end{align*}
In the given $(r,s)$-region, each such term is multiplied by
$r^{-n-2} \int_0^r s^{2n+1} \,ds = |z|^n/2(n+1)$, yielding a 
contribution to $K(x,y)$ of
$$
\frac{\pi^2}{z - \bar z} \sum_{n=0}^\infty 
\frac{z^{n+1} - \bar z^{n+1}}{(n + 1)^2}
= \frac{\pi^2}{z - \bar z} \bigl( \bL_2(z) - \bL_2(\bar z) \bigr),
$$ 
where $\bL_2(\zeta) = \sum_{m=1}^\infty \zeta^n/n^2$ is the Euler 
dilogarithm.

Similar expressions are found for the other regions of the positive 
quadrant: see~\cite{Todorov14} for more detail. The full result, 
always assuming that $|y| = 1$, is
$$
K(x,y) = \frac{\pi^2}{z - \bar z} \biggl(
2\bL_2(z) - 2\bL_2(\bar z) + \log|z|^2 \log\frac{1 - z}{1 - \bar z}
\biggr),
$$
which turns out to be a single-valued complex function, the so-called
Bloch--Wigner dilogarithm \cite{Zagier07}. For general~$|y|$, this
yields the expression of~\cite{FreedmanJL92} for $K(x,y)$:
\begin{align*}
\frac{\pi^2}{2i\sqrt{x^2y^2 - (x\.y)^2}} 
& \biggl( \!
2 \bL_2\biggl( \frac{(x\.y)^2 + i\sqrt{x^2y^2 - (x\.y)^2}}{y^2}
\biggr) \!
- 2 \bL_2\biggl( \frac{(x\.y)^2 - i\sqrt{x^2y^2 - (x\.y)^2}}{y^2}
\biggr)
\\
&\quad + \log \frac{x^2}{y^2}\,
\log \frac{y^2 - (x\.y)^2 - i\sqrt{x^2y^2 - (x\.y)^2}}
{y^2 - (x\.y)^2 + i\sqrt{x^2y^2 - (x\.y)^2}}\, \biggr).
\end{align*}

\subsection{The shark}
\label{ssc:shark}

Last of its class, we consider the \textit{shark} graph, which has the
structure of a convolution of two renormalized diagrams:
$$
\begin{tikzpicture}[scale=1.5]
\coordinate (A) at (0,0) ; \coordinate (B) at (1,0) ; 
\coordinate (C) at (2,0.4) ; \coordinate (D) at (2,-0.4) ; 
\draw (A) ++(-0.2,0.2) -- (A) -- ++(-0.2,-0.2) ;
\draw (A) parabola[bend pos=0.5] bend +(0,0.25) (B) ;
\draw (A) parabola[bend pos=0.5] bend +(0,-0.25) (B) ;
\draw ($ (B)!1.3!(C) $) -- (B) -- ($ (B)!1.3!(D) $) ;
\begin{scope}[rotate=90]
\draw (C) parabola[bend pos=0.5] bend +(0,0.2) (D) ;
\draw (C) parabola[bend pos=0.5] bend +(0,-0.2) (D) ;
\end{scope}
\draw (A) node[above=2pt] {$x$} ;
\draw (B) node[above=2pt] {$u$} ;
\draw (C) node[above=2pt] {$0$} ;
\draw (D) node[below=2pt] {$y$} ;
\foreach \pt in {A,B,C,D} \draw (\pt) node{$\bull$} ;
\end{tikzpicture}
$$
One-vertex reducible diagrams of this type do not require overall
renormalization: once the convolution is effected, the task is over.
However, the matter is not as simple as implied in
\cite[Sect.~3.3]{FreedmanJL92}: ``Since each factor in the convolution
has a finite Fourier transform, so does the full result''. Of course
not: rather, it is in view of our Proposition~\ref{pr:convolvables}
that the convolution product makes sense.

\medskip

For the winecup subgraph, \eqref{eq:my-cup-overfloweth} gives:
$$
\Rbar_8[u^{-2} (u - y)^{-2} y^{-4}]
= 2\pi^2 R_4\Bigl[ u^{-4} \log\frac{|u|}{l} \Bigr]\,\dl(u - y)
+ \pi^2 R_4[u^{-4}] \,\dl(y)
+ \frac{1}{2}\, u^{-2} \,\del_y^\bt L_\bt(y; u - y).
$$
The fish subgraph is just $R_4[(x - u)^{-4}]$.

Several contributions are immediately computable. First, a
(well-defined) product of distributions,
$$
2\pi^2 \int R_4\Bigl[ u^{-4} \log\frac{|u|}{l} \Bigr]\,
R_4[(x - u)^{-4}] \,\dl(u - y) \,du
= 2\pi^2 R_4\Bigl[ y^{-4} \log\frac{|y|}{l}\Bigr]\, R_4[(x - y)^{-4}].
$$
Next, the straightforward convolution,
$$
\pi^2 \,\dl(y) \int R_4[u^{-4}]\, R_4[(x - u)^{-4}] \,du
= 4\pi^4 R_4\Bigl[ x^{-4} \log\frac{|x|}{l} \Bigr] \,\dl(y)
- \pi^6 \,\dl(x) \,\dl(y).
$$
Thirdly, using $R_4[(x - u)^{-4}]
= -\half\,\Dl\bigl((x - u)^{-2} \log(|x - u|/l) \bigr)
+ \pi^2 \,\dl(x - u)$, we extract
$$
\frac{\pi^2}{2} \int u^{-2} \,\del_y^\bt L_\bt(y; u - y) \,\dl(x - u)\,du
= \frac{\pi^2}{2}\, x^{-2} \,\del_y^\bt L_\bt(y; x - y).
$$
Lastly, we have to add the term
$$
- \frac{1}{4} \,\del_y^\bt \,\Dl_x \int u^{-2} L_\bt(y; u - y)
\,(x - u)^{-2} \log \frac{|x - u|}{l} \,du.
$$
Using \eqref{eq:log-splitting} once more to expand $\log(|x - u|/l)$,
this can be rewritten in terms of $K(x,y)$ as defined in the previous 
case; but this is hardly worthwhile.

\medskip

For the scale derivative, we look first at
$\lddl \int \half u^{-2} \,\del_y^\bt L_\bt(y; u - y) 
R_4[(x - u)^{-4}] \,du$, yielding:
\begin{align*}
& -\pi^2 \int u^{-2} \,\del_y^\bt L_\bt(y; u - y) \,\dl(x - u) \,du
\\
&\qquad + 2\pi^2 \int \bigl( R_4[u^{-4}] \,\dl(u - y)
- R_4[u^{-4}] \,\dl(y) \bigr) R_4[(x - u)^{-4}] \,du
\\
&\quad = -\pi^2 x^{-2} \,\del_y^\bt L_\bt(y; x - y)
+ 2\pi^2 R_4[y^{-4}]\, R_4[(x - y)^{-4}] - 2\pi^2 \sosten(x) \,\dl(y).
\end{align*}
Therefore,
\begin{align*}
\lddl \bigl( \tiburon \bigr)
&= -2\pi^2 R_4[y^{-4}]\, R_4[(x - y)^{-4}]
- 4\pi^4 R_4\Bigl[ x^{-4} \log\frac{|x|}{l} \Bigr] \,\dl(x - y)
- 4\pi^4 R_4[x^{-4}] \,\dl(y)
\\
&\qquad - \pi^2 x^{-2} \,\del_y^\bt L_\bt(y; x - y)
+ 2\pi^2 R_4[y^{-4}]\, R_4[(x - y)^{-4}] - 2\pi^2 \sosten(x) \,\dl(y)
\\
&= - 2\pi^2 \copadevino(x,y) - 2\pi^2 \sosten(x) \,\dl(y)
- 2\pi^4 \pez(x) \,\dl(y).
\end{align*}
This translates into:
$$
\lddl \sG^4_\tiburon = \g\,\sG^4_\copadevino + 4\g\,\sG^4_\sosten
- 2\g^2 \,\sG^4_\pez.
$$

\subsection{The tetrahedron diagram}
\label{ssc:four-eyed}

The tetrahedron graph is \textit{primitive} and already
understood~\cite{Carme}. We report the results. On labelling the
vertices thus:
$$
\begin{tikzpicture}[scale=0.8]
\coordinate (A) at (0,0) ; \coordinate (B) at (-45:1.1cm) ;
\coordinate (C) at (-5:2.6cm) ; \coordinate (D) at (50:2.4cm) ;
\draw ($ (A)!-0.2!(C) $) -- ($ (A)!1.2!(C) $) ;
\draw[line width=8pt, white] (B) -- (D) ;
\draw ($ (B)!-0.2!(D) $) -- ($ (B)!1.2!(D) $) ;
\draw (A) node[above left] {$0$} -- (B) node[below right] {$y$}
   -- (C) node[above right] {$z$} -- (D) node[above left] {$x$}
   -- cycle ;
\foreach \pt in {A,B,C,D}  \draw (\pt) node {$\bull$} ;
\end{tikzpicture}
$$
we arrive at
\begin{align*}
\tetrahedron &= (E + 12) \Bigl[
x^{-2} y^{-2} z^{-2} (x-y)^{-2} (y-z)^{-2} (x-z)^{-2}
\log\frac{|(x,y,z)|}{l} \Bigr];
\\
&\qquad \word{with} E + 12
= \del^x_\al x^\al + \del^y_\bt y^\bt + \del^z_\rho z^\rho;
\\
l \frac{\del}{\del l} \tetrahedron &= -12\pi^6\zeta(3) \cross,
\word{leading to} \lddl \sG^4_\tetraedro
= 12\g^3\zeta(3)\,\sG^4_\cruz.
\end{align*}
Notice that the scale derivative for this graph, coincident with the
residue for this case, is numerically large.

\subsection{The roll}
\label{ssc:roll}

The unrenormalized amplitude for the roll diagram, with vertices
labelled as follows:
$$
\begin{tikzpicture}[scale=1.2]
\coordinate (A) at (0,1) ; \coordinate (B) at (0,0) ; 
\coordinate (C) at (1.6,0) ; \coordinate (D) at (1.6,1) ; 
\draw ($ (A)!-0.5!(D) $) -- ($ (A)!1.5!(D) $) ; 
\draw ($ (B)!-0.5!(C) $) -- ($ (B)!1.5!(C) $) ; 
\draw ($ (A)!0.5!(B) $) circle(0.5cm) ;
\draw ($ (C)!0.5!(D) $) circle(0.5cm) ;
\draw (A) node[above=2pt] {$0$} ;
\draw (B) node[below=2pt] {$x$} ;
\draw (C) node[below=2pt] {$y$} ;
\draw (D) node[above=2pt] {$z$} ;
\foreach \pt in {A,B,C,D} \draw (\pt) node{$\bull$} ;
\end{tikzpicture}
$$
is of the form
$$
f(x,y,z) = (x - y)^{-2} x^{-4} z^{-2} (y - z)^{-4}.
$$

Before plunging into calculation, this very interesting graph without
internal vertices prompts a couple of comments. It exemplifies well
the ``causal factorization property''. Consider the relevant partition
$\{0,x\},\{z,y\}$ of its set of vertices. Partial renormalization
adapted to it will yield a valid distribution outside the diagonals
$0 = z$, $x = y$. Formula~\eqref{eq:in-annum} here means:
$$
\duo< R[\Ga], \vf>
= \duo< R[\ga_1\uplus\ga_2], \bigl( \Ga/(\ga_1\uplus\ga_2) \bigr)\vf>
= \duo< R[\ga_1] R[\ga_2], \bigl( \Ga/(\ga_1\uplus\ga_2) \bigr)\vf>,
$$
for $\vf$ vanishing on those diagonals; where the rule for
disconnected graphs, also found in~\cite[Sect.~11.2]{KuznetsovTV96}:
$$
R[\ga_1 \uplus \ga_2] = R[\ga_1]\,R[\ga_2],
$$
holds; and
$$
\Ga/(\ga_1\uplus\ga_2) = \biggl( \markedfish \biggr).
$$

Second, one may anticipate the coefficients of logarithmic degrees $3$
and~$2$ in the final result, by using again \cite{ChryssomalakosQRV02}
and~\cite{Kreimer01}. Note that the roll is a rooted-tree graph with
the fish as unique ``decoration''; subtracting from it the
juxtaposition of the winecup and the fish graph, and adding one sixth
of the product of three fishes, one obtains a primitive graph in the
bialgebra of rooted trees. For this combination the coefficient of
$\log^3 l$ must vanish. We obtain for such a graph
$2\pi^4 \x 2\pi^2 - 8\pi^6/6 = 8\pi^6/3$.

Partial renormalization leads at once to
\begin{align}
\Rbar_{12}[f(x,y,z)]
&= (x - y)^{-2} R_4[x^{-4}]\, z^{-2} R_4[(y - z)^{-4}]
\nonumber \\
&= \frac{1}{4} \biggl( 
(x - y)^{-2} \,\Dl\Bigl( x^{-2} \log\frac{|x|}{l} \Bigr)
- 2\pi^2 y^{-2}\,\dl(x) \biggr) 
\nonumber \\
&\qquad \x \biggl(
z^{-2} \,\Dl_z \Bigl( (y - z)^{-2} \log\frac{|y - z|}{l} \Bigr)
- 2\pi^2 y^{-2} \,\dl(y - z) \biggr).
\label{eq:four-uglies} 
\end{align}

The expression~\eqref{eq:four-uglies} is a sum of four terms. Three of
these yield, respectively:
\begin{align*}
& \pi^4 y^{-4} \,\dl(x)\,\dl(y - z) 
\renormto \pi^4 R_4[y^{-4}] \,\dl(x)\,\dl(y - z);
\\
& - \frac{\pi^2}{2}\, y^{-2} (x - y)^{-2} 
\,\Dl\Bigl( x^{-2} \log\frac{|x|}{l} \Bigr) \,\dl(y - z)
\\
&\qquad \renormto 2\pi^4 R_4\bigl[ x^{-4} \log\frac{|x|}{l} \Bigr]
\,\dl(x - y) \,\dl(x - z) 
+ \frac{\pi^2}{2}\, y^{-2} \,\del_x^\al L_\al(x; y - x) \,\dl(y - z);
\\
& - \frac{\pi^2}{2}\, y^{-2} z^{-2} 
\,\Dl_z\Bigl( (y - z)^{-2} \log\frac{|y - z|}{l} \Bigr) \,\dl(x)
\\
&\qquad \renormto 2\pi^4 R_4\Bigl[ y^{-4} \log\frac{|y|}{l} \Bigr]
\,\dl(x)\,\dl(z)
+ \frac{\pi^2}{2}\, y^{-2} \,\del_z^\bt L_\bt(y - z; z) \,\dl(x).
\end{align*}

The remaining contribution from~\eqref{eq:four-uglies}, after 
integrating by parts twice with~\eqref{eq:green-backs}, yields
\begin{align*}
&4\pi^4 x^{-4} \log^2 \frac{|x|}{l} \,\dl(x - y)\,\dl(z)
+ \pi^2 y^{-2} \log\frac{|y|}{l} \,\del_x^\al L_\al(x; y - x) \,\dl(z)
\\
&\quad 
+ \pi^2 y^{-2} \log\frac{|y|}{l} \,\del_z^\bt L_\bt(y-z; z) \,\dl(x-y)
+ \frac{1}{4} \,\del_x^\al L_\al(x; y-x) \,\del_z^\bt L_\bt(y-z; z).
\end{align*}
Only the first of these terms requires renormalization:
$$
4\pi^4 x^{-4} \log^2 \frac{|x|}{l} \,\dl(x - y)\,\dl(z)
\renormto 4\pi^4 R_4\Bigl[ x^{-4} \log^2 \frac{|x|}{l} \Bigr]
\,\dl(x - y)\,\dl(z).
$$

Summing up, we arrive at
\begin{align}
\roll
&= 4\pi^4 R_4\Bigl[ y^{-4} \log^2 \frac{|y|}{l} \Bigr] 
\,\dl(x - y)\,\dl(z)
\notag \\
&\quad + 2\pi^4 R_4\Bigl[ y^{-4} \log \frac{|y|}{l} \Bigr]
\bigl( \dl(x)\,\dl(z) + \dl(x - y)\,\dl(x - z) \bigr)
\notag \\
&\quad + \pi^4 R_4[y^{-4}] \,\dl(x)\,\dl(y - z)
+ (\text{total derivative terms}).
\label{eq:ampli-rollo} 
\end{align}

A peek at~\eqref{eq:rminus4-two-logs} confirms that the coefficient for
$\log^3 l$ in this expression is
$$
- \frac{2\pi^4}{3} \x (-4\pi^2) = \frac{8\pi^6}{3} \,,
$$
as predicted by Kreimer's argument.

\medskip

For the scale derivative of the roll amplitude, the first three terms 
in~\eqref{eq:ampli-rollo} contribute
\begin{align*}
& - 8\pi^4 R_4\Bigl[ y^{-4} \log\frac{|y|}{l} \Bigr] \dl(x-y)\,\dl(z)
- 2\pi^4 R_4[y^{-4}] \bigl( \dl(x)\,\dl(z) + \dl(x-y)\,\dl(y-z) \bigr)
\\
&\quad - 2\pi^6 \,\dl(x)\,\dl(y)\,\dl(z),
\end{align*}
while the other (total derivative) terms contribute
\begin{align*}
& - 4\pi^4 R_4[y^{-4}] \,\dl(x)\,\dl(y - z)
+ 2\pi^4 R_4[y^{-4}] \bigl( \dl(x)\,\dl(z) +\dl(x-y)\,\dl(x-z) \bigr)
\\
&\quad + 8\pi^4 R_4\Bigr[ y^{-4} \log\frac{|y|}{l} \Bigr]\,\dl(x-y)\,\dl(z)
- 4\pi^4 R_4\Bigl[ y^{-4} \log\frac{|y|}{l} \Bigr] 
\bigl( \dl(x)\,\dl(z) + \dl(x-y)\,\dl(y-z) \bigr)
\\
&\quad - \pi^2 y^{-2} \,\del_x^\al  L_\al(x; y - x) \,\dl(y - z)
- \pi^2 y^{-2} \,\del_z^\bt  L_\bt(y - z; z) \,\dl(x).
\end{align*}
Putting them together, we arrive at
\begin{align*}
\lddl \roll
&= - 4\pi^4 R_4\Bigr[ y^{-4} \log\frac{|y|}{l} \Bigr] 
\bigl( \dl(x)\,\dl(z) + \dl(x-y)\,\dl(y-z) \bigr)
\\
&\quad - 4\pi^4 R_4[y^{-4}] \,\dl(x)\,\dl(y - z)
- 2\pi^6\,\dl(x)\,\dl(y)\,\dl(z)
\\
&\quad - \pi^2 y^{-2} \,\del_x^\al  L_\al(x; y - x) \,\dl(y - z)
- \pi^2 y^{-2} \,\del_z^\bt  L_\bt(y - z; z) \,\dl(x)
\\
&= - 2\pi^2 \copadevino(y,x) \,\dl(y - z)
- 2\pi^2 \copadevino(y, y - z) \,\dl(x)  - 2\pi^6 \cross.
\end{align*}
Taking into account the weight factors, we then conclude that
$$
\lddl \sG^4_\canelon = 2\g\,\sG^4_\copadevino + 3\g^3\,\sG^4_\cruz.
$$

\section{The renormalization group $\ga$- and $\bt$-functions}
\label{sec:feel-the-beat}

Enter the Callan--Symanzik differential equations with zero mass:
\begin{align}
\biggl[ \pd{}{\log l} - \bt(\g) \pd{}{\g} + 2\ga(\g) \biggr]
\sG^2_p(x_1,x_2) &= 0
\label{eq:CS-eqns-one} 
\\
\word{and}
\biggl[ \pd{}{\log l} - \bt(\g) \pd{}{\g} + 4\ga(\g) \biggr]
\,\sG^4(x_1,x_2,x_3,x_4) &= 0.
\label{eq:CS-eqns-two} 
\end{align}
Here $\sG^2$ starts at order~$\g^0$ and $\sG^4$ at order~$\g$; the
subindex in $\sG^2_p$ recalls that the equations refer to proper
parts.  (The detailed calculations for the two-point function $\sG^2$
and its scale derivative are outlined in Appendix~\ref{app:colon}.)
The scale derivative in both cases starts at order~$\g^2$.  Therefore
we may assume that:
\begin{equation}
\bt(\g) = \g^2\bt_1 + \g^3\bt_2 + \g^4\bt_3 +\cdots ; \qquad
\ga(\g) = \g^2\ga_2 + \g^3\ga_3 + \g^4\ga_4 +\cdots ;
\label{eq:coefs-app} 
\end{equation}
where $\ga$ does not contribute to~\eqref{eq:CS-eqns-two} at the first
significant order, and similarly for~$\bt$ and~\eqref{eq:CS-eqns-one};
and we try to compute then the $\ga_i$ and~$\bt_i$. Following
\cite{FreedmanJL92}, our labelling of the expansion coefficients
differs for the~$\bt$ and~$\ga$ functions.

Order by order, we find:
\begin{align}
\g^2 : \ & \lddl \sG^4_\pez
= \bt_1 \,\g^2 \frac{\del}{\del\g}\, \sG^4_\cruz ; \qquad
\lddl \sG^2_\ocaso = -2\ga_2 \,\g^2\, \sG^2_\punto ;
\notag \\
\g^3 : \ & \lddl \biggl[ \sG^4_\sosten + \sG^4_\copadevino \biggr]
= (\bt_2 - 4\ga_2) \,\g^3 \frac{\del}{\del\g}\, \sG^4_\cruz
+ \bt_1 \,\g^2 \frac{\del}{\del\g}\, \sG^4_\pez ;
\label{eq:gbar-cubed-fourpt} 
\\
& \lddl \sG^2_\gafas = \bt_1\,\g^2 \frac{\del}{\del\g}\, \sG^2_\ocaso
- 2\ga_3 \,\g^2\, \sG^2_\punto ;
\label{eq:gbar-cubed-twopt} 
\\
\g^4 : \ &\lddl \biggl[ \sG^4_\trenza + \sG^4_\orzuelo
+ \sG^4_\ojodegato + \sG^4_\casquete + \sG^4_\tiburon + \sG^4_\cometa 
+ \sG^4_\tetraedro + \sG^4_\canelon \biggr]
\notag \\[-2\jot]
&= (\bt_3 - 4\ga_3) \g^4 \frac{\del}{\del\g}\, \sG^4_\cruz
+ (\bt_2 - 2\ga_2) \g^3 \frac{\del}{\del\g}\, \sG^4_\pez
+ \bt_1 \g^2 \frac{\del}{\del\g}\, \sG^4_\sosten
+ \bt_1 \g^2 \frac{\del}{\del\g}\, \sG^4_\copadevino ;
\label{eq:gbar-fourth-terms} 
\\
&\lddl \biggl[ \sG^2_\cebollas + \sG^2_\saturno + \sG^2_\cuca
+ \sG^2_\caracol\biggr]
\notag \\
&= \bt_1\,\g^2 \frac{\del}{\del\g}\, \sG^2_\gafas
+ \bt_2\,\g^3 \frac{\del}{\del\g}\, \sG^2_\ocaso
- 2\ga_2\,\g^2 \,\sG^2_\ocaso - 2\ga_4\,\g^4 \,\sG^2_\punto \,.
\label{eq:gbar-fourth-terms-2-point} 
\end{align}

The first equality above, in view of~\eqref{eq:fish-scale}, yields
$\bt_1 = 3$. This is the standard result. Off the same line, with the
help of~\eqref{eq:evening-star}, we read $\ga_2 = 1/12$. This is also
the standard result.

Then, on consulting \eqref{eq:bikini-scale}
and~\eqref{eq:winecup-scale}, equality~\eqref{eq:gbar-cubed-fourpt} is
seen to yield:
$$
6\g\,\sG^4_\pez - 6\g^2\,\sG^4_\cruz
= 6\g\,\sG^4_\pez + \bigl(\bt_2 - \tfrac{1}{3}\bigr)\g^2\,\sG^4_\cruz;
$$
that is, $\bt_2 = -17/3$. This is the standard result. Note the
automatic cancellation of the terms in~$\sG^4_\pez$.

We need the value of~$\ga_3$ in order to compute~$\bt_3$. Now, 
from~\eqref{eq:gbar-cubed-twopt} we observe that
$$
6\g\,\sG^2_\ocaso = 6\g\,\sG^2_\ocaso + 2\ga_3\,\g^2 \,\Dl\dl;
$$
so that $\ga_3 = 0$ obtains, at variance with both \cite{FreedmanJL92}
and~\cite{KleinertSF01} (differing between them); but in agreement
with~\cite{Schnetz97}.

\medskip

We turn to the computation of $\bt_3$, noting beforehand that
knowledge of $\ga_4$ is not necessary for it. First we check the
automatic cancellation of the terms in $\sG^4_\pez$
in~\eqref{eq:gbar-fourth-terms}:
$$
\Bigl( \frac{1}{3} - 2 - 2 - 8 \Bigr) \g^2\,\sG^4_\pez
= 2\Bigl( -\frac{17}{3} - \frac{1}{6} \Bigr) \g^2\,\sG^4_\pez ;
$$
as well as in $\sG^4_\sosten$ and $\sG^4_\copadevino$:
\begin{align*}
(3 + 2 + 4) \g\,\sG^4_\sosten &= 9 \g\,\sG^4_\sosten ;
\\[\jot]
(2 + 4 + 1 + 2) \g\,\sG^4_\copadevino &= 9 \g\,\sG^4_\copadevino ;
\end{align*}
which of course vouches for the soundness of our method. We should
also notice that, since the chain graphs do not yield $\sG^4_\cruz$
terms, they contribute nothing to the renormalization group functions.
The same is true of the shark graph.

Thus we read off $\bt_3$ from the terms in $\sG^4_\cruz$ on the
left hand-side of~\eqref{eq:gbar-fourth-terms}, with the result:
$$
\bt_3 = \frac{109}{4} + 12\,\zeta(3),
$$
numerically intermediate between the results in~\cite{FreedmanJL92}
and~\cite{KleinertSF01}. The discrepancy with the former is due to
different results for the $\sG^4_\cruz$ terms in the stye, cat's eye,
duncecap and roll graphs; for any others our scale derivatives
reproduce the results in the seminal paper on differential
renormalization. This number has, at any rate, no fundamental
significance; in fact $\bt_3$ and all subsequent coefficients of the
$\bt$-function can be made to vanish in an appropriate renormalization
scheme.

Before computing $\ga_4$ in our scheme, with an eye on the formulas
\eqref{eq:punto-final}, we check the cancellation of terms in
$\sG^2_\gafas$ and $\sG^2_\ocaso$ (coming from proper graphs only) in
\eqref{eq:gbar-fourth-terms-2-point}:
$$
(3 + 2 + 4) = 3\bt_1 = 9; \quad 
\Bigl( \frac{1}{2} - 6 - 6 \Bigr) 
= 2(\bt_2 - \ga_2) = - \frac{34}{3} - \frac{1}{6} \,. 
$$

Finally, for $\ga_4$ only the contribution from the
$\smash[b]{\saturn}$ graph survives in
\eqref{eq:gbar-fourth-terms-2-point}, and we obtain $\ga_4 = -5/96$,
again in agreement with~\cite{Schnetz97}.

\section{Conclusion}
\label{sec:sic-transit}

We reckon to have shown that, with a small amount of ingenuity, plus
eventual recourse to Gegenbauer polynomial techniques and
polylogarithms, renormalization of proper ``divergent'' graphs in
coordinate space by Epstein--Glaser methods is quite feasible.

As the perturbation order grows, this often demands integration over
the internal vertices. We have proved here the basic proposition
underpinning this latter technique, and provided quite a few examples.
We trust that, all along, the logical advantage of
distribution-theoretic methods for the recursive treatment of
divergences shines through.

The Popineau--Stora theorem was mentioned in
Section~\ref{ssc:stora-story}. A refinement of this
theorem~\cite{Stora08,RaymondCERNFeb13} provides an expression for
$\Gbar^{\,ll'}(\g)$ of the form
$$
\Gbar^{\,ll'}(\g) = \g + H\bigl( \Gbar^{\,ll'}(\g) \bigr),
$$
where the series $H$ is made up of successive contributions at each
perturbation step, always supported exclusively on the corresponding
main diagonal. We expect to take up in a future paper this interesting
combinatorial aspect of the procedure; at any rate, our methods
guarantee that no combinatorial \textit{difficulties} worth mentioning
appear.

The attentive reader will not have failed to notice the connection
between the apparently fortuite cancellations signaled in the previous
section and those in Section~\ref{ssc:stora-story}. Those
cancellations look to be just an infinitesimal aspect of the
Popineau--Stora theorem. Here we have done no more than to illustrate
the workings of that theorem and the Callan--Symanzik equations in the
Epstein--Glaser paradigm. A befitting derivation of those equations
within the distribution-theoretic approach is also left for the
future.

\subsection{The roads not taken}
\label{sec:gloria-mundi}

We have taken some pains to point out the shortcomings of differential
renormalization; however, in keeping with its spirit, our approach to
cograph parts is unabashedly ``low-tech'': some of the standard tools
of renormalization in $x$-space are not used.
\begin{itemize}
\item
Even for dealing with overlapping divergences, we have had no recourse
to partitions of unity.
\item
We do not use here \textit{meromorphic continuation}: we wanted to
illustrate the fact that real-variable methods \textit{\`a la} Epstein
and Glaser are enough to deal with the problems at hand. This is of
course a net loss in practice, since the analytic continuation tools
\cite{Hormander90,GelfandS64,HorvathCol74} borrowed in
\cite{NikolovST13,Keller10,DuetschFKR13} are quite powerful, although
not always available. The wisest course is to employ both real- and
complex-variable methods.
\item
The calculus of \textit{wave front sets} was not required.
\item
Steinmann's \textit{scaling degree} for distributions was never
invoked. There is no point in using it for massless
diagrams~\cite{RaymondCERNFeb13}, for which the log-homogeneous
classification is finer: all distributions of bidegree $(a,m)$ have
the same scaling degree irrespectively of the value of~$m$. Recent
work~\cite{Duetsch14} adapts the latter classification to the case of
massive particles, casting doubt on the future usefulness of the
scaling degree in quantum field theory. Even for general
distributions, Meyer's concept of weakly homogeneous distributions,
exploited in~\cite{Dang13}, appears more seductive.
\end{itemize}

\appendix

\section{Formulas for extensions of distributions in $x$-space}
\label{app:exten-formulas}

Recall the definition of $R_d \bigl[r^{-d} \log^m(r/l) \bigr]$: 
from~\eqref{eq:genl-exten} we immediately obtain
\begin{equation}
R_d \Bigl[r^{-d} \log^m \frac{r}{l}\Bigr]
= \frac{1}{m+1}\, \del_\al \Bigl(
x^\al r^{-d} \log^{m+1}\frac{r}{l} \Bigr) 
= \sum_{k=0}^{m+1} c_{m+1,k}
\,\Dl\Bigl( r^{-d+2} \log^k \frac{r}{l} \Bigr)
\label{eq:loggy-exten} 
\end{equation}
for suitable constants $c_{m+1,k}$. These are computed as follows.

\begin{lema} 
\label{lm:many-log-ren}
For any $m = 0,1,2,\dots$, 
\begin{align}
R_d \Bigl[ r^{-d} \log^m \frac{r}{l} \Bigr]
&= - \sum_{k=1}^{m+1} \frac{m!}{k!}\, (d - 2)^{-m+k-2}
\,\Dl \Bigl( r^{-d+2} \log^k \frac{r}{l} \Bigr)
+ \frac{m!}{(d - 2)^{m+1}}\,\Om_d\,\dl(r);
\nonumber \\
\intertext{and in particular,}
R_4 \Bigl[r^{-4} \log^m \frac{r}{l}\Bigr]
&= - \sum_{k=1}^{m+1} \frac{m!}{k!}\, \frac{1}{2^{m-k+2}}
\,\Dl \Bigl( r^{-2} \log^k \frac{r}{l} \Bigr)
+ \frac{m!}{2^m}\,\pi^2\,\dl(r).
\label{eq:many-log-ren} 
\end{align}
\end{lema}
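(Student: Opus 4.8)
The plan is to exploit the divergence form of the extension provided by the first equality in~\eqref{eq:loggy-exten},
$$
R_d \Bigl[ r^{-d} \log^m \frac{r}{l} \Bigr]
= \frac{1}{m+1}\, \del_\al \Bigl( x^\al r^{-d} \log^{m+1}\frac{r}{l} \Bigr),
$$
and to recognise the integrand $x^\al r^{-d}\log^{m+1}(r/l)$ as minus a gradient of a locally integrable radial function. First I would look for $g(r) = r^{-d+2}\sum_{k=0}^{m+1} a_k \log^k(r/l)$ satisfying $\del^\al g = -x^\al r^{-d}\log^{m+1}(r/l)$ for $r > 0$; applying $\del_\al$ once more then turns the displayed right-hand side into $-\Dl g/(m+1)$, which by linearity is already a sum of Laplacians of exactly the shape appearing in the statement.

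Since $\del^\al g = (x^\al/r)\,g'(r)$ and $g'(r) = r^{-d+1}\bigl[(2-d)P(u) + P'(u)\bigr]$ with $u := \log(r/l)$ and $P(u) := \sum_k a_k u^k$, the defining condition reduces to the scalar differential equation $P'(u) - (d-2)P(u) = -u^{m+1}$. Matching coefficients of $u^j$ yields the top relation $(d-2)\,a_{m+1} = 1$ together with the two-term recursion $(d-2)\,a_j = (j+1)\,a_{j+1}$ for $0 \leq j \leq m$; solving downward gives the closed form $a_k = (m+1)!\,/\,k!\,(d-2)^{m-k+2}$. This is the computational heart of the argument, and it is entirely routine.

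The step that needs genuine care --- and which I expect to be the main obstacle --- is the passage from the classical identity valid for $r > 0$ to the corresponding distributional identities, with careful bookkeeping of where $\dl(r)$ is produced. I would argue that, because $g$ and its classical gradient are both locally integrable and the singularity is mild ($g = O(r^{-d+2}\log^{m+1}r)$ and $g' = O(r^{-d+1}\log^{m+1}r)$ for $d \geq 3$), the boundary flux $\int_{|x|=\eps} g\,\vf\,\nu^\al\,dS$ over a small sphere scales like $\eps\,\log^{m+1}(\eps/l) \to 0$; hence the distributional first derivatives of $g$ coincide with the classical ones, with \emph{no} delta contribution. Consequently $x^\al r^{-d}\log^{m+1}(r/l) = -\del^\al g$ holds verbatim in $\sD'(\bR^d)$, and applying $\del_\al$ gives $R_d[r^{-d}\log^m(r/l)] = -\Dl g/(m+1)$ as distributions. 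The delta therefore enters only at the second-derivative level, and only through the $k=0$ term.

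Finally I would expand $\Dl g = \sum_{k=0}^{m+1} a_k\,\Dl(r^{-d+2}\log^k(r/l))$ and isolate $k=0$, the unique term carrying a delta: by~\eqref{eq:basic-exten} (equivalently, since $r^{-d+2}/(-d+2)\Om_d$ is the fundamental solution of the Laplacian on $\bR^d$) one has $\Dl(r^{-d+2}) = -(d-2)\Om_d\,\dl(r)$, whereas the Laplacians for $k \geq 1$ are kept as the bona fide distributional objects displayed in the statement. Substituting $a_k$ and using $a_0(d-2)\Om_d/(m+1) = m!\,\Om_d/(d-2)^{m+1}$ together with $a_k/(m+1) = m!/k!\,(d-2)^{m-k+2}$ reproduces the general formula; the case $d = 4$ then follows immediately on setting $d-2 = 2$ and $\Om_4 = 2\pi^2$, so that the delta coefficient becomes $m!\,\pi^2/2^m$.
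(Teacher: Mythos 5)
Your proposal is correct and follows essentially the same route as the paper: both start from the divergence-form identity $R_d[r^{-d}\log^m(r/l)] = \frac{1}{m+1}\del_\al\bigl(x^\al r^{-d}\log^{m+1}(r/l)\bigr)$, write the interior field as the gradient of a radial log-polynomial $r^{-d+2}\sum_k a_k\log^k(r/l)$, solve the identical two-term recursion $(d-2)a_j=(j+1)a_{j+1}$ for the coefficients, and extract the delta solely from the $k=0$ term via $\Dl(r^{-d+2})=-(d-2)\Om_d\,\dl(r)$. Your packaging of the coefficient matching as the scalar ODE $P'-(d-2)P=-u^{m+1}$, and your explicit surface-flux argument that the first-order identity holds distributionally without a delta, are just more detailed renderings of steps the paper performs (or leaves implicit) in its own proof.
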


\begin{proof}
Taking derivatives, we get
$$
\del_\al \Bigl( r^{-d+2} \log^k \frac{r}{l} \Bigr)
= x^\al r^{-d} \Bigl( (2 - d) \log^k \frac{r}{l}
+ k \log^{k-1} \frac{r}{l} \Bigr).
$$
The coefficients $c_{m+1,k}$ are determined by the defining relation
$$
x^\al r^{-d} \log^{m+1} \frac{r}{l}
= (m + 1) \sum_{k=0}^{m+1} c_{m+1,k} 
\,\del_\al \Bigl( r^{-d+2} \log^k \frac{r}{l} \Bigr);
$$
so we get the recurrence, for $k \leq m$:
$$
(k + 1) c_{m+1,k+1} - (d - 2) c_{m+1,k} = 0.
$$
Clearly $c_{m+1,m+1} = -1/(d - 2)(m + 1)$. Thus
$c_{m+1,m} = -1/(d - 2)^2$. The
remaining $c_{m+1,k}$ terms follow at once. The last summand is
$c_{m+1,0}\,\Dl(r^{-d+2}) = m!(d - 2)^{-m-1} \Om_d\,\dl(r)$.
\end{proof}

We expand out the cases $m = 0,1,2$ of $d = 4$ for ready reference:
\begin{align}
R_4[r^{-4}] 
&= - \frac{1}{2}\,\Dl\biggl( r^{-2} \log\frac{r}{l} \biggr)
+ \pi^2\,\dl(r).
\label{eq:rminus4-no-log} 
\\
R_4 \Bigl[r^{-4} \log \frac{r}{l}\Bigr]
&= - \frac{1}{4}\,\Dl \Bigl( r^{-2} \log^2 \frac{r}{l} \Bigr)
- \frac{1}{4} \Dl \Bigl( r^{-2} \log \frac{r}{l} \Bigr)
+ \frac{\pi^2}{2} \,\dl(r);
\label{eq:rminus4-one-log} 
\\
R_4 \Bigl[r^{-4} \log^2 \frac{r}{l}\Bigr]
&= - \frac{1}{6}\,\Dl \Bigl( r^{-2} \log^3 \frac{r}{l} \Bigr)
- \frac{1}{4}\,\Dl \Bigl( r^{-2} \log^2 \frac{r}{l} \Bigr)
- \frac{1}{4} \Dl \Bigl( r^{-2} \log \frac{r}{l} \Bigr)
+ \frac{\pi^2}{2} \,\dl(r).
\label{eq:rminus4-two-logs} 
\end{align}

\medskip

Explicit expressions for log-homogeneous distributions of higher 
bidegrees could in principle be computed from \eqref{eq:yo-yo} 
and~\eqref{eq:pizza-crust}. We have already met $R_4[r^{-6}]$ 
in~\eqref{eq:ren-hex}. We also need:
\begin{align}
R_4 \Bigl[r^{-6} \log \frac{r}{l}\Bigr]
&= \frac{1}{8}\, \Dl R_4\Bigl[r^{-4} \log\frac{r}{l} \Bigr]
+ \frac{3}{32}\, \Dl R_4[r^{-4}] + \frac{7\pi^2}{64}\,\Dl\dl(r),
\label{eq:ren-hex-log} 
\\
R_4 \Bigl[r^{-6} \log^2 \frac{r}{l}\Bigr]
&= \frac{1}{8}\, \Dl R_4\Bigl[r^{-4} \log^2\frac{r}{l} \Bigr]
+ \frac{3}{16}\, \Dl R_4\Bigl[r^{-4} \log\frac{r}{l} \Bigr]
+ \frac{7}{64}\, \Dl R_4[r^{-4}] + \frac{15\pi^2}{128}\, \Dl\dl(r);
\notag \\
\lddl  R_4 \Bigl[r^{-6} \log^m \frac{r}{l}\Bigr]
&= -m\,R_4 \Bigl[r^{-6} \log^{m-1} \frac{r}{l}\Bigr]
\word{for} m \geq 1.
\label{eq:hex-log-scaling} 
\end{align}
The last equation follows from~\eqref{eq:huevo-de-Colon} and the
algebra property. It is worth mentioning that an expression equivalent
to~\eqref{eq:ren-hex-log} was obtained by Jones on extending
$r^{-6} \log r$ by meromorphic continuation: see Eq.~(34) on page~255
of~\cite{Jones82}. The one on the following line appears to be new.

The reader might wish to run the algebra checks:
$$
r^2 R_4\Bigl[ r^{-6} \log^m \frac{r}{l} \Bigr]
= R_4\Bigl[ r^{-4} \log^m \frac{r}{l} \Bigr].
$$

\section{On the two-point function}
\label{app:colon}

The free Green function $G_\free(x_1,x_2)$ is the same as the 
Euclidean ``propagator'' but for the sign:
$$
\frac{1}{4\pi^2(x_1 - x_2)^2} \,,
$$
Perturbatively, the corrected or dressed propagator is:
\begin{align*}
G(x_1 - x_2) &= G_\free(x_1 - x_2)
+ \iint G_\free(x_1 - x_a)\, \Sg(x_a - x_b)\, G_\free(x_b - x_2) 
\\
&\ \ + \iiiint G_\free(x_1 - x_2)\, \Sg(x_1 - x_b)\,
G_\free(x_b - x_c)\, \Sg(x_c - x_d)\, G_\free(x_d - x_2) + \cdots
\end{align*}
Here $\Sg$ is the proper (one-particle irreducible) self-energy. The
solution of the above convolution equation is given by the 
convolution inverse
$$
(G_\free - \Sg)^{*-1}(x) =: \sG^2(x);
$$
this is what we call the \textit{two-point function}. Thus in
first approximation $\sG^2(x)$ is given by
$$
\Bigl( \frac{1}{4\pi^2x^2} \Bigr)^{*-1} = -\Dl\dl(x).
$$

We gather:
\begin{itemize}

\item
At order $\g^0$ the two-point function is simply given by
$-\Dl\dl(x) \equiv \sG^2_\punto$.

\item
The sunset graph comprises the first correction, at order~$\g^2$,
already computed in~\eqref{eq:ren-hex}.

\item
The unique third-order graph for the two-point function \goggles\ 
is partially renormalized as
$$
\Rbar_8 \bigl[ x^{-2} (x^{-4})^{*2} \bigr] = \pi^2 x^{-2} \biggl(
4R_4\Bigl[ x^{-4} \log\frac{|x|}{l} \Bigr] - \pi^2\,\dl(x) \biggr).
$$
The homogeneous distribution $x^{-2}\,\dl(x)$ is renormalized by the
standard formula~\eqref{eq:pizza-crust},\break
yielding $\frac{1}{8} \,\Dl\dl(x)$; note that the algebra rule is
fulfilled. Thus from~\eqref{eq:wagnerian-opera},
\begin{align}
\goggles &= R_4\bigl[ x^{-2} (x^{-4})^{*2} \bigr]
= 4\pi^2 R_4\Bigl[ x^{-6} \log\frac{|x|}{l} \Bigr]
- \frac{\pi^4}{8} \,\Dl\dl(x),
\notag \\
& \word{with scale derivative} - 4\pi^2 \sunset \,.
\label{eq:clear-sight} 
\end{align}
Here and for subsequent graphs, the reader can consult
\eqref{eq:ren-hex-log} for totally explicit forms in terms of
Laplacians.

\item
There are four proper fourth-order graphs for the two-point function.
A very simple one, of the chain type, is \ \onions\ \ whose
renormalized form we write from~\eqref{eq:golden-braid} at once,
\begin{align}
R_4\bigl[ x^{-2} (x^{-4})^{*3} \bigr]
&= 12\pi^4 R_4\Bigl[ x^{-6} \log^2\frac{|x|}{l} \Bigr]
- 3\pi^4 R_4[x^{-6}] + \frac{2\zeta(3) - 1}{4}\,\pi^6\,\Dl\dl(x),
\notag \\
&\word{with scale derivative} -6\pi^2 \goggles \,.
\label{eq:ripe-onions} 
\end{align}

\item
Next we consider the \textit{saturn} graph. Analogously, the ground
work was already done for the stye diagram, and we may write at once:
\begin{align}
R_4 \biggl[ \saturn \biggr]
&= -\pi^4 R_4\Bigl[ x^{-6} \log\frac{|x|}{l} \Bigr] 
- \frac{5\pi^4}{4}\, R_4[x^{-6}] \,,
\notag \\
&\word{with scale derivative}
\pi^4 \sunset + \frac{5\pi^6}{16}\, \Dl\dl \,.
\label{eq:baby-eater} 
\end{align}

\item 
The \textit{roach} graph \ \roach \ has the bare form
$x^{-2}\,\catseye$, renormalized at once~by
\begin{gather}
4\pi^4 R_4\Bigl[ x^{-6} \log^2\frac{|x|}{l} \Bigr] 
+ 4\pi^4 R_4\Bigl[ x^{-6} \log\frac{|x|}{l} \Bigr]
- 2\pi^4 R_4[x^{-6}] - \frac{\pi^6}{8}\, \Dl\dl(x),
\notag \\
\word{with} \lddl \roach = - 2\pi^2 \goggles - 4\pi^4 \sunset
+ \frac{\pi^6}{4} \,\Dl\dl\,.
\label{eq:gimlet} 
\end{gather}
The simplicity of our treatment for this graph stands in stark
contrast with the ``combinatorial monstrosity of the forest
formula''~\cite{Tkachov99}, patent in
\cite[Example~3.2]{DuetschFKR13}.

\item 
Finally, there is the \textit{snail} graph \ \snail \ whose amplitude
is of the bare form
$$
f(x) = \iint \frac{du\,dv}{u^4(v - u)^2(x - u)^2v^2(x - v)^4} \,.
$$

As indicated in \cite{Schnetz97}, this can be related to previously
computed amplitudes:
\begin{align*}
& R_4 \biggl[ \snail \biggr]
+ R_4 \biggl[ \roach \biggr] - R_4 \biggl[ \onions \biggr]
\\
&\qquad = 2\pi^2 R_4 \biggl[ \goggles \biggr] 
+ \int du\, \del^u_\mu Q^\mu(u;v,x),
\end{align*}
where
$$
Q^\mu(u;v,x) 
:= \bigl( (u^\mu - x^\mu) u^{-2}(u - v)^{-2} (u - x)^{-2}
 - u^\mu u^{-4}(u - v)^{-2} \bigr) v^{-2} (v - x)^{-4} x^{-2}.
$$
This yields:
\begin{align}
& R_4 \biggl[ \snail \biggr]
\notag \\
&\quad = 8\pi^4 R_4\Bigl[ x^{-6} \log^2\frac{|x|}{l} \Bigr] 
+ 4\pi^4 R_4\Bigl[ x^{-6} \log\frac{|x|}{l} \Bigr] 
- \pi^4 R_4[x^{-6}] + \frac{4\zeta(3) - 3}8\pi^6 \Dl\dl(x)\,,
\notag \\
&\quad\word{with scale derivative}
- 4\pi^2 \goggles - 4\pi^4 \sunset - \frac{\pi^6}{4}\, \Dl\dl\,.
\label{eq:escargot} 
\end{align}

\item
At order four there is an improper contribution to the propagator, by
the double sunset:
\begin{gather*}
\sF R_4\biggl[ \barbells \biggr]
= -p^2 \biggl( \frac{\pi^2}{4} \log\frac{|p|}{\La}
- \frac{5\pi^2}{16} \biggr)^2
= - p^2\,\frac{\pi^4}{16} \biggl( \log^2\frac{|p|}{\La}
- \frac{5}{2} \log\frac{|p|}{\La} + \frac{25}{16} \biggr),
\\[\jot]
\word{implying} \barbells 
= \frac{\pi^2}{2}\, R_4\Bigl[ x^{-6} \log\frac{|x|}{l} \Bigr]
- \frac{9\pi^4}{256}\, \Dl\dl(x).
\end{gather*}
This is just $R_4[r^{-6}] * r^{-2} * R_4[r^{-6}]$, yielding:
\begin{align}
R_4\biggl[ \barbells \biggr] 
= -2\pi^4\, R_4\biggl[ x^{-6} \log\frac{|x|}{l} \biggr]
+ \frac{9\pi^6}{64}\, \Dl\dl(x);
\notag \\
\word{with scale derivative:} 2\pi^4 \sunset\,.
\label{eq:le-bicicle} 
\end{align}

\end{itemize}

For the set:
\begin{align*}
& \biggl\{ \justapoint\,, \sunset, \ \goggles\,, \ \onions\,,
\\
&\qquad \saturn, \ \roach, \ \snail, \ \barbells \biggr\},
\end{align*}
the weights are respectively given by $\{1,\frac{1}{6},\frac{1}{4},
\frac{1}{8},\frac{1}{12},\frac{1}{4},\frac{1}{4},\,\frac{1}{36}\}$.
Therefore, up to four loops, with $r = |x_1 - x_2|$, the two-point
function $\sG^2(x_1,x_2)$ is of the form:
\begin{align}
& - \Dl\dl(r) - \frac{(16\pi^2\g)^2}{(4\pi^2)^3}\frac{1}{6}\,\sunset
+ \frac{(16\pi^2\g)^3}{(4\pi^2)^5} \frac{1}{4}\, \goggles 
- \frac{(16\pi^2\g)^4}{(4\pi^2)^7} \Biggl( \frac{1}{8} \onions
\notag \\
&\hspace*{6em} + \frac{1}{12} \saturn + \frac{1}{4} \roach 
+ \frac{1}{4} \snail +  \frac{1}{36} \barbells \Biggr) 
\notag \\
&= - \Dl\dl(r) - \g^2 \frac{2}{3\pi^2}\, R_4[r^{-6}]
+ \g^3 \biggl( \frac{4}{\pi^2}\,R_4\Bigl[ r^{-6}\log\frac{r}{l} \Bigr]
- \frac{1}{8}\, \Dl\dl(r) \biggr) 
\notag \\
&\quad - \g^4 \Biggl[ \biggl( \frac{6}{\pi^2}\,
R_4\Bigl[ r^{-6} \log^2\frac{r}{l} \Bigr] 
- \frac{3}{2\pi^2}\, R_4[r^{-6}] + \frac{2\zeta(3) - 1}{8}\,\Dl\dl(r)
\biggr)
\notag \\
&\hspace*{4em} - \biggl( 
\frac{1}{3\pi^2}\, R_4\Bigl[ r^{-6} \log\frac{r}{l}\Bigr] 
+ \frac{5}{12\pi^2}\, R_4[r^{-6}] \biggr)
\notag \\
&\hspace*{4em} + \biggl( 
\frac{4}{\pi^2}\, R_4\Bigl[ r^{-6} \log^2\frac{r}{l} \Bigr]
+ \frac{4}{\pi^2}\, R_4\Bigl[ r^{-6} \log\frac{r}{l} \Bigr]
- \frac{2}{\pi^2}\, R_4[r^{-6}] - \frac{1}{8}\, \Dl(r) \biggr)
\notag \\
&\hspace*{4em} + \biggl( 
\frac{8}{\pi^2}\, R_4\Bigl[ r^{-6} \log^2\frac{r}{l} \Bigr]
+ \frac{4}{\pi^2}\, R_4\Bigl[ r^{-6} \log\frac{r}{l} \Bigr] 
- \frac{1}{\pi^2}\, R_4[r^{-6}] 
+ \frac{4\zeta(3) - 3}{8}\, \Dl\dl(r) \biggr)
\notag \\
&\hspace*{4em} + \biggl( 
- \frac{2}{9\pi^2}\, R_4\Bigl[ r^{-6} \log\frac{r}{l} \Bigr]
+ \frac{1}{64}\, \Dl(x) \biggr) \Biggl] + \, O(\bar g^5).
\label{eq:dos-puntos} 
\end{align}
The right hand side of~\eqref{eq:dos-puntos} may be abbreviated as
$$
\sG^2_\punto + \sG^2_\ocaso + \sG^2_\gafas + \sG^2_\cebollas
+ \sG^2_\saturno + \sG^2_\cuca + \sG^2_\caracol + \sG^2_\pesas
+ O(\bar g^5).
$$
We see that the practical rule to go from the calculated scale
derivatives of the graphs to their contributions to the two-point
function is as before: multiply the coefficient of the scale
derivative by $-\g/\pi^2$ raised to a power equal to the difference in
the number of vertices, and also by the relative weight, for each
diagram in question. An exception is the case of the sunset graph,
which, on account of~\eqref{eq:dos-puntos}
and~\eqref{eq:ren-hex-scale}, fulfils:
\begin{equation}
\lddl \,\sG^2_\ocaso  = -\frac{\g^2}{6} \,\sG^2_\punto \,.
\label{eq:evening-star} 
\end{equation}

The tableau of scale derivatives for the two-point function, on
account of \eqref{eq:clear-sight}--%
\eqref{eq:le-bicicle} and \eqref{eq:evening-star} is as follows.
\begin{align}
\lddl\, \sG^2_\gafas &= 6\g \,\sG^2_\ocaso .
\notag \\
\lddl\, \sG^2_\cebollas &= 3\g \,\sG^2_\gafas .
\notag \\
\lddl\, \sG^2_\saturno
&= \frac{1}{2}\,\g^2 \,\sG^2_\ocaso 
+ \frac{5}{48}\,\g^4 \,\sG^2_\punto .
\notag \\
\lddl\, \sG^2_\cuca 
&= 2\g \,\sG^2_\gafas - 6\g^2 \,\sG^2_\ocaso 
+ \frac{1}{4}\,\g^4 \,\sG^2_\punto .
\notag \\
\lddl\, \sG^2_\caracol 
&= 4\g \,\sG^2_\gafas - 6\g^2 \,\sG^2_\ocaso 
- \frac{1}{4}\,\g^4 \,\sG^2_\punto .
\notag \\
\lddl\, \sG^2_\pesas &= \frac{1}{3}\,\g^2\sG_\ocaso .
\label{eq:punto-final} 
\end{align}

\section{Radial extensions in $p$-space and momentum amplitudes} 
\label{app:pspace-extens}

\subsection{Fourier transforms} 
\label{sap:Fourier-trans}

Let now $\sF$ denote the Fourier transformation on $\sS'(\bR^d)$, with a
standard convention
$$
\sF\vf(p) = \int e^{-ip\cdot x} \vf(x) \,dx,  \word{so that, e.g.,}
\sF[e^{-r^2/2}] = (2\pi)^{d/2} e^{-p^2/2}.
$$
A standard calculation \cite{GelfandS64} gives
$$
\sF[r^\la] = 2^{\la+d} \pi^{d/2}\, 
\frac{\Ga(\half(\la + d))}{\Ga(-\half\la)}\, |p|^{-d-\la},
$$
valid for $-d < \Re\la < 0$, where both sides are locally integrable
functions. In particular, $\sF[r^{-2}] = 4\pi^2\,p^{-2}$ when $d = 4$.

The Fourier transforms of the radial distributions 
$R_d\bigl[ r^{-d} \log^m(r/l) \bigr]$ are found as follows. Since
$\sF(\dl) = 1$ and $\sF(\Dl h) = -p^2\,\sF h$, it is enough to compute
$\sF\bigl[ r^{-d+2} \log^k(r/l) \bigr]$ for $k \leq m$, on account of 
Lemma~\ref{lm:many-log-ren}. For simplicity, we do it here only for
$d = 4$.

\begin{lema} 
\label{lm:many-log-p}
Write $\La := 2/le^\ga$, where $\ga$ is Euler's constant. Then, for
$d = 4$, we get
\begin{align}
\sF\Bigl[ r^{-2} \log^k \frac{r}{l} \Bigr]
&= 4\pi^2 \sum_{j=0}^k (-)^k C_{jk}\, p^{-2} \log^j \frac{|p|}{\La}\,;
\notag \\
\sF\Bigl[ \Dl\Bigl( r^{-2} \log^k \frac{r}{l} \Bigr) \Bigr]
&= 4\pi^2 \sum_{j=0}^k (-)^{k+1} C_{jk}\, \log^j \frac{|p|}{\La}\,,
\label{eq:many-log-p} 
\end{align}
for suitable nonnegative constants $C_{jk}$.
\end{lema}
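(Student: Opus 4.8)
The plan is to derive both identities from the single known Fourier transform of a radial power, specialized to $d=4$,
$$
\sF[r^\la] = 2^{\la+4}\pi^2\,\frac{\Ga(\half(\la+4))}{\Ga(-\half\la)}\,|p|^{-4-\la},
$$
by analytic differentiation in the exponent~$\la$. First I would record the elementary identity $\del_\la^k\bigl[(r/l)^\la\bigr]=(r/l)^\la\log^k(r/l)$, which at $\la=-2$ reads $r^{-2}\log^k(r/l)=l^{-2}\,\del_\la^k\bigl[(r/l)^\la\bigr]\big|_{\la=-2}$. For $d=4$ the distribution $r^{-2}\log^k(r/l)$ is honestly locally integrable and tempered (no extension is needed at this value), and the $\sS'(\bR^4)$-valued map $\la\mapsto r^\la$ is holomorphic on the strip $-4<\Re\la<0$, this being the Riesz/meromorphic-continuation fact invoked earlier. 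Since $\sF$ is a continuous linear operator on $\sS'$, it commutes with $\del_\la$, giving
$$
\sF\Bigl[r^{-2}\log^k\frac{r}{l}\Bigr]=l^{-2}\,\del_\la^k\Bigl\{l^{-\la}\,\sF[r^\la]\Bigr\}\Big|_{\la=-2}.
$$

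Next I would substitute the explicit formula, set $\mu:=\la+2$, and collect the $\mu$-dependence via $|p|^{-4-\la}=|p|^{-2}e^{-\mu\log|p|}$, $2^{\la+4}=4e^{\mu\log2}$, $l^{-\la}=l^2e^{-\mu\log l}$, and $\Ga(\half(\la+4))/\Ga(-\half\la)=\Ga(1+\mu/2)/\Ga(1-\mu/2)$. The decisive simplification uses $\log\Ga(1+z)=-\ga z+\sum_{n\ge2}(-1)^n\zeta(n)z^n/n$: only odd powers survive in the difference, so
$$
\frac{\Ga(1+\mu/2)}{\Ga(1-\mu/2)}=e^{-\ga\mu}\exp\Bigl(-2\!\!\sum_{n\ge3,\ n\,\mathrm{odd}}\!\!\frac{\zeta(n)}{n}\Bigl(\frac{\mu}{2}\Bigr)^{\!n}\Bigr)=:e^{-\ga\mu}G(\mu).
$$
The definition $\La=2/(le^\ga)$ is precisely what merges $e^{\mu(\log2-\log l)}$, $e^{-\mu\log|p|}$ and $e^{-\ga\mu}$ into $e^{-\mu\log(|p|/\La)}$, leaving $l^{-\la}\sF[r^\la]=4\pi^2 l^2|p|^{-2}e^{-\mu\log(|p|/\La)}G(\mu)$. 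Expanding the $k$-th $\mu$-derivative at $\mu=0$ by Leibniz then yields the asserted shape with
$$
C_{jk}=(-1)^{k-j}\binom{k}{j}\,G^{(k-j)}(0).
$$

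The only non-routine point, and the step I expect to be the main obstacle, is the \emph{nonnegativity} of the constants $C_{jk}$. I would argue structurally: $h(\mu):=\log G(\mu)$ is an odd power series containing only odd powers $\ge3$, all with strictly negative coefficients, so $-h$ has nonnegative coefficients and zero constant term. Consequently $G(-\mu)=\exp(-h(\mu))$, being the exponential of such a series, has nonnegative Taylor coefficients; writing $G(\mu)=\sum_i g_i\mu^i$ this says $(-1)^i g_i\ge0$ for all $i$. Taking $i=k-j$ gives $(-1)^{k-j}G^{(k-j)}(0)=(k-j)!\,(-1)^{k-j}g_{k-j}\ge0$, whence $C_{jk}\ge0$. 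Finally, the second formula is immediate from the first: applying $\sF(\Dl\,\cdot\,)=-p^2\,\sF(\cdot)$ multiplies the right-hand side by $-p^2$, which cancels the $|p|^{-2}$ and turns $(-)^k$ into $(-)^{k+1}$, with the same coefficients $C_{jk}$.
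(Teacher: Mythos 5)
Your proof is correct and follows essentially the same route as the paper: both reduce the statement to the Fourier transform of the analytic family $r^\la$ (equivalently $r^{-2}(r/l)^{2t}$), invoke the Taylor expansion of $\log\Ga(1+z)$ so that only odd zeta values survive in the quotient $\Ga(1+t)/\Ga(1-t)$ while the choice $\La = 2/(le^\ga)$ absorbs the $-\ga$ term into $\log(|p|/\La)$, and then generate the logarithms by differentiating at the base point. Where the paper stops at the explicit cases $k \le 5$ and a reference to the Fa\`a di Bruno formula, you supply the uniform-in-$k$ Leibniz expansion $C_{jk} = (-1)^{k-j}\binom{k}{j}\,G^{(k-j)}(0)$ together with the structural sign argument (exponential of a power series with nonnegative coefficients and no constant term), which actually \emph{proves} the nonnegativity of the $C_{jk}$ that the paper asserts but verifies only on the displayed cases.
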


\begin{proof}
If $|t| < 1$, we obtain
$$
\sF[r^{-2} (r/l)^{2t}] = \frac{4\pi^2}{p^2} 
\,\frac{\Ga(1 + t)}{\Ga(1 - t)} \biggl( \frac{l|p|}{2} \biggr)^{-2t}.
$$
Then the known Taylor series expansion \cite[Thm.~10.6.1]{BorosM04},
valid for $|t| < 1$:
\begin{equation}
\log \Ga(1 + t) 
= -\ga t + \sum_{k=2}^\infty (-)^k \frac{\zeta(k)}{k}\, t^k
\label{eq:log-gamma} 
\end{equation}
suggests rewriting the previous equality as
\begin{align}
\sF[r^{-2} (r/l)^{2t}]
&= \frac{4\pi^2}{p^2} \biggl( \frac{|p|}{\La} \biggr)^{-2t}
\,e^{2\ga t} \,\frac{\Ga(1 + t)}{\Ga(1 - t)} 
\notag \\
&= \frac{4\pi^2}{p^2} \exp\biggl\{ -2t \log \frac{|p|}{\La}
- 2 \sum_{m=1}^\infty \frac{\zeta(2m+1)}{2m + 1}\, t^{2m+1} \biggr\}.
\label{eq:hard-to-resist} 
\end{align}

Differentiation at $t = 0$ then yields $\sF[r^{-2}] = 4\pi^2\,p^{-2}$,
already noted, as well as
\begin{align}
\sF\Bigl[ r^{-2} \log \frac{r}{l} \Bigr]
&= - 4\pi^2 p^{-2} \log \frac{|p|}{\La} \,,
\nonumber \\
\sF\Bigl[ r^{-2} \log^2 \frac{r}{l} \Bigr]
&= 4\pi^2 p^{-2} \log^2 \frac{|p|}{\La} \,,
\nonumber \\
\sF\Bigl[ r^{-2} \log^3 \frac{r}{l} \Bigr]
&= - 4\pi^2 p^{-2} \biggl( \log^3 \frac{|p|}{\La}
+ \frac{1}{2}\, \zeta(3) \biggr),
\label{eq:log-roll} 
\\
\sF\Bigl[ r^{-2} \log^4 \frac{r}{l} \Bigr]
&= 4\pi^2 p^{-2} \biggl( \log^4 \frac{|p|}{\La} 
+ 2\zeta(3) \log \frac{|p|}{\La} \biggr),
\nonumber \\
\sF\Bigl[ r^{-2} \log^5 \frac{r}{l} \Bigr]
&= - 4\pi^2 p^{-2} \biggl( \log^5 \frac{|p|}{\La}
+ 5\,\zeta(3) \log^2 \frac{|p|}{\La} + \frac{3}{2}\, \zeta(5) \biggr),
\nonumber
\end{align}
and so on, using the Fa\`a di Bruno formula. The corresponding formula
to~\eqref{eq:log-roll} in \cite{FreedmanJL92} is in error.
In~\cite{Schnetz97} the correct value does appear.
\end{proof}

Conversely, the first few renormalized log-homogeneous distributions
of Section~\ref{ssc:log-homog} have the following Fourier transforms
in $\sS'(\bR^4)$:
\begin{align*}
\sF R_4 \bigl[ r^{-4} \bigr]
&= - 2\pi^2 \log \frac{|p|}{\La} + \pi^2,
\\
\sF R_4 \Bigl[ r^{-4} \log \frac{r}{l} \Bigr]
&= \pi^2 \log^2 \frac{|p|}{\La} - \pi^2 \log \frac{|p|}{\La} 
+ \frac{\pi^2}{2} \,,
\\
\sF R_4 \Bigl[ r^{-4} \log^2 \frac{r}{l} \Bigr]
&= - \frac{2\pi^2}{3} \log^3 \frac{|p|}{\La}
+ \pi^2 \log^2 \frac{|p|}{\La} - \pi^2 \log \frac{|p|}{\La}
+ \pi^2 \biggl( \frac{1}{2} - \frac{\zeta(3)}{3} \biggr),
\\
\sF R_4 \Bigl[ r^{-4} \log^3 \frac{r}{l} \Bigr]
&= \frac{\pi^2}{2} \log^4 \frac{|p|}{\La}
- \pi^2 \log^3\frac{|p|}{\La} + \frac{3\pi^2}{2} \log^2\frac{|p|}{\La}
\\
&\qquad - \pi^2 \biggl( \frac{3}{2} - \zeta(3) \biggr)
\log \frac{|p|}{\La}
+ \pi^2 \biggl( \frac{3}{4} - \frac{\zeta(3)}{2} \biggr).
\end{align*}

The quadratically divergent graphs for the two-point function required
here possess the transforms:
\begin{align}
\sF R_4[r^{-6}]
&= \frac{\pi^2}{4}\, p^2 \log\frac{|p|}{\La} - \frac{5\pi^2}{16}\,p^2,
\notag \\
\sF R_4 \Bigl[ r^{-6} \log \frac{r}{l} \Bigr]
&= - \frac{\pi^2}{8}\, p^2 \log^2 \frac{|p|}{\La}
+ \frac{5\pi^2}{16}\, p^2 \log \frac{|p|}{\La}
- \frac{17\pi^2}{64}\, p^2;
\label{eq:rminus6-logs} 
\\
\sF R_4 \Bigl[ r^{-6} \log^2 \frac{r}{l} \Bigr]
&= \frac{\pi^2}{12}\, p^2 \log^3 \frac{|p|}{\La}
- \frac{5\pi^2}{16}\, p^2 \log^2 \frac{|p|}{\La}
+ \frac{17\pi^2}{32}\, p^2 \log \frac{|p|}{\La}
- \pi^2 \Bigl( \frac{49}{128} - \frac{\zeta(3)}{24} \Bigr) p^2,
\notag
\end{align}
in view of \eqref{eq:ren-hex} and~\eqref{eq:ren-hex-log}. The first
identity here gives the sunset graph in momentum space. The second one
gives essentially the ``goggles'' graph \eqref{eq:clear-sight} of
Appendix~\ref{app:colon}.

It follows by induction from~\eqref{eq:extra-laps} that
$\sF R_4[r^{-4-2m}]$ is a linear combination of $p^{2m}$ and\break
$p^{2m} \log\bigl( |p|/\La \bigr)$. It can be written as
$$
\sF R_4[r^{-4-2m}] = \frac{1}{4^m m!(m+1)!}
\Bigl( a_m\,p^{2m} \log\frac{|p|}{\La} + b_m\,p^{2m} \Bigr).
$$
On Fourier-transforming the case $d = 4$, $n = 1$ 
of~\eqref{eq:extra-laps}, one finds the recurrence relation
$$
- p^2 \Bigl( a_m\,p^{2m} \log\frac{|p|}{\La} + b_m\,p^{2m} \Bigr)
= a_{m+1}\,p^{2m+2} \log\frac{|p|}{\La} + \biggl( b_{m+1}
+ (-)^m \frac{(2m + 3)\pi^2}{(m + 1)(m + 2)} \biggr) p^{2m+2}.
$$
Thus $a_m = (-)^m a_0 = (-)^{m-1} 2\pi^2$ and 
$b_{m+1} + b_m = (-)^m \pi^2\bigl( 1/(m+1) + 1/(m+2) \bigr)$. Since
$b_0 = \pi^2$, that yields $b_m = (-)^m \pi^2 (H_m + H_{m+1})$.
We recover a result already found in~\cite{Carme}:
$$
\sF R_4[r^{-4-2m}] = \frac{(-)^{m-1}\pi^2}{4^m m!(m+1)!} \biggl(
2\,p^{2m} \log\frac{|p|}{\La} - (H_m + H_{m+1}) p^{2m} \biggr).
$$
Analogous formulas for $\sF R_d[r^{-d-2m}]$ can be derived
from~\eqref{eq:extra-laps} in the same way.

\subsection{On the amplitudes in $p$-space}
\label{sap:pspace}

It is now straightforward to perform the conversion to momentum space
graph by graph; but the details are hardly worthwhile for us, since
our renormalization scheme and consequent treatments of the RG and the
$\bt$-function for the model do not require that conversion.

\subsubsection{Two-point amplitudes in $p$-space}
\label{ssp:pspace-twopt}

The free Green function has Fourier transform
$$
(2\pi)^4 \,\frac{\dl(p_1 + p_2)}{p^2} 
=: (2\pi)^4\,\dl(p_1 + p_2) G_\free(|p|),
$$
where $|p| := |p_1| = |p_2|$ corresponds to the difference variable on
$x$-space. There is a series of corrections to the free propagator:
$$
G(|p|) = \frac{1}{p^2} + \frac{1}{p^2}\,\Sg(|p|)\,\frac{1}{p^2}
+ \frac{1}{p^2}\,\Sg(|p|)\,\frac{1}{p^2}\,\Sg(|p|)\,\frac{1}{p^2}
+ \cdots
$$
Just as on $x$-space, the above equation is solved by
$$
G(|p|) = \bigl( p^2 - \Sg(|p|) \bigr)^{-1}.
$$
What we call the momentum-space two-point function $\sG^2(|p|)$ is
$G(|p|)^{-1} = p^2 - \Sg(|p|)$; this is the Fourier transform
of~$\sG^2(r)$, by the definition of the latter.

\medskip

One can now obtain from the list~\eqref{eq:rminus6-logs} of Fourier
transforms of quadratically divergent graphs, together with
Eqs.~\eqref{eq:ren-hex} and \eqref{eq:clear-sight}--\eqref{eq:gimlet},
the corrections to the propagator associated to these graphs in
momentum space. We omit this trivial conversion; taking account of
different conventions, this coincides in a few cases with results
in~\cite{Schnetz97}. It is remarkable that the Ap\'ery constant
$\zeta(3)$ disappears in the computation of the chain
graph~$\cebollas\,$.

\subsubsection{Four-point amplitudes in $p$-space}
\label{ssp:pspace-fourpt}

The graphs in momentum space relevant for the four-point function are
more complicated. We just exemplify for the fish graph. Still omitting
the permutations of the vertices, the amplitude in $x$-space is of the
form
\begin{align*}
\sG_\pez(x_1,x_2,x_3,x_4) &= \frac{g^2}{(4\pi^2)^2} \biggl[ 
\dl(x_1 - x_2)\,\dl(x_2 - x_3)\,\dl(x_3 - x_4)
\\
&\hspace*{3em} - \frac{1}{2\pi^2} \,\dl(x_1 - x_2)
\,\Dl \Bigl( (x_2 - x_3)^{-2} \log \frac{|x_2 - x_3|}{l} \Bigr)
\,\dl(x_3 - x_4) \biggr]
\\
&= \frac{g^2}{(4\pi^2)^2} \biggl[ \dl(\xi_1)\,\dl(\xi_2)\,\dl(\xi_3)
- \frac{1}{2\pi^2} \,\dl(\xi_1)
\, \Dl \Bigl( \xi_2^{-2} \log \frac{|\xi_2|}{l} \Bigr)
\,\dl(\xi_3) \biggr]
\\
&=: \wh\sG_\pez(\xi_1,\xi_2,\xi_3);
\end{align*}
where we have introduced the difference variables
$\xi_1 = x_1 - x_2$, $\xi_2 = x_2 - x_3$, $\xi_3 = x_3 - x_4$. The
\textit{reduced} Fourier transform $\wh\sG_\pez(p_1,p_2,p_3)$, defined
by
\begin{align*}
\wh\sG_\pez(p_1,p_2,p_3)
&:= \int \wh\sG_\pez(\xi_1,\xi_2,\xi_3)
\\
&\qquad \x 
\exp\bigl[ -i(p_1\xi_1 + (p_1+p_2)\xi_2 + (p_1+p_2+p_3)\xi_3) \bigr]
\,d\xi_1 \,d\xi_2 \,d\xi_3
\end{align*}
yields
$\sG_\pez(p_1,p_2,p_3,p_4) = (2\pi)^4\,\dl(p_1 + p_2 + p_3 + p_4)
\,\wh\sG_\pez(p_1,p_2,p_3)$, where $\sG_\pez$ is the \textit{ordinary}
Fourier transform, defined by:
$$ 
\sG_\pez(p_1,p_2,p_3,p_4) := \int \sG_\pez(x_1,x_2,x_3,x_4) 
\exp[-i(p_1x_1 +\cdots+ p_4x_4)] \,dx_1\,dx_2\,dx_3\,dx_4.
$$

This is general for functions of the difference variables. In our
present case, we obtain
\begin{align*}
\sG_\pez(p_1,p_2,p_3,p_4) &= g^2 \,\dl(p_1 +\cdots+ p_4)
\biggl[ 1 - 2 \log\frac{|p_1 + p_2|}{\La} \biggr]
\\
&= g^2 \,\dl(p_1 +\cdots+ p_4) 
\biggl[ 1 - 2 \log\frac{|p_3 + p_4|}{\La} \biggr].
\end{align*}

\subsection*{Acknowledgments}
At the beginning of this work we enjoyed the warm hospitality of the
ZiF and the Universit\"at Bielefeld. We are most grateful to Philippe
Blanchard, Michael D\"utsch, Ricardo Estrada and Raymond Stora for
many discussions on Epstein--Glaser renormalization and distribution
theory. We thank Stefan Hollands and Ivan Todorov for helpful
comments. JMG-B is thankful to Kurusch Ebrahimi-Fard and Fr\'ed\'eric
Patras for much sharing, over the years, of reflections on
renormalization theory. His work was supported by the Spanish
Ministerio de Educaci\'on y Ciencia through grant FPA2012--35453 and
by the Universidad de Costa Rica through its von~Humboldt chair. HGG
and JCV acknowledge support from the Vicerrector\'ia de
Investigaci\'on of the Universidad de Costa Rica.

\end{document}